\renewcommand{\section}{\@startsection%
{section}%
{1}%
{0em}%
{1.7em}%
{1.2em}%
{\normalfont\large\centering\bfseries}}
\renewcommand{\@seccntformat}[1]%
{\csname the#1\endcsname.\hspace{0.5em}}
\numberwithin{equation}{section}
\definecolor{mymagenta}{RGB}{188,21,108}
\definecolor{myblue}{RGB}{38,52,150}
\newtheorem{theorem}{Theorem}[section]
\newtheorem{proposition}[theorem]{Proposition}
\newtheorem{lemma}[theorem]{Lemma}
\newtheorem{corollary}[theorem]{Corollary}
\theoremstyle{definition}
\newtheorem{definition}[theorem]{Definition}
\newtheorem{remark}[theorem]{Remark}
\newcommand{\ie}{\emph{i.\,e.}}
\newcommand{\cf}{\emph{cf.}}
\newcommand{\abs}[1]{\left|#1\right|}
\newcommand{\norm}[1]{\left\|#1\right\|}
\newcommand{\inner}[2]{\left\langle#1,#2\right\rangle}
\newcommand{\cc}[1]{\overline{#1}}
\newcommand{\reals}{\mathbb{R}}
\newcommand{\nats}{\mathbb{N}}
\newcommand{\complex}{\mathbb{C}}
\newcommand{\eval}[1]{\upharpoonright_{#1}}
\newcommand{\cH}{\mathcal{H}}
\newcommand{\cB}{\mathcal{B}}
\newcommand{\R}{{\mathbb R}}
\newcommand{\C}{{\mathbb C}}
\newcommand{\convergesto}[2]{\xrightarrow[#1\to#2]{}}
\newcommand{\ournewclass}{\mathscr{S}(\mathcal{H})}
\renewcommand\tilde{\widetilde}
\newcommand{\borel}{\mathfrak{X}}
\DeclareMathOperator{\dom}{dom}
\DeclareMathOperator{\ran}{ran}
\DeclareMathOperator{\Span}{span}
\DeclareMathOperator{\supp}{supp}
\DeclareMathOperator{\im}{Im}
\DeclareMathOperator{\spec}{spec}
\DeclareMathOperator{\Sp}{spec}
\newcommand{\defeq}{\mathrel{\mathop:}=}
\begin{document}
\begin{titlepage}
\title{Point mass perturbations of spectral measures
\footnotetext{%
Mathematics Subject Classification(2010):
47B32,  
34L05,  
47B36 
}
\footnotetext{%
Keywords:
spectral measure;
de Branges spaces;
perturbations of measures
}
\hspace{-8mm}
}
\author{
  \textbf{Rafael del Rio\thanks{%
Supported by UNAM-DGAPA-PAPIIT IN110818}}
\\
\small Departamento de F\'{i}sica Matem\'{a}tica\\[-1.6mm]
\small Instituto de Investigaciones en Matem\'aticas Aplicadas y en Sistemas\\[-1.6mm]
\small Universidad Nacional Aut\'onoma de M\'exico\\[-1.6mm]
\small C.P. 04510, Ciudad de M\'exico\\[-1.6mm]
\small \texttt{delrio@iimas.unam.mx}
\\[2mm]
\textbf{Luis O. Silva\thanks{%
Supported by  UNAM-DGAPA-PAPIIT IN110818 and SEP-CONACYT CB-2015 254062
}}
\\
\small Departamento de F\'{i}sica Matem\'{a}tica\\[-1.6mm]
\small Instituto de Investigaciones en Matem\'aticas Aplicadas y en Sistemas\\[-1.6mm]
\small Universidad Nacional Aut\'onoma de M\'exico\\[-1.6mm]
\small C.P. 04510, Ciudad de M\'exico\\[-1.6mm]
\small \texttt{silva@iimas.unam.mx}
\\[2mm]
\textbf{Julio H. Toloza}\thanks{Partially supported by CONICET (Argentina) 
		under grant PIP 11220150100327CO}
\\
\small INMABB\\[-1.6mm]
\small Departamento de Matem\'atica\\[-1.6mm]
\small Universidad Nacional del Sur (UNS) - CONICET\\[-1.6mm]
\small Bah\'ia Blanca, Argentina\\[-1.6mm]
\small \texttt{julio.toloza@uns.edu.ar}}
\date{}
\maketitle
\vspace{-4mm}
\begin{center}
\begin{minipage}{5in}
  \centerline{{\bf Abstract}} \bigskip Using a generalization of the
  moment problem and the extremal properties of spectral measures
  corresponding to the selfadjoint extensions of a regular symmetric
  operator, we study point mass perturbations of spectral measures. We
  obtain general results for a wide class of operators and apply them
  to the analysis of point mass perturbations of spectral measures
  pertaining to Bessel and generalized Schr\"odinger operators.
\end{minipage}
\end{center}
\thispagestyle{empty}
\end{titlepage}
\mathtoolsset{showonlyrefs}
\section{Introduction}
\label{sec:intro}
The motivation of this work stems from the intriguing fact that very
small perturbations of a measure may destroy the density of
polynomials in the space of square integrable functions with respect
to that measure. More precisely: one can find a Borel measure $\mu$ such that
the polynomials are dense in $L_2(\reals,\mu)$, but they are no longer
dense in $L_2(\reals,\widetilde{\mu})$ when $\widetilde{\mu}$ is
obtained by adding a point mass to $\mu$, that is
\begin{equation}
  \label{eq:mu-mu-tilde-intro}
  \widetilde{\mu}=\mu+\delta_\lambda\,,\qquad \lambda\not\in\supp\mu\,,
\end{equation}
where $\delta_\lambda$ denotes a discrete measure which gives weight
only to the point $\lambda$. This instability phenomenon pertains to
the classical Hamburger moment problem and a particular realization
corresponds to the following reformulation. If $\mu$ is an
$N$-extremal solution to an indeterminate moment problem, then
$\widetilde\mu$ is a non-$N$-extremal solution to a moment problem
(see \cite[Thm.\,2.3.2]{MR0184042} and
\cite[Prop.~4.1]{arXiv:1610.01699}). Remarkably, the theory of the
Hamburger moments is the theory of Jacobi operators, \ie\ operators in
$l_2(\nats)$ generated by semi-infinite tridiagonal symmetric matrices
(see Example~\ref{sec:class-moment-probl}). In this context, the
mentioned phenomenon is paraphrased as follows: if $\mu$ is the
spectral measure of a canonical selfadjoint extension of a
nonselfadjoint Jacobi operator, \ie\ a selfadjoint restriction of its
adjoint (see Section~\ref{sec:gener-resolv}), then there is no
canonical selfadjoint extension of any Jacobi operator having
$\widetilde\mu$ as its spectral measure.

This paper is devoted to studying the situation illustrated above in a
setting that allows us to consider not only nonselfadjoint Jacobi
operators, but a wide class of regular symmetric operators with
deficiency indices $(1,1)$, including regular and certain singular
Schr\"odinger operators. Our approach is based on Krein's theory of
symmetric operators \cite{MR0011170,MR0011533,MR0012177}, the theory
of generalized extensions due to Na\u{\i}mark \cite{MR0002714} and a
generalization of the moment problem originally proposed by Livsi\v{c}
\cite{MR0011171}. This theoretical scaffold allows us to prove several
results on point mass perturbations of spectral measures presented in
Section~\ref{sec:mass-point-pert}. The central theorem
(Theorem~\ref{cor:no-self-adjoint-measure}) is general enough to be
applied to the analysis of point mass perturbation of spectral
measures related to generalized Schr\"odinger operators with measures and Bessel
operators. This analysis led to
Theorems~\ref{prop:example-schroedinger-with-measures},
\ref{thm:existence-potential}, and \ref{prop:example-bessel}.

The spectrum of a regular Schr\"odinger operator defined by any
selfadjoint boundary conditions has the following asymptotic behavior
\begin{equation}
  \label{eq:asymptotics-intro}
\lambda_n = c n^2 + O(1),\quad\text{ as } n\to\infty\,,
\end{equation}
(see \cite{MR894477}).  From this formula, by a straightforward
argument (see Remark~\ref{rem:we-note}), one concludes that if $\mu$
is the spectral measure of a regular Schr\"odinger operator given by
selfadjoint boundary conditions, then $\widetilde\mu$, defined as in
\eqref{eq:mu-mu-tilde-intro}, cannot be the spectral measure of any
regular Schr\"odinger operator with any selfadjoint boundary
conditions no matter which potential function is considered. It is
worth remarking that the results of Section~\ref{sec:mass-point-pert}
provide a way for dealing with point mass perturbation of spectral measures for a wide range of
operators even in the case when asymptotics of the kind of
\eqref{eq:asymptotics-intro} are not available. This is illustrated in
Examples~\ref{sec:generalized-operators} and
\ref{sec:bessel-operators}.

Let us elaborate on the theoretical framework developed in this paper
which is of interest in itself irrespective of the results given in
Sections~\ref{sec:gener-moment-probl} and \ref{sec:examples}. The
classical theory of moments tells us that to account for all solutions
to the moment problem, one has to recur to Na\u{\i}mark's theory of
generalized extensions \cite{MR0002713,MR0002714} since the solutions
are the generalized spectral measures (see
Definition~\ref{def:generalized-resolution}) of the corresponding
Jacobi operator. Here we remark that, despite the classicality of the
theory, relatively recent developments \cite{MR2919742,MR3050310} make
use of \cite{MR0002714} to construct solutions to the
classical moment problem with the measure having certain predefined
properties. Our approach makes use not only of Na\u{\i}mark's theory,
but also Krein's theory of symmetric operators with deficiency indices
(1,1) \cite{MR0011170,MR0011533,MR0012177} to study the set of
measures related to an arbitrary regular symmetric operator not
necessarily entire \cite{MR1466698}.  This allows us to characterize
the spectral measures associated with the operator
(Theorem~\ref{thm:formula-rational-p-q}) and describe the extremal
properties of measures pertaining to this operator's selfadjoint
extensions (Theorem~\ref{thm:extremality-of-measures}). The next step
in our method is the use of the functional model for regular symmetric
operators \cite{MR2607264,MR3002855} to map any regular symmetric
operator into the multiplication operator in a de Branges spaces. This
reformulation leads us directly to the generalized moment problem.

There are various ways of generalizing the moment problem. Here, we
have in mind a proposal of generalization that goes back to Livsi\v{c}
\cite{MR0011171} for whom the moment problem consists in finding a
measure $\rho$ such that the inner product in a Hilbert space of
functions is expressed through the inner product in
$L_2(\reals,\rho)$. A related setting of the generalized moment
problem is found in the introduction of \cite{MR1660000}. As pointed
out in \cite[Chap.\,6]{remling-book}, de Branges treated this problem
using a different terminology when the Hilbert space of functions is a
de Branges space \cite{MR0229011}. In this work, we also consider the
generalization of \cite{MR0011171} with the underlying Hilbert space
being a de Branges space, but in contrast to \cite{MR0229011} and
\cite[Chap.\,6]{remling-book}, we focus on the extremal properties of
the solutions to the generalized moment problem.

Let us outline the material of this paper. Section~\ref{sec:prelim}
presents classical results that are included for the sake of
completeness and to introduce the
notation. Section~\ref{sec:extremality-measures} study the extremal
properties of measures associated with regular symmetric operators. A
short survey on de Branges spaces theory and the generalized moment
problem are the subject of Section~\ref{sec:gener-moment-probl}. In
this section, we also present the functional model for regular
symmetric operators by means of which we connect the results on de
Branges spaces to our class of operators. In
Section~\ref{sec:mass-point-pert}, we prove the main results related
to perturbation of spectral measures that results when adding a mass
at a single point. Here we also touch upon the fact that certain
functions which are dense in $L_{2}(\reals,\mu)$ are no longer dense
in $L_{2}(\reals,\mu)$ (see \eqref{eq:mu-mu-tilde-intro}).  These
results may have several applications to the inverse spectral analysis
of operators as is shown in \cite{arXiv:1610.01699} for Jacobi
operators. Finally, in Section~\ref{sec:examples}, we illustrate the
results of this work. The first example is used to put our results in
the context of the classical moment problem. The other two examples
present nontrivial results pertaining differential operators which
include regular and singular Schr\"odinger operators.

\section{Preliminaries}
\label{sec:prelim}

Let $\mathcal{H}$ be a separable Hilbert space. The inner product in
it is denoted by $\inner{\cdot}{\cdot}$ where we agree that the
left-hand-side argument is anti-linear. We denote by
$\oplus$, $\ominus$, and $\dotplus$ the orthogonal sum, the orthogonal
complement, and the direct sum of linear spaces, respectively.

Let $A$ be a not necessarily densely defined linear closed symmetric
operator. If $A$ is not densely defined, then $A^*$ is a multivalued
linear operator (linear relation). In this case, the symmetry
condition, $A\subset A^*$, remains valid (as subspaces of
$\mathcal{H}\oplus\mathcal{H}$). The deficiency indices $n_+(A)$ and
$n_-(A)$ of $A$ are defined as the 
dimension of $\mathcal{H}\ominus\ran(A-z I)$, when
$z\in\complex_+:=\{z\in\complex : \im z > 0\}$ and 
$z\in\complex_-:=\{z\in\complex : \im z < 0\}$, respectively.
If $A$ has equal deficiency indices, then it has canonical selfadjoint 
extensions (that is, selfadjoint restrictions of $A^*$), with some of 
these extensions being proper linear relations if $\dom(A)$ is not dense in 
$\cH$ \cite[Prop.~5.4]{MR1430397}.

What follows is a brief revision of Na\u{\i}mark's theory of selfadjoint 
extensions to a larger space \cite{MR0002714,MR0002713} (see also 
\cite[Appendix 1]{MR1255973} and \cite{MR1660000}), as well as some results
from Krein's theory on regular symmetric operators \cite{MR1466698,MR0011533}.

\subsection{Generalized resolvents}
\label{sec:gener-resolv}

\begin{definition}
  \label{def:generalized-cayley}
  Consider a closed symmetric operator $A$. Let $A^+$ be a
  selfadjoint extension of $A$ in a Hilbert space
  $\cH^+\supset\cH$ ($\cH^+=\cH$ is allowed). For  $w\in\C$ and $z\in\C\setminus\spec(A^+)$, define
  \begin{equation}
  \label{eq:generalized-cayley}
  	V_{A^+}(w,z)\defeq(A^+ -wI)(A^+-zI)^{-1}=
        I + (z-w)R^+(z),
\end{equation}
where $R^+$ is the resolvent of $A^+$ (\cf \cite[Ch.~1 Sec.~2.1]{MR1466698}).
\end{definition}
From the first resolvent identity (see \cite[Ch.~3 Sec.~7]{MR1192782}), 
one verifies that
\begin{equation}
  \label{eq:gen-cayley-properties1}
  V_{A^+}(w,z)=V_{A^+}(z,w)^{-1}\; \text{ and }\;
  V_{A^+}(w,z)V_{A^+}(z,v)=V_{A^+}(w,v)
\end{equation}
hold true for any $v,w,z\in\C\setminus\spec(A^+)$.
Also, it is straightforward to establish that
\begin{equation}
  \label{eq:gen-cayley-props2}
  V_{A^+}(w,z)^*=V_{A^+}(\cc{w},\cc{z}).
\end{equation}
The following assertion arises by combining the first identity in 
\eqref{eq:gen-cayley-properties1} with
\eqref{eq:gen-cayley-props2} (see \cite[Ch.~1 Sec.~2.1]{MR1466698}).

\begin{proposition}
  \label{prop:misc-about-symm-operator2}
  Let $z\in\complex$ and $w\in\complex_+$. For any selfadjoint extension $A^+$ of $A$ such that
  $z\not\in\spec(A^+)$, the operator $V_{A^+}(w,z)$ maps
  $\cH^+\ominus\ran(A-\cc{w}I)$ injectively onto  $\cH^+\ominus\ran(A-\cc{z}I)$.
\end{proposition}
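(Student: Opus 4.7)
The plan is to verify that $V_{A^+}(w,z)$ is an everywhere-defined invertible operator on $\cH^+$ and then check that it matches the defect subspaces at $\cc{w}$ and $\cc{z}$. Since $A^+$ is selfadjoint, $\spec(A^+)\subset\reals$, so $w\in\C_+$ automatically lies in its resolvent set; combined with the hypothesis $z\notin\spec(A^+)$, both $V_{A^+}(w,z)$ and $V_{A^+}(z,w)$ are bounded operators on $\cH^+$, and the first identity in \eqref{eq:gen-cayley-properties1} shows they are mutually inverse. Consequently, injectivity of the restriction is automatic: the entire content of the statement is that the correct deficiency subspaces are matched.

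For the invariance I would take $f\in\cH^+\ominus\ran(A-\cc{w}I)$ and an arbitrary $g\in\dom(A)$ and compute, using \eqref{eq:gen-cayley-props2},
\[
\inner{V_{A^+}(w,z)f}{(A-\cc{z}I)g}=\inner{f}{V_{A^+}(\cc{w},\cc{z})(A-\cc{z}I)g}.
\]
The key step is the identity
\[
V_{A^+}(\cc{w},\cc{z})(A-\cc{z}I)g=(A-\cc{w}I)g,\qquad g\in\dom(A),
\]
which follows from the inclusion $A\subset A^+$: it guarantees $(A-\cc{z}I)g=(A^+-\cc{z}I)g$, so that $(A^+-\cc{z}I)^{-1}$ sends this vector back to $g$, and then $(A^+-\cc{w}I)g=(A-\cc{w}I)g$ by the same inclusion. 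Plugging this back, the inner product equals $\inner{f}{(A-\cc{w}I)g}$, which vanishes by the choice of $f$. Hence $V_{A^+}(w,z)f\in\cH^+\ominus\ran(A-\cc{z}I)$.

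Surjectivity is obtained by running the identical argument with the roles of $w$ and $z$ interchanged: $V_{A^+}(z,w)$ maps $\cH^+\ominus\ran(A-\cc{z}I)$ into $\cH^+\ominus\ran(A-\cc{w}I)$, and the first identity in \eqref{eq:gen-cayley-properties1} shows that these two restrictions are mutual inverses, yielding the asserted bijection.

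The only point requiring attention is that $A$ need not be densely defined, in which case $A^*$ is only a linear relation. However, every computation above pairs vectors in $\cH^+$ with elements of the form $(A-\cc{\zeta}I)g$ for $g\in\dom(A)$, so $A^*$ never actually intervenes and the argument is unchanged; this is the step I expect to need the most care in a fully rigorous write-up.
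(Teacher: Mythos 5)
Your proof is correct and is essentially the paper's own argument in expanded form: the paper derives the proposition precisely by combining the adjoint identity \eqref{eq:gen-cayley-props2} (your orthogonality computation against $(A-\cc{z}I)g$, $g\in\dom(A)$, which indeed uses only the inclusion $A\subset A^+$ and never $A^*$) with the first identity in \eqref{eq:gen-cayley-properties1} (your mutual-inverse step giving injectivity and surjectivity), referring to Gorbachuk--Gorbachuk for the details. No gaps to report.
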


\begin{definition}
\label{def:herglotz}
A scalar function with domain of analyticity containing $\complex_+$ is a Herglotz 
function whenever $\im z>0 \implies \im f(z)\ge 0$. By this
definition, any real constant is a Herglotz function. Denote by $\mathfrak{R}$ the union of the class of Herglotz
functions and the constant $\infty$.
\end{definition}

\begin{definition}
  \label{def:generalized-resolvent}
Denote by $P^+$ the orthogonal projector of $\cH^+$
onto $\cH$ and by $R^+(z)$ the resolvent of $A^+$. The operator
\begin{equation*}
	R(z)\defeq P^+R^+(z)\eval{\cH},\quad z\in\complex\setminus\reals
\end{equation*}
is a generalized resolvent of $A$ \cite{MR0002714} (see also 
\cite[Appendix 1]{MR1255973} and \cite[\S~3]{MR1660000}).
\end{definition}

Fix a canonical selfadjoint extension of a closed symmetric operator
$A$ with deficiency indices $n_+(A)=n_-(A)=1$ and denote it by $A_\infty$.  For
$w\in\complex_+$ and $z\in\complex\setminus\Sp(A_\infty)$, define
\begin{equation}
 \label{eq:psi-definition}
  \psi(z)\defeq V_{A_\infty}(w,z)\psi\,,
\end{equation}
where $\psi\in\ker(A^*-w I)\setminus\{0\}$ 
(\cf \cite[Appendix~1, Sec.~4, Eq.~2']{MR1255973}).

Let $R_\infty(z)$ be resolvent of $A_\infty$. Given
$\tau\in\mathfrak{R}$, define
\begin{equation}
  \label{eq:generalized-resolvents-parametrized}
   R_\tau(z):=R_\infty(z)-\frac{\inner{\psi(\cc{z})}{\cdot}\psi(z)}{\tau(z)+Q(z)}\,,
\end{equation}
where
\begin{equation*}
  Q(z)\defeq i\im w + (z-w)\inner{\psi(\cc{z})}{\psi(w)}
\end{equation*}
is the so-called $Q$-function (see \cite[Appendix 1 Eq.~8]{MR1255973}).
$R_\tau(z)$ is the resolvent of a canonical selfadjoint extension of
$A$ if and only if $\tau(z)\equiv c\in\reals\cup\{\infty\}$. 
Moreover, once a canonical selfadjoint extension $A_\infty$ is fixed, 
\eqref{eq:generalized-resolvents-parametrized} induces a bijection 
between $\mathfrak{R}$ and the set of all generalized resolvents of $A$ 
(see \cite[Thm.~6.2]{MR2486805}, also 
\cite{MR0018341,MR0282239,MR0282238,MR0002714,MR0062954}).
Equation \eqref{eq:generalized-resolvents-parametrized} is known as the Krein-Na\u{\i}mark formula.

\begin{definition}
\label{def:generalized-resolution}
Let $E_{A^+}$ denote the spectral family of $A^+$ given by the
spectral theorem.
Let $\borel$ denote the $\sigma$-algebra of Borel sets on $\R$.
Given a selfadjoint extension $A^+$ of a symmetric operator $A$, the map
\begin{equation*}
F_{A^+}(\partial)
	\defeq P^+E_{A^+}(\partial)\eval{\cH},\quad \partial\in\borel,
\end{equation*}
is called a generalized spectral family of the operator $A$. Furthermore,
given $\phi\in\cH\setminus\{0\}$, we define
\begin{equation}
\label{eq:generalized-spectral-measure}
\sigma_{A^+\!,\phi}(\partial)
\defeq\inner{\phi}{F_{A^+}(\partial)\phi}_{\cH},\quad \partial\in\borel.
\end{equation}
We call $\sigma_{A^+\!,\phi}$ the generalized spectral measure of $A$ associated with $\phi$
and $A^+$.
\end{definition}

Note that 
\begin{equation*}
\sigma_{A^+\!,\phi}(\partial) 
	= \inner{\phi}{E_{A^+}(\partial)\phi}_{\cH^+},\quad \partial\in\borel,
\end{equation*}
so $\sigma_{A^+\!,\phi}$ is also the spectral measure of the selfadjoint 
extension $A^+$ associated with $\phi\in\cH$ and therefore it is a
nonnegative finite scalar measure.
On the basis of \cite[Appendix~I Thm.~2]{MR1255973}, the next result holds true.

\begin{proposition}
  \label{prop:spectral-family-function}
$F$ is a generalized spectral family of $A$ (that is, $F=F_{A^+}$ for some $A^+$) 
if and only if the operator-valued function  $F(t):=F(-\infty,t)$
($t\in\reals$) satisfies, for any $\phi\in\mathcal{H}$, the following conditions:
  \begin{enumerate}[(i)]
  \item The function $\inner{\phi}{F(t)\phi}$ does not decrease when $t$
    increases.\label{1res-identity}
  \item  $F(t)\phi$ is left continuous.
  \item $F(t)\phi\convergesto{t}{-\infty}0$ and
    $F(t)\phi\convergesto{t}{\infty}\phi$.\label{3res-identity}
  \item If $\phi\in\dom(A)$, then
    \begin{equation*}
      \norm{A\phi}^2=\int_\reals t^2d\inner{\phi}{F(t)\phi}\quad\text{ and
      }\quad A\phi=\int_\reals tdF(t)\phi\,.
    \end{equation*}\label{4res-identity}
  \end{enumerate}
\end{proposition}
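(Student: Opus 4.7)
The plan is to verify the two implications separately. Necessity follows directly from the spectral theorem applied to $A^+$, while sufficiency reduces to constructing a selfadjoint dilation of the operator-valued family $F$, which is essentially Na\u{\i}mark's classical theorem on dilations of positive operator-valued measures. This proposition is itself classical (due to Na\u{\i}mark), so the strategy is to record why the four listed properties suffice and to indicate where the nontrivial content lies.

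For necessity, I would fix $F=F_{A^+}$ with $A^+$ selfadjoint in $\mathcal{H}^+\supset\mathcal{H}$. For any $\phi\in\mathcal{H}$, $\inner{\phi}{F(t)\phi}_{\mathcal{H}}=\inner{\phi}{E_{A^+}(-\infty,t)\phi}_{\mathcal{H}^+}$ is the scalar spectral measure of $A^+$ at $\phi$, so (i)--(iii) are routine properties of the resolution of the identity. For (iv), $A^+$ extends $A$, hence $\dom(A)\subset\dom(A^+)$ and $A\phi=A^+\phi\in\mathcal{H}$ for $\phi\in\dom(A)$; the spectral theorem for $A^+$ then gives the two integral formulas with $E_{A^+}$, and applying $P^+$ to the vector-valued formula replaces $E_{A^+}$ by $F$ without altering $A\phi$.

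For sufficiency, given $F$ satisfying (i)--(iii), the plan is to build a dilation as follows. These properties guarantee that $F$ extends to a positive operator-valued measure on $\borel$ with $F(\reals)=I$. I would then consider the linear span of formal tensors $\chi_\partial\otimes\phi$ ($\partial\in\borel$, $\phi\in\mathcal{H}$), endow it with the semidefinite sesquilinear form
\begin{equation*}
\inner{\chi_\partial\otimes\phi}{\chi_{\partial'}\otimes\phi'}_+
\defeq \inner{\phi}{F(\partial\cap\partial')\phi'}_{\mathcal{H}},
\end{equation*}
quotient out the null space, and complete to obtain $\mathcal{H}^+$. The map $\phi\mapsto\chi_\reals\otimes\phi$ embeds $\mathcal{H}$ isometrically into $\mathcal{H}^+$, multiplication by $\chi_\partial$ defines a genuine projection-valued measure $E_{A^+}$, and $P^+E_{A^+}(\partial)\eval{\mathcal{H}}=F(\partial)$ by construction. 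Setting $A^+\defeq \int t\,dE_{A^+}(t)$ produces a selfadjoint operator on $\mathcal{H}^+$, and property (iv) is precisely what ensures $A^+$ extends $A$: the norm identity places $\chi_\reals\otimes\phi$ in $\dom(A^+)$, and the operator identity then gives $A^+(\chi_\reals\otimes\phi)=\chi_\reals\otimes A\phi$ for $\phi\in\dom(A)$. The principal obstacle is the careful bookkeeping of this dilation, especially verifying that the unbounded $A^+$ generated by $E_{A^+}$ really extends $A$ rather than some proper enlargement of $A$ inside $\mathcal{H}$; property (iv) is the key ingredient that removes this ambiguity. Since the construction is classical and carried out in detail in \cite[Appendix~I Thm.~2]{MR1255973}, in the body of the paper it would simply be invoked by citation.
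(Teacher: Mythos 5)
Your proposal is correct and matches the paper's treatment: the paper offers no proof of this proposition, simply invoking \cite[Appendix~I Thm.~2]{MR1255973}, and your sketch of that classical argument (necessity from the spectral theorem for $A^+$, sufficiency via the Na\u{\i}mark dilation with (iv) guaranteeing that the dilated operator extends $A$) is sound. The only detail worth spelling out is that the sufficiency step uses both halves of (iv) to conclude $A^+\phi=A\phi$ rather than merely $P^+A^+\phi=A\phi$: the vector identity yields $P^+A^+\phi=A\phi$, while the norm identity gives $\norm{A^+\phi}_{\cH^+}=\norm{A\phi}_{\cH}$, which forces the component of $A^+\phi$ orthogonal to $\cH$ to vanish.
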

The next statement is proven on the basis of
Proposition~\ref{prop:spectral-family-function} (see \cite[Appendix 1, pp. 141--143]{MR1255973}).

\begin{proposition}
\label{prop:bijection-herglotz-generalized-resolutions}
There is a bijection between $\mathfrak{R}$ and the generalized spectral
family of any symmetric operator with deficiency indices $(1,1)$. The bijection
is given by the combination of \eqref{eq:generalized-resolvents-parametrized}
and the identity
\begin{equation}
  \label{eq:generalized-resolvent-measure}
  \inner{\phi}{R(z)\phi}_{\cH}
  =\inner{\phi}{R^+(z)\phi}_{\cH^+}=\int_\reals\frac{d\sigma_{A^+\!,\phi}(t)}{t-z}.
\end{equation}
Moreover, the same bijection gives a one-to-one map between 
$\reals\cup\{\infty\}$ and the spectral family corresponding to canonical 
selfadjoint extensions of $A$.
\end{proposition}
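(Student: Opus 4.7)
The plan is to compose two bijections: the Krein--Na\u{\i}mark correspondence between $\mathfrak{R}$ and the generalized resolvents of $A$ (already asserted immediately after \eqref{eq:generalized-resolvents-parametrized}), and a bijection between generalized resolvents and generalized spectral families provided by the Stieltjes transform and its inversion. Proposition~\ref{prop:spectral-family-function} supplies the operational characterization that makes the second correspondence usable.

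First, starting from $\tau\in\mathfrak{R}$, formula \eqref{eq:generalized-resolvents-parametrized} produces a generalized resolvent $R_\tau(z)$ of $A$. By Definition~\ref{def:generalized-resolvent} this admits a representation $R_\tau(z)=P^+R^+(z)\eval{\cH}$ for some selfadjoint extension $A^+$ in a Hilbert space $\cH^+\supset\cH$. For any $\phi\in\cH\setminus\{0\}$, the spectral theorem applied to $A^+$ yields
\begin{equation*}
\inner{\phi}{R_\tau(z)\phi}_{\cH}
 =\inner{\phi}{R^+(z)\phi}_{\cH^+}
 =\int_\reals\frac{d\sigma_{A^+\!,\phi}(t)}{t-z},
\end{equation*}
which is exactly \eqref{eq:generalized-resolvent-measure}. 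Stieltjes inversion recovers each scalar measure $\sigma_{A^+\!,\phi}$ uniquely from its Cauchy transform, and polarization then determines the operator-valued set function $F_{A^+}$. Hence the assignment $\tau\mapsto F_{A^+}$ is well defined (independent of the dilation chosen) and, by the bijectivity half of the Krein--Na\u{\i}mark formula, injective.

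For surjectivity, I would start from an operator-valued function $F$ satisfying conditions (i)--(iv) of Proposition~\ref{prop:spectral-family-function}. For each $\phi\in\cH$, the scalar measure $\sigma_\phi(\partial):=\inner{\phi}{F(\partial)\phi}$ is nonnegative and finite, so setting
\begin{equation*}
\inner{\phi}{R(z)\phi}:=\int_\reals\frac{d\sigma_\phi(t)}{t-z},\qquad z\in\complex\setminus\reals,
\end{equation*}
and extending to a sesquilinear form by polarization produces a bounded operator $R(z)$ on $\cH$. Using (i)--(iv), one verifies that $\{R(z)\}_{z\in\complex\setminus\reals}$ fulfills Na\u{\i}mark's structural conditions for a generalized resolvent of $A$; the Krein--Na\u{\i}mark bijection then returns a unique $\tau\in\mathfrak{R}$ whose image under the first half of the argument is precisely the given $F$.

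The last claim of the proposition is immediate from the parametrization \eqref{eq:generalized-resolvents-parametrized}: $R_\tau$ is the resolvent of a canonical selfadjoint extension precisely when $\tau\equiv c\in\reals\cup\{\infty\}$, in which case one can take $\cH^+=\cH$, $P^+=I$, and $F_{A^+}$ reduces to the ordinary spectral family $E_{A^+}$. The main obstacle in the plan is the surjectivity step, \viz, checking that the $R(z)$ produced from an arbitrary $F$ really is a generalized resolvent, \ie\ admits a Na\u{\i}markian dilation. This verification is classical but technical, and it is precisely what the cited statement \cite[Appendix~I Thm.~2]{MR1255973} delivers; beyond invoking it, the remaining work is routine bookkeeping with Stieltjes transforms and polarization.
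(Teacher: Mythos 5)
Your proposal is correct and takes essentially the same route as the paper, which proves the proposition precisely by combining the Krein--Na\u{\i}mark parametrization \eqref{eq:generalized-resolvents-parametrized} with the Stieltjes-transform identity \eqref{eq:generalized-resolvent-measure}, delegating the technical core (that every $F$ satisfying (i)--(iv) arises from a selfadjoint extension $A^+$) to Proposition~\ref{prop:spectral-family-function} and \cite[Appendix 1, pp.~141--143]{MR1255973}. Since your surjectivity step invokes exactly those same cited results, there is no substantive difference in approach.
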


\subsection{Regular symmetric operators and gauges}
\label{sec:regular-gauges}

\begin{definition}
\label{def:complete-nonselfadjointness}
A closed symmetric nonselfadjoint operator is said to be completely
nonselfadjoint if it is not a nontrivial orthogonal sum of a symmetric
and a selfadjoint operators.
\end{definition}

\begin{definition}
  \label{def:point-of-regular-type}
  The complex number $z$ is in the set of points of regular
  type of a closed operator $T$ when there exists $c_z>0$ such that
  \begin{equation*}
    \norm{(T-zI)\phi}\ge
    c_z\norm{\phi} \text{ for all }\phi\in\dom(T)\,.
  \end{equation*}
The complement of the set of points of regular type of $T$ is called
the spectral kernel of $T$.
\end{definition}
\begin{definition}
\label{def:regularity}
A closed operator $T$ is regular when the set of points of regular
type is the whole complex plane.
\end{definition}

A regular, closed symmetric operator is
necessarily completely nonselfadjoint since regularity implies that
the spectral kernel is empty and, therefore, the operator cannot have
selfadjoint parts. On the other hand, there are many completely
nonselfadjoint operators that are not regular
\cite[Sec.~2]{zbMATH06526214}. We note that the dimension of
$\ker(A^* -zI)$ is constant for all $z\in\C$ whenever $A$ is regular.

\begin{definition}
  \label{def:ourclass}
 Let $\ournewclass$ denote the set of regular, closed symmetric
 operators in $\cH$ with deficiency indices $n_+(A)=n_-(A)=1$.
\end{definition}

The next assertion is well known for densely defined symmetric
operators \cite{aronszajn,donoghue,MR1466698}. A proof for the case 
under discussion in this work can be found in \cite[Prop.~2.4]{MR3002855}.

\begin{proposition}
\label{prop:properties-of-new-class}
Let $A$ be in $\ournewclass$. The following
statements are true:
\begin{enumerate}[{(i)}]
\item The spectrum of every canonical selfadjoint extension of $A$
	consists solely of isolated eigenvalues of multiplicity one.
\item Every real number is part of the spectrum of one, and only one,
	canonical selfadjoint extension of $A$.
\item The spectra of the canonical selfadjoint extensions of $A$ are
	pairwise interlaced.
\end{enumerate}
\end{proposition}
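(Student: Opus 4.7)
The plan is to derive the three claims from the Krein-Na\u{\i}mark parametrization in Proposition~\ref{prop:bijection-herglotz-generalized-resolutions} combined with the two structural consequences of regularity of $A$: the spectral kernel is empty, and $\dim\ker(A^*-zI)=1$ for every $z\in\C$.

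For (i), fix any canonical selfadjoint extension $A^+$ of $A$ and any $\lambda\in\C$. Regularity yields the bound $\|(A-\lambda I)\phi\|\ge c_\lambda\|\phi\|$ on $\dom(A)$, so $A-\lambda I$ has trivial kernel and closed range; combined with the constant-dimension fact $\dim\ker(A^*-\lambda I)=1$, this makes $A-\lambda I$ semi-Fredholm with one-codimensional range. Since $A^+$ is a one-dimensional extension of $A$ inside $A^*$, the operator $A^+-\lambda I$ is Fredholm, and its Fredholm index is constant on the connected set $\C$; evaluating it at any $\lambda\notin\R$, where $A^+-\lambda I$ is already a bijection, forces the index to be zero. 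The analytic Fredholm theorem then forces the spectrum of $A^+$ to be a discrete subset of $\R$ consisting of eigenvalues of finite multiplicity. Finally, any eigenvector of $A^+$ at $\lambda$ lies in $\ker(A^*-\lambda I)$, a one-dimensional space by regularity, so each eigenvalue is simple.

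For (ii) and (iii), fix a canonical selfadjoint extension $A_\infty$ as in Section~\ref{sec:gener-resolv}. By (i), $\Sp(A_\infty)$ is a discrete set of simple eigenvalues, and the associated $Q$-function defined just below \eqref{eq:generalized-resolvents-parametrized} is a meromorphic scalar Herglotz function on $\C$ whose poles coincide with the eigenvalues of $A_\infty$; in particular $Q$ is real-valued and strictly increasing on each connected component of $\R\setminus\Sp(A_\infty)$, mapping it bijectively onto $\R$. By Proposition~\ref{prop:bijection-herglotz-generalized-resolutions} the remaining canonical selfadjoint extensions correspond to real constants $c$, and inspection of \eqref{eq:generalized-resolvents-parametrized} identifies the spectrum of the associated extension $A_c$ with the real zero set of $c+Q(\lambda)$. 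Monotonicity yields exactly one zero in each component of $\R\setminus\Sp(A_\infty)$, which simultaneously delivers the existence and uniqueness in (ii) (setting $c:=-Q(\lambda)$ when $\lambda\notin\Sp(A_\infty)$, and $c=\infty$ otherwise). For (iii), if $c_1<c_2$, the unique zero in any fixed component of $\R\setminus\Sp(A_\infty)$ decreases strictly as $c$ grows, so the spectra of $A_{c_1}$ and $A_{c_2}$ interlace strictly; letting one of the parameters tend to $\infty$ handles interlacing with $A_\infty$ itself.

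The main obstacle is the Fredholm-of-index-zero step in (i): this is what promotes the algebraic regularity hypothesis on $A$ into the analytic conclusion that each canonical selfadjoint extension has purely discrete simple spectrum. Once (i) is in place, (ii) and (iii) reduce to elementary monotonicity for the real meromorphic Herglotz function $Q$, with no further input beyond the Krein-Na\u{\i}mark bijection.
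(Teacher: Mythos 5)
Your argument is necessarily a different route from the paper's, because the paper does not prove Proposition~\ref{prop:properties-of-new-class} at all: it refers the reader to \cite[Prop.~2.4]{MR3002855} and to the classical sources \cite{aronszajn,donoghue,MR1466698}. Judged on its own terms, your plan is sound and in fact close in spirit to the classical arguments for (ii) and (iii), which also run through the Krein--Na\u{\i}mark formula \eqref{eq:generalized-resolvents-parametrized} and the strict monotonicity of the real meromorphic $Q$-function on the gaps of $\spec(A_\infty)$. For (i), the classical route uses Krein's theorem that an interval consisting of points of regular type forces the spectrum of every canonical extension there to be discrete with multiplicity bounded by the deficiency index; your stability-of-the-Fredholm-index argument (index $-1$ for $A-\lambda I$, hence $0$ for the one-dimensional extension, hence empty essential spectrum, with simplicity of eigenvalues coming from $\dim\ker(A^*-\lambda I)=1$) is a legitimate and self-contained replacement.

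Three points need attention before the proof is complete. First, the class $\ournewclass$ allows $\dom(A)$ to be nondense, in which case one canonical selfadjoint extension may be a proper linear relation (see Section~\ref{sec:prelim}); your proof of (i) treats $A^+$ as an operator throughout, so you must either use Fredholm theory for relations or pass to the operator part of $A^+$ on $\cH\ominus\mul(A^+)$ and check that the argument survives. Second, the step ``inspection of \eqref{eq:generalized-resolvents-parametrized} identifies $\spec(A_c)$ with the zero set of $c+Q$'' conceals two facts with real content: that the poles of $Q$ are exactly $\spec(A_\infty)$ (this uses complete nonselfadjointness, which follows from regularity: an eigenvector of $A_\infty$ orthogonal to $\ker(A^*-wI)$ would be orthogonal to $\ker(A^*-zI)$ for all nonreal $z$, hence lie in $\bigcap_{z\notin\R}\ran(A-zI)=\{0\}$), and that no eigenvalue of $A_\infty$ survives as an eigenvalue of $A_c$, i.e.\ the pole of $R_\infty$ is cancelled by the rank-one correction. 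For the latter a cleaner argument is available: a common eigenvalue $\lambda$ of two distinct canonical extensions would force a common eigenvector in the one-dimensional $\ker(A^*-\lambda I)$, which then lies in the intersection of their domains, namely $\dom(A)$, contradicting the regular-type bound $\norm{(A-\lambda I)\phi}\ge c_\lambda\norm{\phi}$; this also settles the uniqueness half of (ii). Third, your claim that $Q$ maps each component of $\R\setminus\spec(A_\infty)$ onto all of $\R$ fails in general on unbounded components (for a regular Schr\"odinger operator $Q$ has a finite limit at $-\infty$), but this is harmless: (ii) only needs the choice $c:=-Q(\lambda)$ together with injectivity, and interlacing only uses that each bounded gap contains exactly one zero of $c+Q$.
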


Note that item (i) above means that every selfadjoint extension of $A$ is
a simple operator \cite[Sec.~69]{MR1255973}. Also, item (iii) above (together with 
the fact that under extensions the spectral kernel does not decrease) 
implies that $A$ is regular if and only if the spectra of any two selfadjoint 
extensions of $A$ do not intersect.

For any $A\in\ournewclass$, let $\kappa\in\cH$ be such that
\begin{equation}
  \label{eq:decomposition-gauge}
  \cH = \ran ( A - z_{0} I ) \dotplus \Span \{ \kappa \},
\end{equation}
for some $z_{0}\in\complex$. For this vector $\kappa$, consider the set
\begin{equation}
  \label{eq:bad-set}
  \{z\in\complex: \kappa\in\ran ( A - z I )\}=
  \{z\in\complex:\mathcal{H}\ne \ran ( A - z I ) \dotplus \Span \{ \kappa \}\}
  \,.
  \end{equation}
The following result is proven in \cite[Thm.~2.2]{MR2607264}. It was first
formulated by Krein without proof in \cite[Thm.~8]{MR0011533}.

\begin{proposition}\label{prop:good-gauge}
  For every $A\in\ournewclass$, there exists $\kappa$ satisfying
  \eqref{eq:decomposition-gauge} such that the corresponding set
  \eqref{eq:bad-set} does not intersect the real line.
\end{proposition}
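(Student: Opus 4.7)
My plan is to exhibit a specific $\kappa$ as a defect vector and verify the two required properties directly. I would fix any non-real $z_0 \in \complex$ and take $\kappa$ to be any nonzero element of the one-dimensional subspace $\ker(A^* - z_0 I)$. The decomposition in \eqref{eq:decomposition-gauge} then holds immediately with the parameter $\cc{z_0}$: since $A$ is regular, $\ran(A - \cc{z_0}I)$ is closed and its orthogonal complement is $\ker(A^* - z_0 I) = \Span\{\kappa\}$, so $\cH = \ran(A - \cc{z_0}I) \oplus \Span\{\kappa\}$.

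To show that the set in \eqref{eq:bad-set} misses $\reals$, I would fix a canonical selfadjoint extension $A_\infty$ -- whose spectrum consists, by Proposition~\ref{prop:properties-of-new-class}, of simple isolated real eigenvalues $\{\lambda_n\}$ with orthonormal eigenvectors $\{e_n\}$ -- and write $R_\infty$ for its resolvent. Since $\ran(A - zI)^\perp = \ker(A^* - zI)$ for real $z$, I would treat the two cases $z \in \Sp(A_\infty)$ and $z \in \reals \setminus \Sp(A_\infty)$ separately.

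When $z = \lambda_n$, the defect subspace is $\Span\{e_n\}$, so the task is to show $\inner{\kappa}{e_n} \neq 0$. If it were zero, then $e_n \in \ran(A - \cc{z_0}I)$, i.e., $e_n = (A - \cc{z_0}I)\eta$ for some $\eta \in \dom(A)$; since $\cc{z_0} \notin \Sp(A_\infty)$, this would force $\eta = (\lambda_n - \cc{z_0})^{-1} e_n$, hence $e_n \in \dom(A)$, turning $\lambda_n$ into a real eigenvalue of $A$ and contradicting regularity. For $z \in \reals \setminus \Sp(A_\infty)$, I would set $\Phi(z) \defeq \kappa + (z - z_0) R_\infty(z)\kappa$, which by a direct verification (cf.\ Proposition~\ref{prop:misc-about-symm-operator2} and \eqref{eq:generalized-cayley}) is a nonzero element of $\ker(A^* - zI)$. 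Expanding and noting that $m(z) \defeq \inner{R_\infty(z)\kappa}{\kappa}$ is real for real $z \notin \Sp(A_\infty)$,
\begin{equation*}
\inner{\Phi(z)}{\kappa} = \|\kappa\|^2 + (z - \cc{z_0})\, m(z),
\end{equation*}
so the imaginary part $(\im z_0)\, m(z)$ vanishes only when $m(z) = 0$, and in that case the real part equals $\|\kappa\|^2 > 0$. Hence $\inner{\Phi(z)}{\kappa} \neq 0$, and $\kappa \notin \ran(A - zI)$.

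The main obstacle is the $z = \lambda_n$ subcase, which is precisely where the regularity hypothesis is crucially used to preclude real eigenvalues of $A$; once this is in place, the off-spectrum case is a short inspection of the real and imaginary parts of the displayed identity, and the conjugation bookkeeping (in particular $\cc{(z - z_0)} = z - \cc{z_0}$ for real $z$) is the only bit that needs attention.
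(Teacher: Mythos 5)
Your argument is correct, and it is worth noting that the paper itself does not prove Proposition~\ref{prop:good-gauge} at all: it only records that Krein stated it without proof and defers to \cite[Thm.~2.2]{MR2607264}. Your proposal is therefore a self-contained alternative built entirely from the paper's Section~2 toolkit, and it proves slightly more than is asserted, namely that \emph{any} nonzero deficiency element $\kappa\in\ker(A^*-z_0I)$, $z_0\notin\reals$, is a spectral gauge. The two halves fit together properly: for $z=\lambda_n\in\spec(A_\infty)$ you use the orthogonal decomposition $\cH=\ran(A-\cc{z_0}I)\oplus\Span\{\kappa\}$ together with $\ker(A^*-\lambda_nI)=\ker(A_\infty-\lambda_nI)$ (one-dimensional by regularity) to turn $\inner{\kappa}{e_n}=0$ into a real eigenvalue of $A$, which regularity forbids; for real $z\notin\spec(A_\infty)$ the identity $\inner{\Phi(z)}{\kappa}=\norm{\kappa}^2+(z-\cc{z_0})\,m(z)$ with $m(z)=\inner{R_\infty(z)\kappa}{\kappa}$ real (selfadjointness of $R_\infty(z)$ at real regular points) indeed forces $\inner{\Phi(z)}{\kappa}\neq0$, and since $\Phi(z)=V_{A_\infty}(z_0,z)\kappa$ spans $\ker(A^*-zI)=\ran(A-zI)^{\perp}$ by Proposition~\ref{prop:misc-about-symm-operator2}, this gives $\kappa\notin\ran(A-zI)$. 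Two cosmetic points deserve a line in a final write-up: take $z_0\in\complex_+$ (or pass to conjugates) so that Proposition~\ref{prop:misc-about-symm-operator2} applies verbatim with $w=z_0$; and since the paper allows $A$ to be non-densely defined, $A_\infty$ may be a proper linear relation, but nothing in your two cases changes (the resolvent is still an everywhere-defined operator, selfadjoint at real regular points, and Proposition~\ref{prop:properties-of-new-class}(i) still supplies the eigenvectors $e_n$). In exchange for being longer than a citation, your route makes explicit where regularity enters (exclusion of real eigenvalues and one-dimensionality of $\ker(A^*-xI)$ for real $x$), which the paper leaves hidden in the reference.
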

\begin{definition}
  \label{def:good-gauge}
  We call the vector $\kappa$, whose existence is established in
  Proposition~\ref{prop:good-gauge}, a spectral gauge of $A\in\ournewclass$.
\end{definition}

\begin{lemma}
\label{lem:cyclic-vector}
A spectral gauge of $A\in\ournewclass$ is a generating
  vector (as defined in \cite[Sec. 69]{MR1255973}) for any canonical
  selfadjoint extension of $A$.
\end{lemma}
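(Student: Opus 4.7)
The plan is to reduce the generating-vector property to a simple statement about the inner products of $\kappa$ with the eigenvectors of the fixed canonical selfadjoint extension. Let $A^+$ be any canonical selfadjoint extension of $A$. By Proposition~\ref{prop:properties-of-new-class}(i), $A^+$ is simple: its spectrum consists of isolated eigenvalues $\{\lambda_n\}$ of multiplicity one, with corresponding normalized eigenvectors $\{\phi_n\}$ forming a complete orthonormal system in $\cH$. According to \cite[Sec.~69]{MR1255973}, for such an operator a vector is generating precisely when it is not orthogonal to any eigenvector. Thus it suffices to show $\inner{\phi_n}{\kappa}\ne 0$ for every $n$.

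The central step is the identity $\ran(A-\lambda_n I)=\{\phi_n\}^{\perp}$. The inclusion $\ran(A-\lambda_n I)\subseteq\ran(A^+-\lambda_n I)$ is immediate from $A\subseteq A^+$. Since $A^+$ is selfadjoint, $\lambda_n\in\reals$, and $\ker(A^+-\lambda_n I)=\Span\{\phi_n\}$, one has $\ran(A^+-\lambda_n I)=\{\phi_n\}^{\perp}$, which is a closed subspace of codimension one. On the other side, $A\in\ournewclass$ has constant deficiency one throughout $\complex$ because it is regular, so $\ran(A-\lambda_n I)$ is also closed of codimension one in $\cH$. Two closed subspaces of the same finite codimension with one contained in the other must coincide, giving $\ran(A-\lambda_n I)=\{\phi_n\}^{\perp}$.

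Now the defining property of a spectral gauge (Proposition~\ref{prop:good-gauge} and Definition~\ref{def:good-gauge}) says that the set \eqref{eq:bad-set} does not meet $\reals$, i.e. $\kappa\notin\ran(A-\lambda I)$ for every $\lambda\in\reals$. Taking $\lambda=\lambda_n$ and combining with the identity just proved yields $\kappa\notin\{\phi_n\}^{\perp}$, that is $\inner{\phi_n}{\kappa}\ne 0$, and the lemma follows. The only real subtlety I foresee is the non-densely-defined case, in which $A^+$ is a selfadjoint relation and may carry a one-dimensional multi-valued part $\mul(A^+)$ in addition to the eigenvectors $\{\phi_n\}$; there one must further check that $\kappa$ is not orthogonal to $\mul(A^+)$. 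This can be handled by recalling that $\mul(A^+)\subseteq\dom(A)^{\perp}$ plays the role of an eigenvector ``at infinity,'' and using the decomposition \eqref{eq:decomposition-gauge} for some non-real $z_0$: since $\kappa$ is not in $\ran(A-z_0 I)$ and $\ran(A^+-z_0 I)$ has $\cH$-orthogonal complement spanned by a vector aligned with $\mul(A^+)$ (through the Cayley transform identification of Proposition~\ref{prop:misc-about-symm-operator2}), the required nonzero projection follows.
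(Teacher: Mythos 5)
Your main argument is correct and is essentially the paper's own proof: both reduce the statement to showing $\inner{\phi_n}{\kappa}\ne 0$ for every eigenvector of the canonical extension (which, for a simple operator with discrete spectrum, is exactly the generating-vector criterion), and both get this from the defining property of the gauge, $\kappa\notin\ran(A-\lambda I)$ for real $\lambda$, together with the identification of $\ran(A-\lambda_n I)^{\perp}$ with the eigenspace at $\lambda_n$. The paper obtains that identification directly via $\ran(A-x I)^{\perp}=\ker(A^*-xI)=\ker(A_\gamma-xI)$ for $x\in\spec(A_\gamma)$, whereas you compare $\ran(A-\lambda_n I)$ with $\ran(A^{+}-\lambda_n I)$ and count codimensions; this is only a cosmetic variation and is sound, given the constancy of the deficiency on all of $\C$ for regular operators, which the paper records.

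One caution about your closing remark on the non-densely-defined case: the fix you sketch is wrong as stated, because for a selfadjoint relation $A^{+}$ and non-real $z_0$ one has $\ran(A^{+}-z_0I)=\cH$, so its orthogonal complement is trivial and cannot be ``aligned with $\mul(A^{+})$''; the one-dimensional deficiency space at $z_0$ is $\cH\ominus\ran(A-\cc{z_0}I)$, which involves $A$, not $A^{+}$. This does not damage your main argument: the paper's proof likewise says nothing about the multivalued part, the generating property being understood through the spectral family of $A_\gamma$ (equivalently, its operator part), for which the eigenvector criterion you verified is all that is required.
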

\begin{proof}
Let $A_\gamma$ be any canonical selfadjoint extension of $A$ and $\kappa$
a spectral gauge of $A$.
Since $\kappa$ is not in $\ran(A-xI)$ for any $x\in\reals$, $\kappa$ has
a nonzero projection onto the one-dimensional space $\ker(A^*-x
I)$. Because $\ker(A^*-x I)=\ker(A_\gamma-x I)$ whenever 
$x\in\spec(A_\gamma)$,
it follows that $\kappa$ has a nonzero projection onto every eigenspace of 
$A_\gamma$ since its spectrum is discrete.
\end{proof}

The following assertion is stated in \cite[Thm.~1]{MR0011533} and
proven in \cite[Thm.~1.2.5]{MR1466698}.
\begin{proposition}
  \label{prop:krein-gorby}
  Let $\kappa$ be a spectral gauge of $A\in\ournewclass$ and $a,b\in\reals$ such
  that $a<b$. If $F$ is a generalized spectral family of $A$, then
  \begin{equation}
    \label{eq:integral-expression-for-family}
    F([a,b])\phi=\int_a^b\frac{\inner{\psi(t)}{\phi}}
    	{\inner{\psi(t)}{\kappa}}dF(t)\kappa
  \end{equation}
  for any $\phi\in\cH$ and $\psi$ given in \eqref{eq:psi-definition}.
\end{proposition}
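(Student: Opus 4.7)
The plan is to verify \eqref{eq:integral-expression-for-family} by checking the equivalent scalar-measure identity
\begin{equation*}
d\inner{\phi_1}{F(t)\phi}_\cH \;=\; \frac{\inner{\psi(t)}{\phi}}{\inner{\psi(t)}{\kappa}}\,d\inner{\phi_1}{F(t)\kappa}_\cH,\qquad \phi_1\in\cH,
\end{equation*}
on $\R\setminus\spec(A_\infty)$, then integrating against $\chi_{[a,b]}$. Two preliminary observations make the right-hand side meaningful. First, $\psi(t)\in\ker(A^*-tI)$ for every $t\in\R\setminus\spec(A_\infty)$: this is immediate from Proposition~\ref{prop:misc-about-symm-operator2} applied with $A^+=A_\infty$ together with \eqref{eq:psi-definition}, since $V_{A_\infty}(w,t)$ carries $\ker(A^*-wI)$ onto $\ker(A^*-tI)$. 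Second, $\inner{\psi(t)}{\kappa}\neq 0$ for every such $t$, because otherwise $\kappa$ would be orthogonal to $\ker(A^*-tI)=\cH\ominus\ran(A-tI)$, hence $\kappa\in\ran(A-tI)$, contradicting the defining property of a spectral gauge (Proposition~\ref{prop:good-gauge}). Since $\spec(A_\infty)$ is discrete (Proposition~\ref{prop:properties-of-new-class}), $\psi$ may be extended to the exceptional points by momentarily switching to another canonical extension, no two of which share real spectrum.

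I would first verify the identity for the canonical case $F=F_{A_\infty}$. By Proposition~\ref{prop:properties-of-new-class}, $A_\infty$ has discrete simple spectrum $\{\lambda_k\}$ with eigenspaces $\Span\{\psi(\lambda_k)\}$, so the spectral projections take the explicit rank-one form
\begin{equation*}
E_{A_\infty}(\{\lambda_k\})\phi \;=\; \frac{\inner{\psi(\lambda_k)}{\phi}}{\norm{\psi(\lambda_k)}^2}\,\psi(\lambda_k).
\end{equation*}
Both sides of \eqref{eq:integral-expression-for-family} then reduce to $\sum_{\lambda_k\in[a,b]}\frac{\inner{\psi(\lambda_k)}{\phi}}{\norm{\psi(\lambda_k)}^2}\psi(\lambda_k)$ by direct summation: on the right the factor $\inner{\psi(\lambda_k)}{\kappa}$ from $dF_{A_\infty}(t)\kappa$ cancels against the denominator of the integrand, leaving the same expression as on the left.

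The extension to a general generalized spectral family $F=F_{A^+}$, arising from a Na\u{\i}mark dilation $A^+$ on $\cH^+\supset\cH$, is the delicate part. Since $\phi_1,\phi,\kappa\in\cH$, one has $\inner{\phi_1}{F(t)\phi}_\cH=\inner{\phi_1}{E_{A^+}(t)\phi}_{\cH^+}$, so the identity becomes a statement about $E_{A^+}$. Using $A\subset A^+$ and $\psi(t)\in\ker(A^*-tI)$, one obtains for every $\eta\in\dom(A)$
\begin{equation*}
\int_\R (s-t)\,d\inner{\psi(t)}{E_{A^+}(s)\eta}_{\cH^+} \;=\; \inner{\psi(t)}{(A-tI)\eta}_\cH \;=\; 0,
\end{equation*}
which forces the measure $s\mapsto d\inner{\psi(t)}{E_{A^+}(s)\eta}(s)$ to be supported at $s=t$. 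A density argument (using that $R^+(z)$ applied to $\cH$ for $z\in\C\setminus\R$ spans a dense subspace of $\dom(A^+)$, together with Lemma~\ref{lem:cyclic-vector} to obtain the one-dimensional ``fiber'' structure of $E_{A^+}$ along $\psi(t)$) extends the relation to arbitrary $\phi\in\cH$ and yields the required proportionality. The main obstacle is this fiber decomposition step: making precise, in the possible presence of continuous spectrum of $A^+$ and the fact that $\psi(t)\in\cH\subset\cH^+$ is not an eigenvector of $A^+$, that $\psi(t)$ nevertheless plays the role of an eigenfunctional at level $t$. This is the content of the direct-integral representation of $A^+$ based on the cyclic gauge $\kappa$, which provides the Radon--Nikodym derivative on $\R$ explicitly as $\inner{\psi(t)}{\phi}/\inner{\psi(t)}{\kappa}$.
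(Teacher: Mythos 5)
Your two preliminary observations and the canonical case are essentially fine (modulo one caveat below), but the general case — which is the entire content of the proposition, since for a canonical extension the identity is a direct eigenfunction-expansion computation — is not proved, and the one concrete step you offer for it is false. From $\int_\R (s-t)\,d\inner{\psi(t)}{E_{A^+}(s)\eta}=0$ for $\eta\in\dom(A)$ you conclude that the complex measure $s\mapsto\inner{\psi(t)}{E_{A^+}(s)\eta}$ is supported at $s=t$; a single vanishing first moment never localizes a complex (non-positive) measure, and the conclusion is in fact wrong: take $A^+=A_\gamma$ a canonical extension with $t\notin\spec(A_\gamma)$; then this measure is supported on $\spec(A_\gamma)$, away from $t$, and is generically nonzero, even though its first moment about $t$ vanishes by exactly your computation. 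The remaining "fiber decomposition / direct-integral representation of $A^+$ based on the cyclic gauge" is precisely the assertion to be proven, and it is not available for free: Lemma~\ref{lem:cyclic-vector} makes $\kappa$ cyclic only for canonical extensions in $\cH$, not for $A^+$ in $\cH^+$, and $\psi(t)$ is not an eigenvector of $A^+$, so no appeal to the spectral theorem for $A^+$ produces the Radon--Nikodym derivative $\inner{\psi(t)}{\phi}/\inner{\psi(t)}{\kappa}$ without an argument that you have not supplied.

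For comparison, the paper does not prove this proposition at all; it cites Krein and \cite[Thm.~1.2.5]{MR1466698}, where the argument runs differently: for each real $t$ one uses the gauge decomposition
\begin{equation*}
\phi=(A-tI)g_t+\frac{\inner{\psi(t)}{\phi}}{\inner{\psi(t)}{\kappa}}\,\kappa ,
\end{equation*}
valid because $\cH=\ran(A-tI)\dotplus\Span\{\kappa\}$ and $\psi(t)$ annihilates $\ran(A-tI)$; then, on a partition of $[a,b]$ into small intervals $\Delta\ni t$, Proposition~\ref{prop:spectral-family-function}(iv) (equivalently, passing through $E_{A^+}$) gives $F(\Delta)(A-tI)g_t=P^+\int_\Delta(s-t)\,dE_{A^+}(s)g_t$, whose norm is $O(\abs{\Delta})$ uniformly in $t\in[a,b]$ because regularity of $A$ bounds $\norm{g_t}$ uniformly on compacts; refining the partition kills the remainder and yields \eqref{eq:integral-expression-for-family}. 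If you want to salvage your outline, this decomposition-plus-Riemann--Stieltjes step is what must replace the support-localization claim. Finally, a smaller point: $\psi(\cdot)=V_{A_\infty}(w,\cdot)\psi$ has poles at $\spec(A_\infty)$ (this is exactly why \eqref{eq:one-def-xi} carries the factor $h_\gamma$), so $\psi(\lambda_k)$ in your canonical-case computation is undefined and only the ratio $\inner{\psi(t)}{\phi}/\inner{\psi(t)}{\kappa}$ extends to those points; since $dF(\cdot)\kappa$ can charge $\spec(A_\infty)$ (for $F=F_{A_\infty}$ all its mass sits there), "verify off $\spec(A_\infty)$ and integrate" needs this extension made explicit rather than a vague switch to another extension.
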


\section{Extremality of measures}
\label{sec:extremality-measures}

\begin{definition}
\label{def:family-measures}
Consider $A\in\ournewclass$ and let $\kappa$ be a spectral gauge of $A$. Recalling
\eqref{eq:generalized-spectral-measure}, define
\begin{equation*}
\mathcal{V}_\kappa(A) \defeq
\left\{\sigma=\sigma_{A^+\!,\kappa}: A^+ \text{ is selfadjoint extension of } A\right\}.
\end{equation*}
Inside $\mathcal{V}_\kappa(A)$, we single out the set of extremal measures
\begin{equation*}
\mathcal{V}^{\rm ext}_\kappa(A) \defeq
\left\{\sigma=\sigma_{A^+\!,\kappa}: 
	A^+ \text{ is canonical selfadjoint extension of } A\right\}.
\end{equation*}
\end{definition}

The reason for using word ``extremal'' above will be explained 
in Theorem~\ref{thm:extremality-of-measures} below.
In the context of this definition,
Proposition~\ref{prop:bijection-herglotz-generalized-resolutions} implies the following.

\begin{corollary}
  \label{cor:bijection-herglotz-generalized-resolutions}
  For any spectral gauge $\kappa$ of $A\in\ournewclass$, there is a bijection between 
  $\mathfrak{R}$ and
  $\mathcal{V}_\kappa(A)$.  Moreover, this bijection gives a one-to-one mapping
  between $\reals\cup\{\infty\}$ and $\mathcal{V}^{\rm ext}_\kappa(A)$.
\end{corollary}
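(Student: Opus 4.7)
The plan is to compose the bijection from Proposition~\ref{prop:bijection-herglotz-generalized-resolutions} between $\mathfrak{R}$ and the set of generalized spectral families of $A$ with the evaluation map $F_{A^+}\mapsto \sigma_{A^+\!,\kappa}=\inner{\kappa}{F_{A^+}(\cdot)\kappa}$ sending a generalized spectral family to the corresponding element of $\mathcal{V}_\kappa(A)$. Surjectivity of this evaluation map onto $\mathcal{V}_\kappa(A)$ is immediate from Definition~\ref{def:family-measures}, so the whole task reduces to establishing injectivity: if $\tau\neq\tau'$ in $\mathfrak{R}$, then the associated measures $\sigma_\tau,\sigma_{\tau'}$ differ.

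To verify this, I would pair $\kappa$ against both sides of the Krein-Na\u{\i}mark formula~\eqref{eq:generalized-resolvents-parametrized}, obtaining
\begin{equation*}
\inner{\kappa}{R_\tau(z)\kappa}
=\inner{\kappa}{R_\infty(z)\kappa}
-\frac{\inner{\psi(\cc z)}{\kappa}\inner{\kappa}{\psi(z)}}{\tau(z)+Q(z)}.
\end{equation*}
By \eqref{eq:generalized-resolvent-measure}, the left-hand side is the Stieltjes transform of $\sigma_\tau$ and therefore determines $\sigma_\tau$ uniquely. Hence it is enough to show that the coefficient $C(z)\defeq\inner{\psi(\cc z)}{\kappa}\inner{\kappa}{\psi(z)}$ does not vanish identically, for then $\tau\mapsto -C(z)/(\tau(z)+Q(z))$ is injective on $\mathfrak{R}$. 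Here the spectral gauge property (Proposition~\ref{prop:good-gauge} and Definition~\ref{def:good-gauge}) enters: since the bad set \eqref{eq:bad-set} avoids the real line, one has $\kappa\notin\ran(A-z_0 I)$ for every $z_0\in\reals$, equivalently $\inner{\psi(z_0)}{\kappa}\neq 0$ for every $z_0\in\reals\setminus\spec(A_\infty)$. Because $\cc{z_0}=z_0$ in that case, $C(z_0)=\abs{\inner{\kappa}{\psi(z_0)}}^2>0$, so $C\not\equiv 0$.

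For the second statement, one just restricts the bijection already constructed: Proposition~\ref{prop:bijection-herglotz-generalized-resolutions} maps $\reals\cup\{\infty\}\subset\mathfrak{R}$ onto the generalized spectral families coming from canonical selfadjoint extensions of $A$, and by construction $\mathcal{V}^{\rm ext}_\kappa(A)$ is precisely the image of those families under $F\mapsto\inner{\kappa}{F(\cdot)\kappa}$, so the restricted map is again a bijection. The only genuine obstacle in the whole argument is verifying $C\not\equiv 0$; once that is in place, the rest is a routine composition of bijections already established.
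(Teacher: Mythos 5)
Your proposal is correct and takes essentially the route the paper intends: the paper states this corollary as an immediate consequence of Proposition~\ref{prop:bijection-herglotz-generalized-resolutions}, namely composing that bijection with evaluation at the gauge, and your verification of injectivity by pairing $\kappa$ with the Krein--Na\u{\i}mark formula, invoking the Stieltjes-transform identity \eqref{eq:generalized-resolvent-measure}, and using the gauge property to get $\inner{\psi(x)}{\kappa}\neq 0$ on $\reals$ is precisely the intended unpacking (compare the functions $B$ and $C$ in Theorem~\ref{thm:formula-rational-p-q}). The only step you leave tacit is routine: since $C$ and $\tau-\tau'$ are analytic on $\complex_+$ with $C\not\equiv 0$ and $\im\bigl(\tau(z)+Q(z)\bigr)>0$ there, one can choose a point where simultaneously $C(z)\neq 0$ and $\tau(z)\neq\tau'(z)$ (handling $\tau\equiv\infty$ separately), so the two transforms indeed differ.
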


The bijection referred above depends on the choice of $A_\infty$ (see
Section~\ref{sec:gener-resolv}), but regardless of this choice, all
measures arising from spectral families corresponding to canonical
selfadjoint extensions of $A$ are obtained when $\tau$ runs through
$\reals\cup\{\infty\}$. Note that, as a consequence of
Proposition~\ref{prop:properties-of-new-class}, one has:
\begin{corollary}
  \label{cor:extremal-measures-properties}
  Let $\kappa$ be a spectral gauge of $A\in\ournewclass$. All measures in
  $\mathcal{V}^{\rm ext}_\kappa(A)$ are discrete and their supports are
  pairwise disjoint. For any $x\in\reals$, there is a measure
  $\sigma\in\mathcal{V}^{\rm ext}_\kappa(A)$ such that $\sigma\{x\}>0$.
\end{corollary}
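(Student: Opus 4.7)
The plan is to combine the three items of Proposition~\ref{prop:properties-of-new-class} with Lemma~\ref{lem:cyclic-vector} to read off each of the three assertions in turn. Recall that any $\sigma\in\mathcal{V}^{\rm ext}_\kappa(A)$ equals $\sigma_{A^+\!,\kappa}$ for some canonical selfadjoint extension $A^+$ of $A$, and $\sigma_{A^+\!,\kappa}(\partial)=\inner{\kappa}{E_{A^+}(\partial)\kappa}_{\cH}$ is just the standard spectral measure of $A^+$ with respect to $\kappa$.

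For discreteness, I would invoke item~(i) of Proposition~\ref{prop:properties-of-new-class}: the spectrum of $A^+$ consists of isolated simple eigenvalues, so $E_{A^+}$ is a countable sum of rank-one projectors $P_x$ onto the eigenspaces $\ker(A^+ - xI)$ over $x\in\spec(A^+)$. Consequently $\sigma_{A^+\!,\kappa}=\sum_{x\in\spec(A^+)}\norm{P_x\kappa}^2\,\delta_x$, which is discrete; moreover $\supp\sigma_{A^+\!,\kappa}\subseteq\spec(A^+)$, and by Lemma~\ref{lem:cyclic-vector} the vector $\kappa$ has nonzero projection onto every eigenspace of $A^+$, so in fact $\supp\sigma_{A^+\!,\kappa}=\spec(A^+)$.

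Pairwise disjointness of the supports is then immediate from item~(ii) of Proposition~\ref{prop:properties-of-new-class}: if two distinct canonical selfadjoint extensions shared a point of their spectra, that point would belong to the spectra of two distinct such extensions, contradicting uniqueness. Since the supports coincide with the spectra by the previous paragraph, they are pairwise disjoint.

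Finally, the exhaustion statement follows by applying item~(ii) in the other direction: given any $x\in\reals$, there exists a (unique) canonical selfadjoint extension $A^+$ with $x\in\spec(A^+)$, hence $x$ is an eigenvalue of $A^+$, and Lemma~\ref{lem:cyclic-vector} again guarantees that $P_x\kappa\neq 0$, so $\sigma_{A^+\!,\kappa}\{x\}=\norm{P_x\kappa}^2>0$. There is no real obstacle here: the content is entirely in the already-established Proposition~\ref{prop:properties-of-new-class} and Lemma~\ref{lem:cyclic-vector}, and the only mild point worth stating clearly is the identity $\supp\sigma_{A^+\!,\kappa}=\spec(A^+)$, which bridges spectral-theoretic facts about $A^+$ and measure-theoretic facts about $\sigma$.
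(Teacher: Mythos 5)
Your proof is correct and takes essentially the same route the paper intends: the paper states the corollary as an immediate consequence of Proposition~\ref{prop:properties-of-new-class}, and your argument merely makes explicit the bridging identity $\supp\sigma_{A^+\!,\kappa}=\spec(A^+)$, which is precisely what Lemma~\ref{lem:cyclic-vector} (the spectral gauge is a generating vector) was established for. Nothing is missing.
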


The next statement can be found
in \cite[Chap.~2 Sec.~7.1]{MR1466698} and the first part of it in
\cite[Appendix I~Sec.~4]{MR1255973} (\cf \cite[Thm.~6]{MR0011170}
and \cite{MR0012177}).
\begin{theorem}
  \label{thm:formula-rational-p-q}
  Consider $A\in\ournewclass$. Given $\tau\in\mathfrak{R}$, let
  $R_\tau(z)$ be a generalized resolvent as in
  \eqref{eq:generalized-resolvents-parametrized}. If $\kappa$ is a spectral gauge of
  $A$, then
  \begin{equation*}
     \inner{\kappa}{R_\tau(z)\kappa}=-\frac{\tau(z)A(z)-C(z)}{\tau(z)B(z)-D(z)}\,,
   \end{equation*}
   where
   $A(z),B(z),C(z),D(z)$ are meromorphic functions on $\complex$, analytic on
   $\reals$.
 \end{theorem}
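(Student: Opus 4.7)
The plan is to extract the formula directly from the Krein--Na\u{\i}mark parametrization \eqref{eq:generalized-resolvents-parametrized}. Contracting that identity with $\kappa$ on both sides and introducing the shorthand $m(z)\defeq\inner{\kappa}{R_\infty(z)\kappa}$, $a(z)\defeq\inner{\psi(\cc{z})}{\kappa}$, $b(z)\defeq\inner{\kappa}{\psi(z)}$, one obtains
\begin{equation*}
  \inner{\kappa}{R_\tau(z)\kappa}
  = m(z) - \frac{a(z)b(z)}{\tau(z) + Q(z)}
  = \frac{m(z)\tau(z) + \bigl(m(z)Q(z) - a(z)b(z)\bigr)}{\tau(z) + Q(z)}.
\end{equation*}
Comparing with the target expression $-(\tau A - C)/(\tau B - D)$ immediately suggests the naive assignment $A_0 \defeq -m$, $B_0 \defeq 1$, $C_0 \defeq mQ - ab$, $D_0 \defeq -Q$.

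The next step is to verify meromorphicity on $\C$. By Proposition~\ref{prop:properties-of-new-class}(i), $\spec(A_\infty)=\{\lambda_n\}$ consists of isolated real eigenvalues with no finite accumulation point; combining this with \eqref{eq:psi-definition} and the definition of $Q$ shows that $m$, $Q$, $a$ and $b$ are all meromorphic on $\C$, with poles contained in $\{\lambda_n\}$. The naive coefficients therefore inherit meromorphicity but, in general, fail to be analytic on $\reals$.

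To force analyticity on $\reals$ without altering the quotient, I would introduce an entire function $\Delta(z)$ whose simple zeros are exactly the points of $\{\lambda_n\}$ -- such $\Delta$ exists by the Weierstrass factorization theorem, since $\{\lambda_n\}$ has no finite accumulation point -- and then set $A\defeq\Delta^2 A_0$, $B\defeq\Delta^2 B_0$, $C\defeq\Delta^2 C_0$, $D\defeq\Delta^2 D_0$. The spectral expansion $R_\infty(z)=\sum_n(\lambda_n-z)^{-1}\inner{\phi_n}{\,\cdot\,}\phi_n$ together with \eqref{eq:psi-definition} implies that $m$, $Q$, $a$ and $b$ each have only simple poles at every $\lambda_n$, so $\Delta^2 m$, $\Delta^2 Q$ and $\Delta^2 = B$ are analytic across $\reals$ (a double zero cancels a simple pole). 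The only delicate case is $\Delta^2(mQ-ab)$: since $mQ$ and $ab$ are each products of two functions with at worst a simple pole at $\lambda_n$, their difference has at worst a double pole there, which the double zero of $\Delta^2$ cancels exactly. The main obstacle I anticipate is therefore no more than this residue bookkeeping, which is elementary given the explicit spectral formulas for $R_\infty$, $\psi(z)$ and $Q(z)$; the spectral-gauge hypothesis enters only implicitly, via Lemma~\ref{lem:cyclic-vector}, in guaranteeing that $\{\lambda_n\}$ is precisely the pole set of $m$ so that the prescribed zeros of $\Delta$ are the correct ones.
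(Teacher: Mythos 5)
Your proof is correct, but it takes a genuinely different route from the paper's. The paper obtains the formula by dividing numerator and denominator by $\inner{\psi(\cc{z})}{\kappa}$, which produces the specific coefficients \eqref{eq:a-functions}--\eqref{eq:d-functions}; there the spectral-gauge hypothesis is essential, because analyticity of $A,B,D$ on $\reals$ rests on the set \eqref{eq:bad-set} missing the real line (so $\inner{\psi(\cc{z})}{\kappa}$ does not vanish on $\reals$ and its simple poles cancel those coming from $R_\infty$), while analyticity of $C$ needs an extra argument imported from Gorbachuk--Gorbachuk. You instead clear the real poles by multiplying the naive coefficients $-m$, $1$, $mQ-ab$, $-Q$ by $\Delta^2$, with $\Delta$ entire and having simple zeros exactly on $\spec(A_\infty)$; since $m$, $Q$, $a$, $b$ have at worst simple poles there, your coefficients are in fact entire, independent of $\tau$, and the quotient is unchanged because $\Delta$ has no nonreal zeros, so the identity persists where $R_\tau$ is defined. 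This is more elementary and self-contained---indeed it does not actually use the gauge hypothesis at all, so your closing remark that the gauge enters via Lemma~\ref{lem:cyclic-vector} is not needed: even if some residues of $m$ vanished, $\Delta^2 m$ would still be entire. What the paper's normalization buys is that its $A,B,C,D$ are the canonical ones: by Remark~\ref{rem:simon-akhiezer-functions} they reduce, for Jacobi operators, to the entries of the classical Nevanlinna matrix, and that is the form exploited when Theorem~\ref{thm:extremality-of-measures} is obtained by following Simon's argument; your $\Delta^2$-rescaled coefficients prove the theorem exactly as stated, but they differ from the paper's by the factor $\Delta^2\inner{\psi(\cc{z})}{\kappa}$ and do not carry that normalization.
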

 \begin{proof}
   Reasoning as in \cite[Ch.~42, Sec.~7.1]{MR1466698}, one obtains
   the stated formula by inserting the right-hand side expression 
   from \eqref{eq:generalized-resolvents-parametrized} into
   $\inner{\kappa}{R_\tau(z)\kappa}$ and considering the functions:
   \begin{align}
     \label{eq:a-functions}
     A(z)&\defeq-\frac{\inner{\kappa}{R_\infty(z)\kappa}}{\inner{\psi(\cc{z})}{\kappa}}\\
     \label{eq:b-functions}
     B(z)&\defeq\frac{1}{\inner{\psi(\cc{z})}{\kappa}}\\[1mm] \label{eq:c-functions}
     C(z)&\defeq-Q(z)A(z) - \inner{\kappa}{\psi(z)}\\[3mm] \label{eq:d-functions}
     D(z)&\defeq-Q(z)B(z)\,,
   \end{align}
   where $\psi$ is given in \eqref{eq:psi-definition}.
   One verifies from their
   definition that $A(z),B(z),D(z)$ can be analytically extended to
   any point $z$ outside the set given in \eqref{eq:bad-set}. The
   function $C(z)$ has a priori poles in $\spec(A_\infty)$, but
   by an argument along the lines of \cite[Ch.~2 Secs.~7.3 and 7.4]{MR1466698}, 
   one  concludes that $C(z)$ can be analytically extended to
   any point $z$ outside the set given in \eqref{eq:bad-set}.
 \end{proof}
 \begin{remark}
   \label{rem:simon-akhiezer-functions}
   For Jacobi operators, the functions
   $A(z),B(z),C(z),D(z)$ coincide with the ones given in
   \cite[Eq.~2.28]{MR0184042} and \cite[Thm.~4.9]{MR1627806}. Note that, for the
   Herglotz function $-1/\tau$, \cite[Eq.~4.36]{MR1627806} is indeed an expression
   for $\inner{\kappa}{R_{-1/\tau}(z)\kappa}$.
 \end{remark}

 Reasoning along the lines of the proof of
 \cite[Thm.~4.17]{MR1627806}, one obtains the following assertion from
 Theorem~\ref{thm:formula-rational-p-q}.

\begin{theorem}
  \label{thm:extremality-of-measures}
  Let $\kappa$ be a spectral gauge of $A\in\ournewclass$ and $\lambda$ a real number. If
  $\sigma\in\mathcal{V}_\kappa(A)\setminus\mathcal{V}_\kappa^{\rm ext}(A)$ is such that
  $\sigma\{\lambda\}>0 $, then there exists
  $\widetilde{\sigma}\in\mathcal{V}_\kappa^{\rm ext}(A)$ such that
  $\widetilde{\sigma}\{\lambda\}>\sigma\{\lambda\}$.
\end{theorem}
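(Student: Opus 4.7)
My plan is to pass from $\sigma$ to its Herglotz parameter $\tau\in\mathfrak{R}$ via Corollary~\ref{cor:bijection-herglotz-generalized-resolutions} (non-constant because $\sigma\notin\mathcal{V}^{\rm ext}_\kappa(A)$), to identify the candidate $\widetilde\sigma$ explicitly as the extremal measure produced by a certain real constant $\tau\equiv c$, and to compare the two atoms through a residue computation based on Theorem~\ref{thm:formula-rational-p-q}. To set this up, by Proposition~\ref{prop:properties-of-new-class} one may choose the reference canonical extension $A_\infty$ entering Theorem~\ref{thm:formula-rational-p-q} so that $\lambda\notin\Sp(A_\infty)$; then $A(z),B(z),C(z),D(z),Q(z)$ are analytic at $\lambda$, and $B(\lambda)=1/\inner{\psi(\lambda)}{\kappa}\neq 0$ as a consequence of the spectral-gauge property $\kappa\notin\ran(A-\lambda I)$.

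Stieltjes inversion applied to \eqref{eq:generalized-resolvent-measure} and Theorem~\ref{thm:formula-rational-p-q} yields
\begin{equation*}
\sigma\{\lambda\}=-\lim_{\epsilon\to 0^+}i\epsilon\cdot\frac{C(\lambda+i\epsilon)-\tau(\lambda+i\epsilon)A(\lambda+i\epsilon)}{\tau(\lambda+i\epsilon)B(\lambda+i\epsilon)-D(\lambda+i\epsilon)}.
\end{equation*}
The hypothesis $\sigma\{\lambda\}>0$, together with $B(\lambda)\neq 0$, forces the denominator to vanish in the limit, whence $\tau(\lambda+i\epsilon)\to c:=D(\lambda)/B(\lambda)$. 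Since $D=-QB$ by \eqref{eq:d-functions}, one gets $c=-Q(\lambda)\in\reals$. Let $\widetilde\sigma\in\mathcal{V}^{\rm ext}_\kappa(A)$ be the extremal measure corresponding to $\tau\equiv c$. Writing $\tau(\lambda+i\epsilon)=c+h(\epsilon)$ with $h(\epsilon)\to 0$, the factorisation $\tau B-D=(\tau+Q)B$ together with the cancellation $c+Q(\lambda)=0$ gives
\begin{equation*}
\tau(\lambda+i\epsilon)B(\lambda+i\epsilon)-D(\lambda+i\epsilon)=i\epsilon\bigl(Q'(\lambda)+h(\epsilon)/(i\epsilon)\bigr)B(\lambda)+o(\epsilon);
\end{equation*}
substituting this in the previous Stieltjes expression, taking the limit, and specialising to $\tau\equiv c$ produces
\begin{equation*}
\widetilde\sigma\{\lambda\}=\frac{cA(\lambda)-C(\lambda)}{Q'(\lambda)B(\lambda)},\qquad
\sigma\{\lambda\}=\frac{cA(\lambda)-C(\lambda)}{(Q'(\lambda)+\beta)B(\lambda)},
\end{equation*}
where $\beta:=\lim_{\epsilon\to 0^+}h(\epsilon)/(i\epsilon)$.

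The Herglotz representation of the non-constant $\tau$ then shows that $\beta$ is real and strictly positive, while the same tool applied to the non-constant $Q$-function gives $Q'(\lambda)>0$. Hence $\sigma\{\lambda\}=\widetilde\sigma\{\lambda\}\cdot Q'(\lambda)/(Q'(\lambda)+\beta)<\widetilde\sigma\{\lambda\}$, which is the claim. The point that requires most care is the existence of a finite, real, positive angular derivative $\beta$: a generic Herglotz function need not admit a non-tangential limit at a prescribed real point, so here one must exploit the prescribed pole structure of $\inner{\kappa}{R_\tau(z)\kappa}$ at $\lambda$ (a simple pole with residue $-\sigma\{\lambda\}$, forced by the atom $\sigma\{\lambda\}>0$) combined with the analyticity of $A,B,C,D,Q$ at $\lambda$ and $B(\lambda)\neq 0$ to pin down the precise asymptotics $\tau(\lambda+i\epsilon)=c+\beta\cdot i\epsilon+o(\epsilon)$ with $\beta$ real and positive.
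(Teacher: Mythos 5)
Your proposal is correct and follows essentially the same route as the paper, whose proof consists precisely in running the argument of Simon's Theorem~4.17 for the classical moment problem on the basis of Theorem~\ref{thm:formula-rational-p-q}: Stieltjes inversion of $\inner{\kappa}{R_\tau(z)\kappa}$, identification of the constant parameter $c=-Q(\lambda)$, and comparison of the two atoms via the Herglotz properties of $\tau$ and $Q$. The delicate point you flag does go through, since $cA(\lambda)-C(\lambda)=\inner{\kappa}{\psi(\lambda)}\neq 0$ by the gauge property, which together with the existence of the atom limit forces $\beta=\lim_{\epsilon\to0^+}\bigl(\tau(\lambda+i\epsilon)-c\bigr)/(i\epsilon)$ to exist, be real, and be strictly positive for nonconstant $\tau$.
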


\section{A generalization of the moment problem}
\label{sec:gener-moment-probl}

There are various ways of generalizing the moment problem. We use the
generalization proposed by Livsi\v{c} \cite{MR0011171} (see also
\cite{MR1660000}). According to \cite{MR0011171}, the generalized
moment problem consist in finding a measure $\rho$ such that the inner
product in a Hilbert space of functions is expressed through the inner
product in $L_2(\reals,\rho)$. Our choice for the Hilbert space of
functions is a de Branges space. In \cite[Chap.\,6]{remling-book}, the
same choice is done for the generalized moment problem and the
treatment is made by means of canonical systems whose theory contains
the one of de Branges spaces (see \cite{roma-mono}). In contrast to
\cite[Chap.\,6]{remling-book}, our final goal is the extremal
properties of measures.

\begin{definition}
\label{def:axiomatic-db}
A Hilbert space of entire functions $\cB$ is a de Branges space if and only
if, for any function $f(z)$ in $\cB$, the following conditions holds:
\begin{enumerate}[({A}1)]
\item For all $w\in\C$, the linear functional
        $f(\cdot)\mapsto f(w)$  is continuous;

\item for every non-real zero $w$ of $f(z)$, the function
        $f(z)(z-\cc{w})(z-w)^{-1}$ belongs to $\cB$
        and has the same norm as $f(z)$;

\item the function $f^\#(z):=\cc{f(\cc{z})}$ also belongs to $\cB$
        and has the same norm as $f(z)$.
\end{enumerate}
\end{definition}
Alternatively, a de Branges space can be defined in terms of an Hermite-Biehler function, 
that is, an entire function $e(z)$ such that $\abs{e(z)}>\abs{e(\cc{z})}$ for all 
$z\in\C_+$ \cite[Ch.~7]{levin}. In that case, the de Branges space is sometimes denoted 
$\cB(e)$ 
and one has 
\begin{equation}
  \label{eq:standard-inner-db}
  \inner{f}{g}_{\mathcal B}=\int_{\reals}\frac{\cc{f(x)}g(x)}{\abs{e(x)}^2} dx\,.
\end{equation}
(see \cite[Sec.~19]{MR0229011}, \cite[Thm.~2.2]{MR1943095},
\cite{zbMATH06526214}). The r.\,h.\,s. of \eqref{eq:standard-inner-db}
is justified by the fact that if $x$ is a real zero
of  $e(z)$, then $x$ is a zero of greater multiplicity for any
function in $\mathcal{B}(e)$. We note that a given de Branges space can be obtained
by different Hermite-Biehler functions \cite[Thm.~1]{MR0133455}.

\begin{definition}\label{def:multiplication-operator}
Given a de Branges space $\cB$, the operator $S$ of multiplication by the independent
variable is defined by
\begin{gather*}
\dom(S) = \{f(z)\in\cB: zf(z)\in\cB\};\quad
(Sf)(z)=zf(z),\quad f(z)\in\dom(S).
\end{gather*}
\end{definition}

It is noteworthy that the operator $S$ is in $\mathscr{S}(\cB)$
\cite[Prop.~4.2, Cor.~4.7]{MR1664343}. Moreover, to any
operator $A$ in the class $\ournewclass$, there corresponds a de
Branges space such that the operator of multiplication in it is
unitarily equivalent to $A$. To elucidate this, we introduced below
the so-called functional model for operators in $\ournewclass$.

An involution (conjugation) $J:\cH\to\cH$ is an antilinear map satisfying
\begin{equation*}
  J^2=I\quad\text{ and }\quad\inner{J\phi}{J\psi}=\inner{\psi}{\phi}
\end{equation*}
for all $\phi,\psi\in\cH$ \cite[Eq.~8.1]{MR566954}.  Fix an operator
$A\in\ournewclass$ and let $J$ be an involution that commutes with the
selfadjoint extensions of $A$ (there is always such involution because
of \cite[Prop.~2.3]{MR3002855}). Consider a function $\xi_A:\C\to\cH$
such that
\begin{enumerate}[({P}1)]
\item $\xi_A(z)$ is zero-free ($\xi_A(z)\ne0$ for all $z\in\complex$) and 
entire,\label{p1}
\item $\xi_A(z)\in\ker(A^*-zI)$ for all $z\in\C$, and
\item $J\xi_A(z)=\xi_A(\cc{z})$ for every $z\in\C$.\label{p3}
\end{enumerate}

Since $A\in\ournewclass$, one has $\dim\ker(A^*-zI)=1$ for all
$z\in\C$. Using this fact, one can prove the following assertion (see
\cite[Prop.~2.12 and Remark 2.13]{MR3002855}).
\begin{lemma}
  \label{lem:xi-up-to-function}
  If $\xi_A^{(1)}:\C\to\cH$ and $\xi_A^{(2)}:\C\to\cH$ are two functions
  satisfying (P1),(P2), and (P3), then there exists a zero-free real
  entire function $g(z)$ such that $\xi_A^{(1)}(z)=g(z)\xi_A^{(2)}(z)$.
\end{lemma}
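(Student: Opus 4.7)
The plan is to extract $g(z)$ pointwise from the one-dimensional defect subspaces and then verify the three required features: well-definedness as a scalar function, entireness, being zero-free, and reality. Since $A\in\ournewclass$, $\dim\ker(A^*-zI)=1$ for every $z\in\C$; by (P2), both $\xi_A^{(1)}(z)$ and $\xi_A^{(2)}(z)$ lie in this one-dimensional subspace, and by (P1) both are nonzero. Hence, for each $z\in\C$, there exists a unique scalar $g(z)\in\complex\setminus\{0\}$ with $\xi_A^{(1)}(z)=g(z)\xi_A^{(2)}(z)$. This already settles well-definedness and the zero-free property.

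To show entireness, I would fix $z_0\in\C$ and pick $\eta\in\cH$ with $\inner{\eta}{\xi_A^{(2)}(z_0)}\ne 0$; such an $\eta$ exists since $\xi_A^{(2)}(z_0)\ne 0$. The scalar functions $z\mapsto\inner{\eta}{\xi_A^{(1)}(z)}$ and $z\mapsto\inner{\eta}{\xi_A^{(2)}(z)}$ are entire by (P1), and the second one is nonzero on an open neighborhood $U$ of $z_0$ by continuity. On $U$, the identity $\xi_A^{(1)}(z)=g(z)\xi_A^{(2)}(z)$ yields
\begin{equation*}
g(z)=\frac{\inner{\eta}{\xi_A^{(1)}(z)}}{\inner{\eta}{\xi_A^{(2)}(z)}},
\end{equation*}
so $g$ is analytic on $U$. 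Since $z_0$ was arbitrary and the pointwise definition of $g(z)$ does not depend on the auxiliary $\eta$, these local analytic representatives patch together into an entire function on $\C$.

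For the reality statement, I would exploit (P3) together with the antilinearity of $J$ and the defining relation. Applying $J$ to $\xi_A^{(1)}(z)=g(z)\xi_A^{(2)}(z)$ and using (P3) on both sides gives
\begin{equation*}
\xi_A^{(1)}(\cc{z})=J\xi_A^{(1)}(z)=\cc{g(z)}\,J\xi_A^{(2)}(z)=\cc{g(z)}\,\xi_A^{(2)}(\cc{z}).
\end{equation*}
Comparing with $\xi_A^{(1)}(\cc{z})=g(\cc{z})\xi_A^{(2)}(\cc{z})$ and using $\xi_A^{(2)}(\cc{z})\ne 0$ yields $g(\cc{z})=\cc{g(z)}$ for all $z\in\C$, i.e.\ $g$ is a real entire function.

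I do not anticipate a genuinely hard step: the argument is a direct consequence of the one-dimensionality of the defect spaces and properties (P1)--(P3). The only point requiring mild care is confirming that $g$ is globally analytic rather than merely weakly analytic in each local chart; the patching via arbitrary $\eta$ handles this cleanly, since the scalar $g(z)$ depends only on the vectors $\xi_A^{(1)}(z)$ and $\xi_A^{(2)}(z)$, not on the probing functional chosen to compute it.
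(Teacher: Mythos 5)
Your proof is correct and follows exactly the route the paper indicates: it rests on the one-dimensionality of $\ker(A^*-zI)$ for $A\in\ournewclass$, with (P1) giving the zero-free scalar $g(z)$ and local analyticity via pairing against a test vector, and (P3) with the antilinearity of $J$ giving reality. The paper itself delegates these details to \cite[Prop.~2.12 and Remark~2.13]{MR3002855}, and your argument supplies them in the same spirit, so there is nothing to add.
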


There is a standard way of constructing a function $\xi_A(z)$ with the desired
properties: Pick a canonical
selfadjoint extension $A_\gamma$ of $A\in\ournewclass$ and let
$h_\gamma(z)$ be a real entire function whose zero set (counting
multiplicities) equals $\spec(A_\gamma)$; the existence of this
function follows from classical theorems on entire functions, see 
\cite[Ch.~7 Sec.~2]{MR0181738}. Then, define
\begin{equation}
  \label{eq:one-def-xi}
  \xi_A(z)\defeq h_\gamma(z)V_{A_\gamma}(w,z)\psi_w\,,
\end{equation}
where $w\in\R\setminus\spec(A_\gamma)$, $\psi_w\in\ker(A^*-wI)\setminus\{0\}$ and 
$V_{A_\gamma}(w,z)$ is given by \eqref{eq:generalized-cayley}. It can be shown 
that \eqref{eq:one-def-xi} obeys (P1), (P2) and (P3), the latter relative to a suitable
involution \cite[Props.~2.3 and 2.11]{MR3002855}.
Note that Lemma~\ref{lem:xi-up-to-function} implies that \eqref{eq:one-def-xi}
does not depend on the choice of the selfadjoint extension $A_\gamma$
nor on $w$ and, furthermore, every function $\xi_A:\C\to\cH$ can be
written in the form of \eqref{eq:one-def-xi}.

Fix $A\in\ournewclass$ and any function $\xi_A:\C\to\cH$ satisfying
(P1), (P2), and (P3). Then define the map
\begin{equation}
\label{eq:defining-phi}
\left(\Phi_A\varphi\right)(z)\defeq\inner{\xi_A(\cc{z})}{\varphi},\quad
	\varphi\in\cH.
\end{equation}
Due to (P1), $\Phi_A$ linearly maps $\cH$ onto a linear manifold
$\Phi_A\cH$ of entire functions. By \cite[Sec.~1]{MR1660000}, the
complete nonselfadjointness condition
\begin{equation*}
  \bigcap_{z\in\complex\setminus\reals}\ran(A-zI)=\{0\}
\end{equation*}
(see \cite[Chap.~1 Thm.~2.1]{MR1466698}) implies that 
$\Phi_A$ is injective \cite{zbMATH06526214}.  Clearly, the linear
space $\Phi_A\cH$ is turned into a Hilbert space by defining
\begin{equation}
  \label{eq:inner-def-functional-model}
  \inner{\Phi_A\eta}{\Phi_A\varphi}\defeq
\inner{\eta}{\varphi}\,.
\end{equation}
The resulting Hilbert space is a de Branges space which is henceforth denoted by
$\mathcal{B}_A$ \cite[Prop.~2.14]{MR3002855}.
\begin{remark}
  \label{rem:unitary-equivalen-A-S}
  If $A\in\ournewclass$, then $\Phi_A A\Phi_A^{-1}$ is the
  multiplication operator $S$ in $\mathcal{B}_A$. For any (canonical) selfadjoint
  extension $A_\gamma$ of $A$,  $\Phi_A A_\gamma\Phi_A^{-1}$ is a (canonical) selfadjoint
  extension of the multiplication operator in  $\mathcal{B}_A$. Moreover,
  if $\kappa$ is a spectral gauge of $A$ and $m(z)=(\Phi_A\kappa)(z)$, it can be proven 
  from
  that
  \begin{equation*}
    \mathcal{V}_\kappa(A)= \mathcal{V}_m(S)\quad\text{ and }\quad
    \mathcal{V}_\kappa^{\rm ext}(A)= \mathcal{V}_m^{\rm ext}(S)\,.
  \end{equation*}
Note that, if $F$ is a generalized spectral family of $A$, then $\Phi_A F\Phi_A^{-1}$
is a generalized spectral family of $S$.
\end{remark}

The reproducing kernel in the de Branges space $\mathcal{B}_A$ is
\[
k(z,w) = \inner{\xi_A(\cc{z})}{\xi_A(\cc{w})}\,.
\]
Note that $k(z,w)$ is anti-entire with respect to its second argument.

\begin{remark}
\label{rem:this-is-a-mess}
It is not difficult to verify that $k(\cdot,w)\in\ker(S^*-\cc{w}I)$ for all $w\in\C$
and moreover
\[
k(z,w) = h(\cc{w})V_{S_\gamma}(\cc{v},\cc{w})k(z,v),
\]
where $S_\gamma$ is a canonical selfadjoint extension of $S$ given in
Definition~\ref{def:multiplication-operator} and $h(w)$ is a real entire
function having zeros at $\spec(S_\gamma)$ (\cf~\eqref{eq:psi-definition} and
\eqref{eq:one-def-xi}).
\end{remark}

\begin{lemma}
  \label{lem:good-gauge-zero-free-in-r}
  If $\kappa$ is a spectral gauge of $A\in\ournewclass$, then the
  corresponding element $m$ in $\mathcal{B}_A$ 
  (given by Remark~\ref{rem:unitary-equivalen-A-S}) has no zeros in
  $\reals$. Moreover, if the function $m$ in the de Branges space $\cB$ is
  such that $m(x)\ne 0$ for
  all $x\in\reals$, then $m$ is a spectral gauge of the operator of
  multiplication $S$ in $\cB$.
\end{lemma}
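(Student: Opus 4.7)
The plan is to prove the two directions separately, leveraging the concrete formula $m(z) = (\Phi_A \kappa)(z) = \inner{\xi_A(\cc z)}{\kappa}$ afforded by the functional model \eqref{eq:defining-phi}.

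For the first implication I would argue by contradiction. Suppose $m(x) = 0$ for some $x \in \reals$. Then $\kappa$ is orthogonal to the nonzero vector $\xi_A(x) \in \ker(A^* - xI)$, and since $A$ is regular with deficiency indices $(1,1)$ one has $\dim\ker(A^* - xI) = 1$ (noted right after Definition~\ref{def:regularity}). Hence $\kappa$ is orthogonal to the whole defect subspace. Combining the standard identity $\ker(A^* - xI) = (\ran(A - xI))^\perp$ — valid also when $A$ is not densely defined, once $A^*$ is read as a linear relation — with the closedness of $\ran(A - xI)$ (a consequence of regularity), I would conclude $\kappa \in \ran(A - xI)$, contradicting the defining property of a spectral gauge, namely that the set \eqref{eq:bad-set} does not meet $\reals$.

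For the converse, I would verify the two conditions in Definition~\ref{def:good-gauge} directly for $m$ and the multiplication operator $S$ in $\cB$. The key observation is that $\cB$ is a reproducing kernel Hilbert space of entire functions, so if $m \in \ran(S - zI)$, i.e.\ $m(t) = (t - z) g(t)$ as elements of $\cB$, then evaluation at $z$ forces $m(z) = 0$. Consequently the bad set $\{z \in \C : m \in \ran(S - zI)\}$ is contained in the zero set of $m$, and by hypothesis it does not intersect $\reals$. Since $m$ is a nonzero entire function its zeros are discrete, so I can choose some $z_0 \in \C$ with $m(z_0) \ne 0$; then $m \notin \ran(S - z_0 I)$, and because $S \in \mathscr{S}(\cB)$ has deficiency indices $(1,1)$ so that $\ran(S - z_0 I)$ is of codimension one in $\cB$, the decomposition $\cB = \ran(S - z_0 I) \dotplus \Span\{m\}$ follows, establishing \eqref{eq:decomposition-gauge} for $m$.

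The only subtlety I anticipate is the identity $\ker(A^* - xI) = (\ran(A - xI))^\perp$ in the possibly non-densely-defined setting; this amounts to a short direct check from the definition of the adjoint as a linear relation. Everything else reduces to the reproducing kernel property of $\cB$ and the codimension count furnished by the deficiency indices being one.
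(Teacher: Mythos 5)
Your proposal is correct and follows essentially the same route as the paper: the first part is exactly the paper's observation that, by the definition of a spectral gauge, $m(x)=\inner{\xi_A(\cc{x})}{\kappa}\ne 0$ for real $x$ (you merely spell out the identity $\ker(A^*-xI)=\ran(A-xI)^\perp$ and the closedness of the range coming from regularity), and the second part is the paper's remark that $k(\cdot,w)\in\ker(S^*-\cc{w}I)$, stated in the equivalent form that $m\in\ran(S-zI)$ forces $m(z)=0$ by point evaluation, together with the codimension-one count giving \eqref{eq:decomposition-gauge}. No gaps; your version just makes explicit the details the paper leaves implicit.
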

\begin{proof}
By our definition of spectral gauge,
  \begin{equation*}
    \inner{\xi(\cc{z})}{\kappa}\ne 0
  \end{equation*}
for any $z\in\reals$. The second part of the statement follows from
noticing that $k(\cdot,w)\in\ker(S^*-\cc{w}I)$ for all $w\in\complex$.
\end{proof}
The following definition coincides with
\cite[Def.\,6.1]{remling-book}. We came to it motivated
by a generalization of the moment problem discussed in
\cite{MR0011171} (see also \cite{MR1660000}). Several results in
\cite{MR0229011} pertain to this definition.
\begin{definition}[Generalized moment problem]
  \label{def:generalized-moment-problem}
  Given a de Branges space $\mathcal{B}$, find a Borel measure $\rho$ such that
  \begin{equation*}
    \inner{g}{f}_{\mathcal{B}}=\int_\reals f(x)\cc{g(x)}d\rho(x)
  \end{equation*}
  for every $f,g\in\mathcal{B}$.
\end{definition}
As \eqref{eq:standard-inner-db} shows, the Lebesgue measure
multiplied by $1/\abs{e(\cdot)}^2$ is always a solution to the
generalized moment problem for $\mathcal{B}(e)$. Below, it will be
established that there are other solutions to this generalized moment
problem.

\begin{theorem}
  \label{thm:generalized-moment-problem-operator-measure}
  Let $S$ be the operator of multiplication by the independent
  variable in a de Branges space $\mathcal{B}$ and $m$ be a function in
  $\mathcal{B}$ not vanishing in $\reals$. If $\rho$ is a solution to
  the generalized moment problem for $\mathcal{B}$, then
  \begin{equation*}
  \sigma(\partial)\defeq\int_\partial \abs{m(x)}^2d\rho(x),
  \quad \partial\in\borel,
\end{equation*}
belongs to $\mathcal{V}_m(S)$.
\end{theorem}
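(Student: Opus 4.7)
The plan is to construct an ambient Hilbert space, together with a selfadjoint operator on it, that realize $\sigma$ as a generalized spectral measure of $S$ associated with $m$. The natural candidate is the pair $(L_{2}(\reals,\rho),M)$, where $M$ is multiplication by the independent variable, with $\mathcal{B}$ isometrically embedded into $L_{2}(\reals,\rho)$ by restriction to the real line.

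First I would define $U\colon\mathcal{B}\to L_{2}(\reals,\rho)$ by $(Uf)(x)\defeq f(x)$; this makes sense because entire functions are Borel measurable on $\reals$ and, taking $f=g$ in Definition~\ref{def:generalized-moment-problem}, one has $\int_{\reals}\abs{f(x)}^{2}d\rho(x)=\norm{f}_{\mathcal{B}}^{2}<\infty$. The polarization of that same identity shows $\inner{Ug}{Uf}_{L_{2}(\rho)}=\inner{g}{f}_{\mathcal{B}}$, so $U$ is isometric; its range $U\mathcal{B}$ is therefore a closed subspace of $L_{2}(\reals,\rho)$. Identifying $\mathcal{H}\defeq\mathcal{B}$ with $U\mathcal{B}\subset\mathcal{H}^{+}\defeq L_{2}(\reals,\rho)$, let $P^{+}$ denote the corresponding orthogonal projection.

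Now let $M$ be multiplication by the independent variable on its natural selfadjoint domain $\{f\in L_{2}(\rho):xf(x)\in L_{2}(\rho)\}$. For any $f\in\dom(S)$ both $f$ and $zf$ lie in $\mathcal{B}$, so $Uf\in\dom(M)$ and $M(Uf)=xf(x)=U(Sf)$; hence, under the identification above, $M$ is a selfadjoint extension of $S$ in the sense of Section~\ref{sec:gener-resolv}. The spectral family of $M$ is $E_{M}(\partial)=\chi_{\partial}\cdot$, so the generalized spectral family of $S$ it induces is $F_{M}(\partial)=P^{+}E_{M}(\partial)\eval{\mathcal{H}}$. Because $m\in\mathcal{B}$ corresponds to $Um\in U\mathcal{B}$, applying the definition \eqref{eq:generalized-spectral-measure} gives
\begin{equation*}
\sigma_{M,m}(\partial)=\inner{m}{F_{M}(\partial)m}_{\mathcal{H}}
=\inner{Um}{E_{M}(\partial)Um}_{L_{2}(\rho)}
=\int_{\partial}\abs{m(x)}^{2}d\rho(x)=\sigma(\partial).
\end{equation*}
Since Lemma~\ref{lem:good-gauge-zero-free-in-r} guarantees that $m$ is a spectral gauge for $S$, the set $\mathcal{V}_{m}(S)$ is well-defined and $\sigma$ belongs to it.

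No step is genuinely delicate: the only points requiring verification are that $U$ is well defined and isometric (immediate from Definition~\ref{def:generalized-moment-problem}) and that $M\eval{U\dom(S)}$ really agrees with $USU^{-1}$ (a direct computation). Everything else is a direct unwinding of the definition of generalized spectral measure.
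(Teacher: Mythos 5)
Your argument is correct, but it proceeds differently from the paper's proof. The paper never constructs the ambient space: it works intrinsically in $\cB$, defining for each $t\in\reals$ the bounded sesquilinear form $\Upsilon_t(f,g)=\int_{-\infty}^t f(x)\cc{g(x)}\,d\rho(x)$, representing it by a bounded operator $F(t)$ on $\cB$, and then verifying conditions (i)--(iv) of Proposition~\ref{prop:spectral-family-function} (the Na\u{\i}mark-type characterization) to conclude that $F$ is a generalized spectral family of $S$; the measure $\sigma(\partial)=\inner{m}{F(\partial)m}$ then lies in $\mathcal{V}_m(S)$ because $m$ is a spectral gauge by Lemma~\ref{lem:good-gauge-zero-free-in-r}. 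You instead exhibit the dilation explicitly: embed $\cB$ isometrically into $L_2(\reals,\rho)$ by restriction (isometric precisely because $\rho$ solves the generalized moment problem), check that maximal multiplication $M$ extends $S$ under this identification, and compute the compressed spectral measure $\inner{Um}{E_M(\partial)Um}=\int_\partial\abs{m}^2d\rho$ directly, finishing with the same appeal to Lemma~\ref{lem:good-gauge-zero-free-in-r}. Both routes are sound; yours is more concrete and has the added benefit of making visible the extension whose minimality/surjectivity question reappears in Theorem~\ref{thm:density-of-functions} (where extremality of $\sigma$ corresponds to your $U$ being onto), while the paper's form-based construction avoids any discussion of the ambient $L_2(\reals,\rho)$ (for instance, of whether $M$ is a genuine selfadjoint operator there) and produces the family $F$ in exactly the intrinsic form used in the subsequent Lemma~\ref{lem:gorby-in-dB} and Theorem~\ref{thm:operator-measure-generalized-moment-problem}. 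One small point you could have flagged: for $L_2(\reals,\rho)$ and the selfadjointness of $M$ to be unproblematic one wants $\rho$ locally finite, which here follows from the hypothesis that $m\in\cB$ has no real zeros (so $\rho$ is finite on compacts, being dominated there by $\int\abs{m}^2d\rho<\infty$ times a constant).
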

\begin{proof}
 Fix $t\in\reals$ and define
 \begin{equation*}
   \Upsilon_t(f,g):=\int_{-\infty}^tf(x)\cc{g(x)}d\rho(x)
 \end{equation*}
which is a sesquilinear form in $\cB\times\cB$. This form is bounded due to
the inequality
\begin{equation*}
  \Upsilon_t(f,f)\le \int_\reals\abs{f(x)}^2d\rho(x)=\norm{f}^2
\end{equation*}
along with the polarization identity. By \cite[Ch.~2 Sec.~4 Thm.~6]{MR1192782} 
and well-known results on sesquilinear forms \cite[Sec.~II.2]{MR751959}, 
there is a bounded operator $F(t)$ defined on the whole space $\mathcal{B}$ such that
\begin{equation*}
  \inner{g}{F(t)f}=\Upsilon_t(f,g)\,.
\end{equation*}
It turns out that $F(t)$ corresponds to a generalized spectral family
in the sense of
Proposition~\ref{prop:spectral-family-function}. Indeed,
\eqref{1res-identity}--\eqref{3res-identity} are verified directly
from the definition. As regards \eqref{4res-identity}, one verifies
\begin{equation*}
  \int_\reals td\inner{g}{F(t)f}=\int_\reals tf(t)\cc{g(t)}d\rho(t)=\inner{g}{Sf}_{\cB}
\end{equation*}
and
\begin{equation*}
   \int_\reals t^2d\inner{f}{F(t)f}=\int_\reals
   t^2f(t)\cc{f(t)}d\rho(t)=\inner{Sf}{Sf}_{\cB}.
 \end{equation*}
 Thus, for any real Borel set $\partial$,
 \begin{equation*}
   \mathcal{V}_m(S)\ni\inner{m}{F(\partial)m}=\int_\partial\abs{m(x)}^2d\rho(x)
 \end{equation*}
 since $m$ is a spectral gauge of $S$ by Lemma~\ref{lem:good-gauge-zero-free-in-r}.
\end{proof}
\begin{lemma}
  \label{lem:gorby-in-dB}
  Let $m\in\cB$ be such that $m(x)\ne 0$ when $x\in\reals$ and $a,b\in\reals$ such
  that $a<b$. If $F$ is a generalized spectral family of $S$, then
    \begin{equation*}
    (F([a,b])f)(z)=\int_a^b\frac{f(t)}{m(t)}dF(t)m(z)
  \end{equation*}
  for any $f\in\cB$.
\end{lemma}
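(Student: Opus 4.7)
The plan is to reduce the claim to Proposition~\ref{prop:krein-gorby} (Krein--Gorbachuk) applied to the multiplication operator $S$ on $\cB$, and then to exploit the reproducing kernel of the de Branges space to simplify the integrand. Since $m(x)\ne 0$ for every real $x$, Lemma~\ref{lem:good-gauge-zero-free-in-r} guarantees that $m$ is a spectral gauge of $S$, which is exactly what is needed for Proposition~\ref{prop:krein-gorby} to apply with $\kappa=m$.

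Next, I would fix a canonical selfadjoint extension $S_\infty$ of $S$ together with a nonzero element $\psi_0\in\ker(S^*-wI)$ for some $w\in\complex_+$, and set $\psi(t)\defeq V_{S_\infty}(w,t)\psi_0$ as in \eqref{eq:psi-definition}. By Proposition~\ref{prop:misc-about-symm-operator2}, $\psi(t)\in\ker(S^*-tI)$ for every real $t\notin\spec(S_\infty)$. On the other hand, Remark~\ref{rem:this-is-a-mess} tells us that $k(\cdot,t)\in\ker(S^*-tI)$ for real $t$, and $\dim\ker(S^*-tI)=1$ since $S\in\mathscr{S}(\cB)$. Hence there exists a scalar $\alpha(t)\in\complex$ with $\psi(t)=\alpha(t)\,k(\cdot,t)$. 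The reproducing property of $k$ then yields
\begin{equation*}
\inner{\psi(t)}{f}_{\cB}=\overline{\alpha(t)}\,f(t) \quad\text{and}\quad \inner{\psi(t)}{m}_{\cB}=\overline{\alpha(t)}\,m(t),
\end{equation*}
so the quotient $\inner{\psi(t)}{f}_{\cB}/\inner{\psi(t)}{m}_{\cB}$ appearing in the Krein--Gorbachuk formula collapses to $f(t)/m(t)$, which is well defined on the whole of $[a,b]$ because $m$ has no real zeros.

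Substituting this simplification into Proposition~\ref{prop:krein-gorby} applied to $S$ with spectral gauge $m$ yields
\begin{equation*}
F([a,b])f=\int_a^b\frac{f(t)}{m(t)}\,dF(t)\,m,
\end{equation*}
and evaluating both sides at the point $z\in\complex$ produces the identity stated in the lemma. The only minor subtlety is that $\psi(t)$ is a priori undefined on the (at most countable, discrete) set $\spec(S_\infty)$; however, the cancellation of $\alpha(t)$ already exhibits the integrand as the meromorphic function $f(t)/m(t)$, whose only possible real singularities would be the real zeros of $m$, which by hypothesis do not exist. I do not foresee any serious obstacle here; the argument is essentially a translation of Proposition~\ref{prop:krein-gorby} into the de Branges space language by means of the reproducing kernel.
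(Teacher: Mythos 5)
Your argument is correct and follows essentially the same route as the paper: both apply Proposition~\ref{prop:krein-gorby} to $S$ with spectral gauge $m$ (justified by Lemma~\ref{lem:good-gauge-zero-free-in-r}) and then identify $\psi(t)$ with a scalar multiple of the reproducing kernel $k(\cdot,t)$ via Remark~\ref{rem:this-is-a-mess}, so that the reproducing property collapses the integrand to $f(t)/m(t)$. The only difference is cosmetic: the paper simply writes the integrand with $k(\cdot,t)$ directly, while you make the proportionality factor $\alpha(t)$ and its cancellation explicit.
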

\begin{proof}
According to Remark~\ref{rem:this-is-a-mess}, since $m(z)$ is a spectral gauge
of $S$, the right-hand side of
\eqref{eq:integral-expression-for-family} in this case is
\begin{equation*}
  \int_a^b\frac{\inner{k(\cdot,t)}{f(\cdot)}_{\cB}}{\inner{k(\cdot,t)}{m(\cdot)}_{\cB}}dF(t)m(z)\,.
\end{equation*}
The assertion then follows from Proposition~\ref{prop:krein-gorby}.
\end{proof}

\begin{theorem}
  \label{thm:operator-measure-generalized-moment-problem}
  Let $S$ be the operator of multiplication by the independent
  variable in a de Branges space $\mathcal{B}$ and $m$ be a function in
  $\mathcal{B}$ not vanishing in $\reals$. If
  $\sigma\in\mathcal{V}_m(S)$, then
\begin{equation*}
  \rho(\partial):=\int_\partial \frac{d\sigma(x)}{\abs{m(x)}^2},
  \quad \partial\in\borel,
\end{equation*}
is a solution to the generalized moment problem for $\mathcal{B}$.
\end{theorem}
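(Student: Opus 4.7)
The plan is to apply Lemma~\ref{lem:gorby-in-dB} twice to express the truncated quadratic form $\inner{g}{F([a,b])f}_{\mathcal{B}}$ as a single integral against $\sigma$, and then pass to the limits $a\to-\infty$, $b\to+\infty$. By the second assertion in Lemma~\ref{lem:good-gauge-zero-free-in-r}, the hypothesis that $m$ has no real zeros ensures it is a spectral gauge of $S$, so Lemma~\ref{lem:gorby-in-dB} is available. The hypothesis $\sigma\in\mathcal{V}_m(S)$ supplies a selfadjoint extension $S^+\supseteq S$ acting in some $\mathcal{B}^+\supseteq\mathcal{B}$ whose generalized spectral family $F$ satisfies $\sigma(\partial)=\inner{m}{F(\partial)m}_{\mathcal{B}}$. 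A direct check using the orthogonal projection $P^+:\mathcal{B}^+\to\mathcal{B}$ and the selfadjointness of $E_{S^+}(t)$ in $\mathcal{B}^+$ then shows that each $F(t)$ is selfadjoint as a bounded operator on $\mathcal{B}$.

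For the core computation, fix $f,g\in\mathcal{B}$ and real $a<b$. Applying Lemma~\ref{lem:gorby-in-dB} first to $g$, pairing the result with $m$, and letting $a\to-\infty$ yields
\begin{equation*}
\inner{g}{F(t)m}_{\mathcal{B}}=\inner{F(t)g}{m}_{\mathcal{B}}=\int_{-\infty}^t \frac{\overline{g(u)}}{\overline{m(u)}}\,d\sigma(u),
\end{equation*}
the limit being legitimate since $\sigma$ is finite and $g/m$ is bounded on each compact subset of $\reals$ ($g$ being entire and $m$ continuous and nonvanishing on $\reals$). A second application of Lemma~\ref{lem:gorby-in-dB}, now to $f$ and paired with $g$, gives
\begin{equation*}
\inner{g}{F([a,b])f}_{\mathcal{B}}=\int_a^b \frac{f(t)}{m(t)}\,d\inner{g}{F(t)m}_{\mathcal{B}}=\int_a^b \frac{f(t)\overline{g(t)}}{|m(t)|^2}\,d\sigma(t),
\end{equation*}
where the last equality is the Lebesgue--Stieltjes substitution rule applied to the cumulative function displayed above, which is absolutely continuous with respect to $\sigma$ with density $\overline{g}/\overline{m}$.

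Finally I would take $a\to-\infty$ and $b\to+\infty$. The left-hand side converges to $\inner{g}{f}_{\mathcal{B}}$ by property~(iii) of Proposition~\ref{prop:spectral-family-function}. Specializing first to $f=g$ shows that the monotone integrals $\int_a^b |f|^2/|m|^2\,d\sigma$ are bounded above by $\|f\|_{\mathcal{B}}^2$, so $f\in L_2(\rho)$ with $\|f\|_{L_2(\rho)}\le\|f\|_{\mathcal{B}}$; Cauchy--Schwarz in $L_2(\rho)$ then gives absolute convergence of the right-hand side to $\int_\reals f\overline{g}\,d\rho$, establishing the identity. The step most in need of careful justification is the Stieltjes substitution rule in the iterated integral, particularly near atoms of $\sigma$; but this causes no real difficulty because the integrands $f/m$ and $\overline{g}/\overline{m}$ are continuous at every real point, so the jumps of the inner cumulative function transfer cleanly.
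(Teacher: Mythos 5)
Your proof follows essentially the same route as the paper's: both apply Lemma~\ref{lem:gorby-in-dB} twice (once paired with $m$ to identify $d\inner{g}{F(t)m}_{\cB}$ as $\cc{g}/\cc{m}\,d\sigma$, once paired with $g$), combine the two Stieltjes integrals to obtain $\inner{g}{F([a,b])f}_{\cB}=\int_a^b f(t)\cc{g(t)}\,d\rho(t)$, and then let $a\to-\infty$, $b\to+\infty$. Your version merely spells out details the paper leaves implicit (selfadjointness of $F(t)$, the $L_2(\rho)$ bound and Cauchy--Schwarz for the limit), so it is correct and not a genuinely different argument.
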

\begin{proof}
 By the hypothesis and Lemma~\ref{lem:gorby-in-dB},
 \begin{equation*}
   \sigma(\partial)=\inner{m}{F(\partial)m}_{\cB}
 \end{equation*}
for any real Borel set $\partial$. It follows from
Lemma~\ref{lem:gorby-in-dB} that
\begin{equation*}
  \inner{g}{F[a,b]f}_{\cB}=\int_a^b\frac{f(t)}{m(t)}d\inner{g}{F(t)m}_{\cB}
\end{equation*}
Also
\begin{equation*}
  \inner{F(t)g}{m}_{\cB}=\inner{F(a)g}{m}_{\cB}
  + \int_a^t\frac{\cc{g(x)}}{\cc{m(x)}}d\inner{F(x)m}{m}_{\cB}
\end{equation*}
Putting the last two equalities together, one obtains
\begin{equation*}
\inner{g}{F[a,b]f}_\cB
	=\int_a^b\frac{f(t)\cc{g(t)}}{\abs{m(t)}^2}d\sigma(t)=\int_a^bf(t)\cc{g(t)}d\rho(t)
\end{equation*}
For finishing the proof take the limit when $a\to-\infty$ and $b\to+\infty$.
\end{proof}
\begin{lemma}
  \label{lem:multiplication-as-multiplication}
  Let $S$ be the operator of multiplication by the independent
  variable in a de Branges space $\mathcal{B}$ and $m$ be a function in
  $\mathcal{B}$ not vanishing in $\reals$. If $S_\gamma$ is a
  canonical selfadjoint extension of $S$ and
  \begin{equation*}
    \sigma(\partial)\defeq\inner{m}{E_\gamma(\partial)m},
    \quad\partial\in\borel,
  \end{equation*}
  where $E_\gamma$ is the spectral family of $S_\gamma$, then
  \begin{equation*}
     (\mathfrak{f}(S_\gamma)m)(\lambda)=\mathfrak{f}(\lambda)m(\lambda)
   \end{equation*}
 for any
  $\mathfrak{f}$ in $L_2(\reals,\sigma)$ and $\lambda\in\spec(S_\gamma)$.
\end{lemma}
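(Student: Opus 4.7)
The plan is to expand the spectral projection $E_\gamma$ as a sum of rank-one eigenprojections, identify each of them explicitly through the reproducing kernel of $\cB$, and then evaluate the resulting series pointwise using axiom~(A1) together with the orthogonality of eigenfunctions of $S_\gamma$.

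Since the multiplication operator $S$ belongs to $\mathscr{S}(\cB)$ (as noted just after Definition~\ref{def:multiplication-operator}), Proposition~\ref{prop:properties-of-new-class}(i) tells me that $\spec(S_\gamma)$ consists solely of simple, isolated eigenvalues. Thus $E_\gamma = \sum_{\lambda'\in\spec(S_\gamma)} P_{\lambda'}$, where $P_{\lambda'}$ is the orthogonal projection onto the one-dimensional eigenspace $\ker(S_\gamma-\lambda' I)\subset\ker(S^*-\lambda' I)$. Because $\dim\ker(S^*-\lambda' I)=1$ by regularity and $k(\cdot,\lambda')$ is a nonzero element of this kernel (Remark~\ref{rem:this-is-a-mess}), this eigenspace is spanned by $k(\cdot,\lambda')$. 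The reproducing property then gives
\[
 P_{\lambda'} f = \frac{f(\lambda')}{k(\lambda',\lambda')}\,k(\cdot,\lambda'),\qquad f\in\cB,
\]
so in particular $\sigma\{\lambda'\}=\inner{m}{P_{\lambda'}m}_{\cB}= |m(\lambda')|^2/k(\lambda',\lambda')$ and $\|P_{\lambda'}m\|_{\cB}^2=\sigma\{\lambda'\}$.

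The Borel functional calculus then yields, for $\mathfrak{f}\in L_2(\reals,\sigma)$,
\[
 \mathfrak{f}(S_\gamma)m = \sum_{\lambda'\in\spec(S_\gamma)} \mathfrak{f}(\lambda')\,P_{\lambda'}m,
\]
with convergence in $\cB$-norm since $\sum_{\lambda'}|\mathfrak{f}(\lambda')|^2\sigma\{\lambda'\} = \int_\reals|\mathfrak{f}|^2\,d\sigma<\infty$. Fix $\lambda\in\spec(S_\gamma)$; by axiom~(A1), point evaluation at $\lambda$ is a continuous linear functional on $\cB$, hence it commutes with this norm-convergent sum, producing
\[
 (\mathfrak{f}(S_\gamma)m)(\lambda) = \sum_{\lambda'\in\spec(S_\gamma)} \mathfrak{f}(\lambda')\,\frac{m(\lambda')}{k(\lambda',\lambda')}\,k(\lambda,\lambda').
\]
For $\lambda'\neq\lambda$ both in $\spec(S_\gamma)$, the functions $k(\cdot,\lambda)$ and $k(\cdot,\lambda')$ are eigenfunctions of the selfadjoint operator $S_\gamma$ at distinct real eigenvalues, hence orthogonal in $\cB$; equivalently, $k(\lambda,\lambda')=\inner{k(\cdot,\lambda)}{k(\cdot,\lambda')}_{\cB}=0$. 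Only the term $\lambda'=\lambda$ survives, yielding $(\mathfrak{f}(S_\gamma)m)(\lambda)=\mathfrak{f}(\lambda)m(\lambda)$, as desired.

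The main obstacle I anticipate is the justification of interchanging the infinite sum with point evaluation, but this is automatic from (A1) combined with the $L_2(\sigma)$ estimate above. All remaining steps consist of routine invocations of the reproducing kernel structure of $\cB$ and the spectral theorem for the selfadjoint operator $S_\gamma$.
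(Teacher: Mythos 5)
Your argument is correct and follows essentially the same route as the paper's proof: expand $\mathfrak{f}(S_\gamma)m$ via the spectral theorem into the sum $\sum_{\lambda'}\mathfrak{f}(\lambda')\,k(\cdot,\lambda')\,m(\lambda')/k(\lambda',\lambda')$ and then evaluate at $\lambda$, using the reproducing-kernel identity $k(\lambda,\lambda')=\inner{k(\cdot,\lambda)}{k(\cdot,\lambda')}_{\cB}$ and orthogonality of eigenfunctions to kill the off-diagonal terms. You merely spell out what the paper leaves implicit (the explicit eigenprojections, the $L_2(\reals,\sigma)$ convergence estimate, and the use of (A1) to interchange point evaluation with the norm-convergent sum), which is a faithful elaboration rather than a different proof.
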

\begin{proof}
  Using the spectral theorem,
  \begin{equation*}
    (\mathfrak{f}(S_\gamma)m)(z)
    	= \left(\int_\reals\mathfrak{f}(t)dE_\gamma(t) m\right)(z)
    	= \sum_{\lambda\in\spec(S_\gamma)}\mathfrak{f}(\lambda)k(z,\lambda)
    \frac{\inner{k(\cdot,\lambda)}{m}_{\mathcal{B}}}{k(\lambda,\lambda)}\,.
  \end{equation*}
From the last expression the result follows due to the properties of the
reproducing kernel.
\end{proof}

\begin{theorem}
  \label{thm:density-of-functions}
  Let $\mathcal{B}$, $S$ and $m$ be as in the previous
  theorem. Assume that $\rho$ is a solution to the generalized moment
  problem and $\sigma$ is related to it as in 
  Theorem~\ref{thm:operator-measure-generalized-moment-problem}.  A
  necessary and sufficient condition for the measure $\sigma$ to be in
  $\mathcal{V}_m^{\rm ext}(S)$ is that the map $\Psi$ between the spaces
  $\mathcal{B}$ and $L_2(\reals,\rho)$  given by
  \begin{equation*}
  \Psi f=f\eval{\reals}
\end{equation*}
  is unitary (that is, linear, surjective and norm-preserving).
\end{theorem}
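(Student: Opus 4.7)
The plan is to observe first that the defining property of $\rho$ in Definition~\ref{def:generalized-moment-problem}, applied with $g=f$, yields
\[
\|\Psi f\|_{L_2(\reals,\rho)}^2 = \int_\reals |f(x)|^2\,d\rho(x) = \|f\|_{\mathcal{B}}^2,
\]
so that $\Psi$ is automatically linear, well-defined, and isometric. Hence only its surjectivity is at stake, and the content of the statement is the equivalence between surjectivity of $\Psi$ and extremality of $\sigma$.

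For the forward implication I would write $\sigma = \sigma_{S_\gamma\!,m}$ for a canonical selfadjoint extension $S_\gamma$ of $S$. Lemma~\ref{lem:cyclic-vector} asserts that the spectral gauge $m$ is a generating vector for $S_\gamma$, so the spectral theorem produces a unitary $U : L_2(\reals,\sigma) \to \mathcal{B}$ given by $U\mathfrak{f} = \mathfrak{f}(S_\gamma)m$. By Lemma~\ref{lem:multiplication-as-multiplication}, $(U\mathfrak{f})(\lambda) = \mathfrak{f}(\lambda) m(\lambda)$ at every $\lambda \in \spec(S_\gamma)$. Since $m$ does not vanish on $\reals$, the measures $\sigma$ and $\rho$ are mutually absolutely continuous with Radon--Nikodym derivative $|m|^2$, share the common support $\spec(S_\gamma)$, and the multiplication map $V : L_2(\reals,\sigma) \to L_2(\reals,\rho)$, $V\mathfrak{f} = m\mathfrak{f}$, is unitary. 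The pointwise identity from Lemma~\ref{lem:multiplication-as-multiplication} then gives $\Psi U = V$ $\rho$-almost everywhere, so $\Psi = V U^{-1}$ is unitary.

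For the converse, assume $\Psi$ is unitary and set $S^{+} := \Psi^{-1} M_x \Psi$, where $M_x$ denotes the (selfadjoint) operator of multiplication by the independent variable in $L_2(\reals,\rho)$. Because $(Sf)(z) = zf(z)$ gives $\Psi S f = M_x \Psi f$ on $\Psi \dom(S)$, the operator $S^{+}$ extends $S$ and lives inside $\mathcal{B}$ with no enlargement required, hence it is a canonical selfadjoint extension of $S$. Evaluating its spectral measure at $m$ then yields
\[
\inner{m}{E_{S^{+}}(\partial) m}_{\mathcal{B}} = \inner{\Psi m}{E_{M_x}(\partial) \Psi m}_{L_2(\reals,\rho)} = \int_\partial |m(x)|^2\,d\rho(x) = \sigma(\partial),
\]
which shows $\sigma = \sigma_{S^{+}\!,m} \in \mathcal{V}_m^{\rm ext}(S)$.

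I expect the subtlest step to be the identification $\Psi U = V$ in the forward direction: Lemma~\ref{lem:multiplication-as-multiplication} provides only pointwise information on the discrete set $\spec(S_\gamma)$, and one has to note that this set carries the full mass of $\rho$ so that $\rho$-a.e.\ equality is enough for equality in $L_2(\reals,\rho)$. Everything else is unitary transport, and the absence of any need to enlarge the Hilbert space in the reverse direction is what ensures $S^{+}$ is canonical rather than merely a generalized extension.
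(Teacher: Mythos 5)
Your proof is correct and rests on the same two pillars as the paper's: the canonical extension $\Psi^{-1}M_x\Psi$ for one direction, and the canonical map $\mathfrak{f}\mapsto\mathfrak{f}(S_\gamma)m$ together with Lemma~\ref{lem:multiplication-as-multiplication} for the other. The execution differs in both halves, and in ways worth recording. For ``$\Psi$ unitary $\Rightarrow\sigma$ extremal'', the paper identifies $\inner{m}{E_\gamma(\cdot)m}$ with $\sigma$ by running the canonical map through Lemma~\ref{lem:multiplication-as-multiplication}; your direct computation $\inner{m}{E_{S^+}(\partial)m}_{\cB}=\inner{\Psi m}{E_{M_x}(\partial)\Psi m}_{L_2(\reals,\rho)}=\int_\partial\abs{m}^2\,d\rho=\sigma(\partial)$ uses only that spectral projections transport under the unitary $\Psi$ and that $E_{M_x}(\partial)$ is multiplication by an indicator, so it is shorter and bypasses the lemma entirely. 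For ``$\sigma$ extremal $\Rightarrow\Psi$ unitary'', the paper notes that $\cB\cong L_2(\reals,\sigma)\cong L_2(\reals,\rho)$ while $\cB\cong\Psi\cB\subset L_2(\reals,\rho)$ and concludes that $\Psi\cB$ cannot be proper; taken literally that inference is incomplete (an infinite-dimensional Hilbert space is unitarily equivalent to proper closed subspaces of itself), and the missing ingredient is exactly your factorization $\Psi=VU^{-1}$, which Lemma~\ref{lem:multiplication-as-multiplication} plus the observation that $\rho$ charges only $\spec(S_\gamma)$ makes rigorous---you correctly flagged this as the delicate point. Two minor touches would complete your write-up: invoke Lemma~\ref{lem:good-gauge-zero-free-in-r} to certify that $m$ is a spectral gauge of $S$ (so Lemma~\ref{lem:cyclic-vector} applies and $\sigma_{S^+\!,m}$ indeed belongs to $\mathcal{V}_m^{\rm ext}(S)$), and state explicitly that $\Psi\dom(S)\subset\dom(M_x)$ because $zf(z)\in\cB$ forces its restriction to lie in $L_2(\reals,\rho)$, which is what makes $S^{+}$ an extension of $S$.
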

\begin{proof}
 Since $\rho$ is a solution to
  the generalized moment problem,
  $\Psi\mathcal{B}$ is contained in $L_2(\reals,\rho)$, where
  $\Psi f=f\eval{\reals}$ and $\Psi$
  is isometric. The hypothesis $\{f\eval{\reals} :
  f\in\mathcal{B}\}= L_2(\reals,\rho)$ therefore means that $\Psi$ is
  unitary. Let $A^\rho$ be the operator of multiplication by the
  independent variable in $L_2(\reals,\rho)$ defined in the maximal
  domain. Since $A^\rho$ is selfadjoint, the same holds for
  $\Psi^{-1}A^\rho\Psi$. If $f\in\dom(S)$, then 
  $\Psi f\in\dom(A^\rho)$. Thus, $\Psi^{-1}A^\rho\Psi$ is a canonical
  selfadjoint extension of $S$. Denote $S_\gamma:=\Psi^{-1}A^\rho\Psi$
  and let $E_\gamma$ be the spectral family of $S_\gamma$. For any Borel
  set $\partial\subset\reals$, define $\eta(\partial):=\inner{m}{E_\gamma(\partial)m}$. By
  the canonical map (see \cite[Sec.~69]{MR1255973}),
  \begin{equation*}
    L_2(\reals,\eta)\ni\mathfrak{f}\mapsto\mathfrak{f}(S_\gamma)m\in\mathcal{B}\,.
  \end{equation*}
 and for any function $f$ in $\mathcal{B}$ there is a function
 $\mathfrak{f}$ in $L_2(\reals,\eta)$ such that
 $f=\mathfrak{f}(S_\gamma)m$. Thus
 \begin{equation*}
   \int_\reals f(t)\cc{g(t)}d\rho(t)=\int_\reals \mathfrak{f}(t)\cc{\mathfrak{g}(t)}d\eta(t)\,,
 \end{equation*}
where $f$ and $g$ are the images of $\mathfrak{f}$ and $\mathfrak{g}$
under the canonical map. On the basis of
Lemma~\ref{lem:multiplication-as-multiplication},
$f(t)=\mathfrak{f}(t)m(t)$ and $g(t)=\mathfrak{g}(t)m(t)$ with the
equalities in the $L_2(\reals,\eta)$ sense. Therefore
\begin{equation*}
  \int_\reals f(t)\cc{g(t)}d\rho(t)= \int_\reals f(t)\cc{g(t)}\frac{d\eta(t)}{\abs{m(t)}^2}\,.
\end{equation*}
Since this equality holds for any element in $L_2(\reals,\rho)$, one
concludes that $\eta(\partial)=\sigma(\partial)$ for any real Borel set
$\partial$ and hence $\sigma$ is extremal, that is,
$\sigma\in\mathcal{V}_m^{\rm ext}(S)$.

Let us prove the other direction. If $\sigma\in\mathcal{V}_m^{\rm
  ext}(S)$, then $\sigma$ is the spectral measure of a canonical
selfadjoint extension of $S$ which is known to be simple. By the canonical map (see
\cite[Sec.~69]{MR1255973}), $\mathcal{B}$ is unitarily equivalent to
$L_2(\reals,\sigma)$. In turn, since $m$ is zero-free on the real
axis, $L_2(\reals,\sigma)$  is unitarily equivalent to
$L_2(\reals,\rho)$. On the other hand, the Hilbert spaces $\mathcal{B}$ and
$\Psi\mathcal{B}\subset L_2(\reals,\rho)$ are unitarily equivalent due 
to the fact that $\rho$ is a solution to the moment problem. 
Therefore, $\Psi\mathcal{B}$ cannot be properly contained in $L_2(\reals,\rho)$.
\end{proof}

\begin{definition}
  \label{def:extremal-solution}
  A solution $\rho$ to the generalized moment problem for
  $\mathcal{B}$ is said to be extremal if the map $\Psi:\cB\to L_2(\R,\rho)$ given in
  Theorem~\ref{thm:density-of-functions} is unitary.
\end{definition}


Unlike the extremal measures (see
Definition~\ref{def:family-measures}), the extremal solutions might
not be finite measures. However, the extremal solutions of
the generalized moment problem for $\mathcal{B}$ can be finite measures
if, for instance, $1\in\mathcal{B}$ (see Example~\ref{sec:class-moment-probl}).

Let $\delta_x:\borel\to [0,1]$ be the measure defined by the rule
$\delta_x(\partial)\defeq
	\begin{cases}
	1 & x\in\partial\\
	0 & x\not\in\partial.
\end{cases}$
%

\begin{theorem}
\label{thm:extremal-solution-gen-mp}
Let $\mathcal{B}$ be a de Branges space, $k(\cdot,w)$ its reproducing kernel, and
$S_\gamma$ a canonical selfadjoint extension of the operator of
multiplication $S$.  The measure
\begin{equation}
\label{eq:another-long-shot}
\rho_\gamma:=\sum_{\lambda\in\spec(S_\gamma)}\frac{\delta_\lambda}{k(\lambda,\lambda)}
\end{equation}
is an extremal solution to the generalized moment problem for
$\mathcal{B}$. Reciprocally, every extremal solution is of the form
\eqref{eq:another-long-shot}.
\end{theorem}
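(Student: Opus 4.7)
The plan is to handle both directions through the orthonormal eigenbasis of $S_\gamma$ in $\mathcal{B}$. Fix a canonical selfadjoint extension $S_\gamma$ of $S$. By Proposition~\ref{prop:properties-of-new-class}(i), $\spec(S_\gamma)$ consists of isolated simple eigenvalues, and Remark~\ref{rem:this-is-a-mess} places $k(\cdot,\lambda)$ in $\ker(S^*-\lambda I)=\ker(S_\gamma-\lambda I)$ for each $\lambda\in\spec(S_\gamma)$; since $S_\gamma$ is simple, the normalized vectors $e_\lambda:=k(\cdot,\lambda)/\sqrt{k(\lambda,\lambda)}$ form an orthonormal basis of $\mathcal{B}$. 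The reproducing identity $\inner{e_\lambda}{f}_\mathcal{B}=f(\lambda)/\sqrt{k(\lambda,\lambda)}$ combined with Parseval immediately yields
\begin{equation*}
\inner{g}{f}_\mathcal{B}=\sum_{\lambda\in\spec(S_\gamma)}\frac{\overline{g(\lambda)}f(\lambda)}{k(\lambda,\lambda)}=\int_\reals f(x)\overline{g(x)}\,d\rho_\gamma(x),
\end{equation*}
so $\rho_\gamma$ solves the generalized moment problem. For extremality, $L_2(\reals,\rho_\gamma)$ has orthonormal basis $\{\sqrt{k(\lambda,\lambda)}\chi_{\{\lambda\}}\}_{\lambda\in\spec(S_\gamma)}$, and the orthogonality $k(\mu,\lambda)=\inner{k(\cdot,\mu)}{k(\cdot,\lambda)}_\mathcal{B}=0$ for distinct $\mu,\lambda\in\spec(S_\gamma)$ means that $\Psi e_\lambda$ coincides in $L_2(\reals,\rho_\gamma)$ with $\sqrt{k(\lambda,\lambda)}\chi_{\{\lambda\}}$. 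Hence $\Psi$ sends an orthonormal basis onto an orthonormal basis and is unitary, which is the extremality criterion of Definition~\ref{def:extremal-solution}.

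For the converse, let $\rho$ be an extremal solution and pick any $m\in\mathcal{B}$ with no real zeros (such $m$ exists via the functional model and Proposition~\ref{prop:good-gauge}, and by Lemma~\ref{lem:good-gauge-zero-free-in-r} it is a spectral gauge of $S$). Theorem~\ref{thm:generalized-moment-problem-operator-measure} gives $d\sigma:=|m|^2 d\rho\in\mathcal{V}_m(S)$, and Theorem~\ref{thm:density-of-functions} together with extremality of $\rho$ upgrades this to $\sigma\in\mathcal{V}_m^{\rm ext}(S)$. Thus there is a canonical selfadjoint extension $S_\gamma$ of $S$ with $\sigma(\partial)=\inner{m}{E_\gamma(\partial)m}_\mathcal{B}$; expanding $m$ in the eigenbasis $\{e_\lambda\}$ from the previous paragraph produces
\begin{equation*}
\sigma\{\lambda\}=\frac{|m(\lambda)|^2}{k(\lambda,\lambda)}\quad(\lambda\in\spec(S_\gamma)),\qquad\sigma(\reals\setminus\spec(S_\gamma))=0.
\end{equation*}
Dividing by $|m|^2$, strictly positive on $\reals$, yields $\rho\{\lambda\}=1/k(\lambda,\lambda)$ and $\rho$ supported on $\spec(S_\gamma)$, \ie\ $\rho=\rho_\gamma$.

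The one step that deserves care is the identification of $\{k(\cdot,\lambda)\}_{\lambda\in\spec(S_\gamma)}$ as an orthogonal basis of $\mathcal{B}$: one must verify both that these reproducing kernels are genuine eigenvectors of $S_\gamma$ (via Remark~\ref{rem:this-is-a-mess} plus the defect-one identity $\ker(S^*-\lambda I)=\ker(S_\gamma-\lambda I)$ for $\lambda\in\spec(S_\gamma)$) and that the system is complete, the latter being the simplicity of $S_\gamma$ noted after Proposition~\ref{prop:properties-of-new-class}. Everything else is a direct application of the theorems already established in this section.
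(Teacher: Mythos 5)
Your proposal is correct and follows essentially the same route as the paper: both directions rest on the orthonormal basis $\{k(\cdot,\lambda)/\sqrt{k(\lambda,\lambda)}\}_{\lambda\in\spec(S_\gamma)}$ together with Theorem~\ref{thm:density-of-functions}. The only cosmetic differences are that you verify surjectivity of $\Psi$ by matching this basis with the natural basis of $L_2(\reals,\rho_\gamma)$ instead of invoking the interpolation formula, and in the converse you compute the atoms of $\rho$ by passing through the gauge $m$ and the spectral family of $S_\gamma$, where the paper evaluates the weights directly with a test function supported at a single eigenvalue; both variants are sound and rely on the same ingredients.
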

\begin{proof}
It follows from Remark~\ref{rem:this-is-a-mess} and the equality 
$\ker(S_\gamma -\lambda I) = \ker(S^* -\lambda I)$
that $\{k(\cdot,\lambda)/\norm{k(\cdot,\lambda)}\}_{\lambda\in\spec(S_\gamma)}$ is an orthonormal basis 
in $\mathcal{B}$. Therefore, on the one hand, a function $f$ is in
$\mathcal{B}$ if and only if
\begin{equation}
  \label{eq:sequence-in-l2}
  \sum_{\lambda\in\spec(S_\gamma)}\frac{\abs{f(\lambda)}^{2}}{k(\lambda,\lambda)}<+\infty
\end{equation}
and, on the other hand, one has the following interpolation formula
\begin{equation}
\label{eq:boring}
f(z) = \sum_{\lambda\in\spec(S_\gamma)} 
		\frac{k(z,\lambda)}{k(\lambda,\lambda)}f(\lambda)
\end{equation}
for every $f\in\cB$, where the convergence is in the sense of Hilbert
space, hence uniform in compact subsets of $\C$.  Thus, if
$\rho_{\gamma}$ is given by \eqref{eq:another-long-shot}, then every
$f$ in $L_2 (\reals,\rho_\gamma)$ satisfies \eqref{eq:sequence-in-l2}
and, through the interpolation formula
\eqref{eq:boring}, it is mapped into a function in $\mathcal{B}$ whose
restriction to the real axis is the function $f$.
The identity
\begin{equation*}
\norm{f}^2_{\mathcal{B}}
 =\sum_{\lambda\in\spec(S_\gamma)}\frac{\abs{f(\lambda)}^2}{k(\lambda,\lambda)}= 
 	\norm{f}^2_{L_2(\reals,\rho_\gamma)}\,.
 \end{equation*}
 implies that $\Psi$ is norm-preserving.

 Conversely, if $\rho$ is an extremal solution, then $\supp\rho$
 coincides with $\spec(S_{\gamma})$ for some $\gamma$ as a consequence
 of Theorem~\ref{thm:density-of-functions}. Thus, for any $f$ in
 $L_2 (\reals,\rho)$ whose zero set is
 $\spec(S_{\gamma})\setminus \{\lambda\}$, one has
 \begin{equation*}
   \frac{\abs{f(\lambda)}^{2}}{k(\lambda,\lambda)}=\norm{f}_{\mathcal{B}}^{2}=
   \norm{f}^2_{L_2(\reals,\rho)}=c\abs{f(\lambda)}^{2}
 \end{equation*}
 which shows that the weight $c$ of the measure $\rho$ at $\lambda$
 should be equal to $1/k(\lambda,\lambda)$.
\end{proof}
\begin{remark}
  \label{rem:measure-in-sl-theory}
  Assume that $A\in\ournewclass$ and take the measure $\rho_\gamma$ as
  defined in Theorem~\ref{thm:extremal-solution-gen-mp} for the
  corresponding de Branges space $\mathcal{B}_A$. The multiplication by the
  independent variable in $L_2(\reals,\rho_\gamma)$ is unitarily
  equivalent to a canonical selfadjoint extension $A_\gamma$ of
  $A$. Note that one can write
  \begin{equation*}
    \rho_\gamma=\sum_{\lambda\in\spec(S_\gamma)}\frac{\delta_\lambda}{\norm{\xi_A(\lambda)}^2}\,.
  \end{equation*}
  It is worth remarking that in the Sturm-Liouville theory (when $A$
  is associated with a regular Sturm-Liouville difference expression)
  the measure $\rho_\gamma$ given above is usually defined via the
  Weyl function and is called the spectral measure of the selfadjoint
  operator $A_\gamma$ (see details in
  Example~\ref{sec:generalized-operators}).
\end{remark}

\section{Point mass perturbations of measures}
\label{sec:mass-point-pert}
\begin{lemma}
  \label{lem:perturbation-db-space}
  Let $\cB$ be a de Branges space with inner product $\inner{\cdot}{\cdot}_\cB$.
  Given $a>0$ and $\lambda\in\reals$, define
  \begin{equation}
    \label{eq:other-inner-product}
    \inner{g}{f}_{\sim} 
    	\defeq \inner{g}{f}_\cB + a\,\cc{g(\lambda)}f(\lambda),\quad
    f,g\in\mathcal{B}.
  \end{equation}
  Then the linear manifold $\cB$ equipped with the inner product 
  $\inner{\cdot}{\cdot}_{\sim}$ is a de Branges space.
\end{lemma}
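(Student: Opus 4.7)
The plan is to verify directly that $(\cB, \inner{\cdot}{\cdot}_\sim)$ satisfies the three axioms (A1), (A2), (A3) of Definition~\ref{def:axiomatic-db}, after first confirming that it is indeed a Hilbert space. The key observation driving essentially every step is that $\lambda$ is \emph{real}, which forces two useful identities: $\abs{\lambda-\cc{w}}=\abs{\lambda-w}$ for any $w\in\C$, and $f^\#(\lambda)=\cc{f(\lambda)}$.

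First I would check that $\inner{\cdot}{\cdot}_\sim$ is a genuine inner product and that the resulting norm is equivalent to $\norm{\cdot}_\cB$. Positivity and sesquilinearity are immediate from the definition. For equivalence, continuity of point evaluation at $\lambda$ in $\cB$ (axiom (A1) for $\cB$) yields a constant $C_\lambda$ with $\abs{f(\lambda)}\le C_\lambda\norm{f}_\cB$, so
\begin{equation*}
\norm{f}_\cB^2 \le \norm{f}_\sim^2 = \norm{f}_\cB^2 + a\abs{f(\lambda)}^2 \le (1+aC_\lambda^2)\norm{f}_\cB^2.
\end{equation*}
Completeness of $(\cB, \inner{\cdot}{\cdot}_\sim)$ then follows from completeness of $(\cB,\inner{\cdot}{\cdot}_\cB)$.

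Next, I would verify the three axioms. For (A1), point evaluation at any $w\in\C$ is continuous in the stronger norm $\norm{\cdot}_\sim\ge\norm{\cdot}_\cB$, since it was already continuous in $\norm{\cdot}_\cB$. For (A3), if $f\in\cB$, then $f^\#\in\cB$ with $\norm{f^\#}_\cB=\norm{f}_\cB$, and because $\lambda\in\reals$,
\begin{equation*}
\norm{f^\#}_\sim^2 = \norm{f^\#}_\cB^2 + a\abs{f^\#(\lambda)}^2 = \norm{f}_\cB^2 + a\abs{\cc{f(\lambda)}}^2 = \norm{f}_\sim^2.
\end{equation*}

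The main (and only slightly delicate) step is (A2). Given a non-real zero $w$ of $f$, the function $g(z)=f(z)(z-\cc{w})(z-w)^{-1}$ lies in $\cB$ with $\norm{g}_\cB=\norm{f}_\cB$ by (A2) applied to $\cB$. To promote this to the new norm, I would evaluate $g$ at $\lambda$: since $\lambda\in\reals$ and hence $\cc{\lambda-w}=\lambda-\cc{w}$, one has $\abs{\lambda-\cc{w}}=\abs{\lambda-w}$, so $\abs{g(\lambda)}=\abs{f(\lambda)}$ and
\begin{equation*}
\norm{g}_\sim^2 = \norm{g}_\cB^2 + a\abs{g(\lambda)}^2 = \norm{f}_\cB^2 + a\abs{f(\lambda)}^2 = \norm{f}_\sim^2.
\end{equation*}
I do not anticipate any substantive obstacle; the whole argument hinges on the reality of $\lambda$, and once that is exploited the three axioms follow by direct computation.
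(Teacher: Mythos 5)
Your proposal is correct and follows essentially the same route as the paper's proof: verify completeness and then the axioms (A1)--(A3) directly, exploiting the continuity of point evaluation (equivalently the reproducing kernel bound at $\lambda$) and the reality of $\lambda$ for the identities $\abs{\lambda-\cc{w}}=\abs{\lambda-w}$ and $f^\#(\lambda)=\cc{f(\lambda)}$. The only cosmetic difference is that you package the completeness step as a two-sided norm equivalence, whereas the paper runs the Cauchy-sequence argument directly with the bound $\abs{f_n(\lambda)-g(\lambda)}^2\le k(\lambda,\lambda)\norm{f_n-g}_\cB^2$; the content is identical.
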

\begin{proof}
$\cB$ is closed with respect to $\inner{\cdot}{\cdot}_{\sim}$. 
Indeed, let $\{f_n\}_0^\infty\subset\cB$ be a $\sim$-Cauchy sequence. 
Since $\norm{\cdot}_\sim\ge\norm{\cdot}_\cB$, $\{f_n\}_0^\infty$ is also 
$\cB$-Cauchy. Then, there exists $g\in\cB$ such that $\norm{f_n-g}_\cB\to 0$.
Since 
$\abs{f_n(\lambda)-g(\lambda)}^2\le k(\lambda,\lambda)\norm{f_n-g}_\cB^2$,
one obtains
\[
\norm{f_n-g}^2_\sim 
	\le \norm{f_n-g}^2_\cB + a k(\lambda,\lambda)\norm{f_n-g}^2_\cB,
\]
implying the assertion.

Now we prove (A1)--(A3) of Definition~\ref{def:axiomatic-db}. Since
\[
\abs{f(w)-g(w)}^2
	=   \abs{\inner{k(\cdot,w)}{f-g}}^2
	\le k(w,w)\norm{f-g}^2_\cB
	\le k(w,w)\norm{f-g}^2_\sim,
\]
it follows that point evaluation is continuous with respect to the 
$\sim$-norm.

From the equality
\[
\norm{f^\#}^2_\sim
	= \norm{f^\#}^2_\cB + a \abs{f^\#(\lambda)}^2
	= \norm{f}^2_\cB + a \abs{f(\lambda)}^2
	= \norm{f}^2_\sim,	
\]
it follows that the mapping $f\mapsto f^\#$ (see (A3) in 
Definition~\ref{def:axiomatic-db}) is a $\sim$-isometry in $\cB$.

Finally, suppose $w\in\C\setminus\R$ is a zero of $f\in\cB$ and define
$g(z)\defeq(z-\cc{w})(z-w)^{-1}f(z)$. Then, $g\in\cB$ and 
\[
\norm{g}^2_\sim
	= \norm{g}^2_\cB + a \abs{\frac{\lambda-\cc{w}}{\lambda-w}}^2\abs{f(x)}^2
	= \norm{f}^2_\cB + a \abs{f(x)}^2
	= \norm{f}^2_\sim. \qedhere
\]
\end{proof}

Lemma~\ref{lem:perturbation-db-space} has the following corollary.

\begin{corollary}
  \label{cor:perturbed-solution-of-gen-moment-problem}
   Let $\mathcal{B}$ be a de Branges space. If $\rho$ is a solution to the
   generalized moment problem for $\mathcal{B}$, then
  $\rho+a\delta_\lambda$ ($a>0$, $\lambda\not\in\supp\rho$)
is a solution to the generalized moment problem for a de Branges space
$\mathcal{B}_\sim$ having the same elements as $\mathcal{B}$ but with
inner product given by \eqref{eq:other-inner-product}.
\end{corollary}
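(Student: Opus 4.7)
The plan is to reduce the corollary to an almost direct computation, using the fact that the heavy lifting (showing that $\mathcal{B}$ equipped with $\inner{\cdot}{\cdot}_{\sim}$ remains a de Branges space) has already been carried out in Lemma~\ref{lem:perturbation-db-space}. So what is left is to verify that $\rho+a\delta_\lambda$ actually represents the new inner product $\inner{\cdot}{\cdot}_{\sim}$ in the sense of Definition~\ref{def:generalized-moment-problem}.

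First I would observe that since $\lambda\not\in\supp\rho$, the sets $\supp\rho$ and $\{\lambda\}$ are disjoint, so the integral against $\rho+a\delta_\lambda$ splits cleanly. Then, for any $f,g\in\mathcal{B}=\mathcal{B}_\sim$ (which are the same linear manifold of entire functions), I would compute
\begin{equation*}
\int_{\reals} f(x)\cc{g(x)}\, d(\rho+a\delta_\lambda)(x)
= \int_{\reals} f(x)\cc{g(x)}\, d\rho(x) + a\, f(\lambda)\cc{g(\lambda)}.
\end{equation*}
The first term on the right equals $\inner{g}{f}_{\cB}$ because $\rho$ solves the generalized moment problem for $\mathcal{B}$ (here the continuity of point evaluation, which is (A1) in Definition~\ref{def:axiomatic-db}, guarantees that $f(\lambda)$ and $g(\lambda)$ make sense for all $f,g\in\cB$, so the added term is well-defined). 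Comparing with \eqref{eq:other-inner-product}, the right-hand side is exactly $\inner{g}{f}_{\sim}$.

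Combining this computation with Lemma~\ref{lem:perturbation-db-space} (which provides the de Branges space $\mathcal{B}_\sim$ on the same underlying linear manifold with the new inner product), the required identity
\begin{equation*}
\inner{g}{f}_{\sim}=\int_{\reals} f(x)\cc{g(x)}\, d(\rho+a\delta_\lambda)(x)
\end{equation*}
holds for every $f,g\in\mathcal{B}_\sim$, so $\rho+a\delta_\lambda$ is indeed a solution to the generalized moment problem for $\mathcal{B}_\sim$. There is essentially no obstacle here; the only subtlety worth pointing out explicitly is that $\mathcal{B}$ and $\mathcal{B}_\sim$ share the same underlying set of functions (so that speaking of $f(\lambda)$ and integrating $f\cc{g}$ in either space is unambiguous), which is precisely what Lemma~\ref{lem:perturbation-db-space} was designed to set up.
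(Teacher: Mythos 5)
Your proposal is correct and matches the paper's own proof, which likewise rests on the single identity $\int_\reals \cc{g(x)}f(x)\,d(\rho+a\delta_\lambda)=\inner{g}{f}_\cB + a\,\cc{g(\lambda)}f(\lambda)$ combined with Lemma~\ref{lem:perturbation-db-space}. The extra remarks you add (continuity of point evaluation, disjointness of $\supp\rho$ and $\{\lambda\}$) are harmless but not needed, since the splitting of the integral follows directly from additivity of the measure.
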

\begin{proof}
  The statement follows from the equality
  \begin{equation*}
    \int_\reals \cc{g(x)}f(x)d(\rho+a\delta_\lambda)
    =\inner{g}{f}_\cB + a\,\cc{g(\lambda)}f(\lambda).\qedhere
  \end{equation*}
\end{proof}
\begin{theorem}
  \label{thm:generalized-duran}
  Let $S$ be the operator of multiplication by the independent
  variable in a de Branges space $\mathcal{B}$ and $m$ be a function in
  $\mathcal{B}$ not vanishing in $\reals$. If
  $\sigma\in\mathcal{V}_m(S)$, then
\begin{equation*}
  \sigma+a\abs{m(\lambda)}^2\delta_\lambda\,,
  \qquad a>0\,,\quad \lambda\not\in\supp(\sigma)\,,
\end{equation*}
is in $\mathcal{V}_m(\widetilde{S})\setminus\mathcal{V}_m^{\rm ext}(\widetilde{S})$, where
 $\widetilde{S}$ is the multiplication operator in the
 de Branges space $\mathcal{B}_\sim$ given in
 Corollary~\ref{cor:perturbed-solution-of-gen-moment-problem}.
\end{theorem}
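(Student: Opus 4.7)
The plan is to route everything through the generalized moment problem, using Theorems~\ref{thm:generalized-moment-problem-operator-measure} and \ref{thm:operator-measure-generalized-moment-problem} as dictionaries between measures in $\mathcal{V}_{m}(\cdot)$ and solutions of the generalized moment problem; Corollary~\ref{cor:perturbed-solution-of-gen-moment-problem} then supplies the point-mass perturbation for free, and Theorem~\ref{thm:density-of-functions} handles non-extremality. First, I would apply Theorem~\ref{thm:operator-measure-generalized-moment-problem} to the given $\sigma\in\mathcal{V}_m(S)$ to obtain the measure $\rho$ with $d\rho=d\sigma/\abs{m}^{2}$, which solves the generalized moment problem for $\mathcal{B}$. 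Since $\lambda\notin\supp\rho$, Corollary~\ref{cor:perturbed-solution-of-gen-moment-problem} immediately yields that $\rho+a\delta_{\lambda}$ is a solution to the generalized moment problem for the perturbed de Branges space $\mathcal{B}_{\sim}$.

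Next, I would observe that $\mathcal{B}_{\sim}$ has the same elements as $\mathcal{B}$, and in particular $m\in\mathcal{B}_{\sim}$ still does not vanish on $\reals$; by Lemma~\ref{lem:good-gauge-zero-free-in-r}, $m$ is therefore a spectral gauge of $\widetilde{S}$. Applying Theorem~\ref{thm:generalized-moment-problem-operator-measure} in the space $\mathcal{B}_{\sim}$ with the solution $\rho+a\delta_{\lambda}$ produces a measure in $\mathcal{V}_{m}(\widetilde{S})$ given by
\begin{equation*}
\int_{\partial}\abs{m(x)}^{2}d(\rho+a\delta_{\lambda})(x)
 =\sigma(\partial)+a\abs{m(\lambda)}^{2}\delta_{\lambda}(\partial),
\end{equation*}
which is exactly the measure appearing in the statement.

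For the non-extremality part, I would invoke Theorem~\ref{thm:density-of-functions} applied to $\mathcal{B}_{\sim}$, $\widetilde{S}$, the solution $\rho+a\delta_{\lambda}$, and the associated measure above: it is enough to show that the restriction map $\Psi\colon\mathcal{B}_{\sim}\to L_{2}(\reals,\rho+a\delta_{\lambda})$, $\Psi f=f\!\eval{\reals}$, is not surjective. Consider the indicator $\chi_{\{\lambda\}}$, which lies in $L_{2}(\reals,\rho+a\delta_{\lambda})$ with norm $\sqrt{a}$. If some $f\in\mathcal{B}_{\sim}=\mathcal{B}$ satisfied $\Psi f=\chi_{\{\lambda\}}$, then $f$ would vanish $\rho$-almost everywhere, so that $\int_{\reals}\abs{f}^{2}d\rho=0$; but $\rho$ solves the generalized moment problem for $\mathcal{B}$, hence $\norm{f}_{\mathcal{B}}^{2}=0$, forcing $f\equiv 0$ as an entire function and contradicting $f(\lambda)=1$. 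Thus $\chi_{\{\lambda\}}\notin\Psi\mathcal{B}_{\sim}$, so $\Psi$ is not unitary and the measure is not extremal.

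The only genuinely delicate point, and the step I would double-check most carefully, is the verification that $m$ remains a spectral gauge of $\widetilde{S}$ after passing from $\mathcal{B}$ to $\mathcal{B}_{\sim}$, since both the reproducing kernel and the associated $\xi$-function for $\widetilde{S}$ differ from those of $S$; however, because $\mathcal{B}_{\sim}$ shares the same underlying function space and $m(x)\ne 0$ on $\reals$, Lemma~\ref{lem:good-gauge-zero-free-in-r} closes this gap. All other steps are direct applications of the machinery already assembled in Sections~\ref{sec:extremality-measures} and \ref{sec:gener-moment-probl}.
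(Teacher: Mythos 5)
Your proof is correct, and its first half coincides with the paper's: pass from $\sigma$ to $\rho$ via Theorem~\ref{thm:operator-measure-generalized-moment-problem}, perturb via Corollary~\ref{cor:perturbed-solution-of-gen-moment-problem}, and return to $\mathcal{V}_m(\widetilde{S})$ via Theorem~\ref{thm:generalized-moment-problem-operator-measure}; the point you flag as delicate is indeed settled by Lemma~\ref{lem:good-gauge-zero-free-in-r}, and your implicit use of $\lambda\not\in\supp\rho$ is justified because $\abs{m}^2$ is continuous and positive on $\reals$, so $\supp\rho=\supp\sigma$. Where you genuinely diverge is the non-extremality step. The paper argues inside the family $\mathcal{V}_m(\widetilde{S})$: using Corollary~\ref{cor:extremal-measures-properties} it picks $\sigma_\lambda\in\mathcal{V}_m^{\rm ext}(S)$ with $\sigma_\lambda\{\lambda\}>0$, shows by the same perturbation mechanism that $\sigma_\lambda+a\abs{m(\lambda)}^2\delta_\lambda\in\mathcal{V}_m(\widetilde{S})$, and then reaches a contradiction with the pairwise disjointness of supports of extremal measures, invoking Theorem~\ref{thm:extremality-of-measures} when this competing measure is not itself extremal. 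You instead use the unitarity criterion of Theorem~\ref{thm:density-of-functions} and show directly that the restriction map $\Psi:\mathcal{B}_\sim\to L_2(\reals,\rho+a\delta_\lambda)$ misses $\chi_{\{\lambda\}}$, since any preimage would have zero norm in $\mathcal{B}$ (because $\rho$ solves the generalized moment problem for $\mathcal{B}$) while taking the value $1$ at $\lambda$. This is essentially the Poltoratski-style argument the paper records in Remark~\ref{rem:poltoratsky} as an alternative proof of Theorem~\ref{cor:no-self-adjoint-measure}, transplanted one theorem earlier. The trade-off: the paper's route stays within the Krein-theoretic extremality machinery of Section~\ref{sec:extremality-measures} and exhibits how extremal measures dominate at $\lambda$, whereas yours is shorter and more self-contained, relying only on Theorem~\ref{thm:density-of-functions} and the set-wise equality $\mathcal{B}_\sim=\mathcal{B}$, and it delivers Theorem~\ref{cor:no-self-adjoint-measure} with essentially no additional work.
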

\begin{proof}
  By hypothesis, $\rho$ given in
  Theorem~\ref{thm:operator-measure-generalized-moment-problem} is a
  solution to the generalized moment problem for $\mathcal{B}$. By
  Corollary~\ref{cor:perturbed-solution-of-gen-moment-problem}, $\rho+a\delta_\lambda$
  is a solution to the generalized moment problem for
  $\mathcal{B}_\sim$ and Theorem~\ref{thm:generalized-moment-problem-operator-measure}
  implies that
\begin{equation*}
  \sigma+a\abs{m(\lambda)}^2\delta_\lambda\in\mathcal{V}_m(\widetilde{S})\,.
\end{equation*}
On the other hand, Corollary~\ref{cor:extremal-measures-properties} yields
a measure $\sigma_\lambda\in\mathcal{V}_m^{\rm ext}(S)$ such that
$\sigma_\lambda\{\lambda\}>0$. The function $\rho_\lambda$ related to
$\sigma_\lambda$ as in
Theorem~\ref{thm:operator-measure-generalized-moment-problem} is a
solution to the generalized moment problem for $\mathcal{B}$ and
recurring again to Corollary~\ref{cor:perturbed-solution-of-gen-moment-problem}, one
concludes that $\rho_\lambda+ a\delta_\lambda$ is a solution to the 
generalized moment problem for $\mathcal{B}_\sim$. Thus, in view of 
Theorem~\ref{thm:generalized-moment-problem-operator-measure},
\begin{equation*}
  \sigma_\lambda+a\abs{m(\lambda)}^2\delta_\lambda\in\mathcal{V}_m(\widetilde{S})\,.
\end{equation*}
Since
$\lambda\in\supp(\sigma_\lambda)\setminus\supp(\sigma)$, one has
\begin{equation*}
  (\sigma_\lambda+a\abs{m(\lambda)}^2\delta_\lambda)\{\lambda\}>
  (\sigma+a\abs{m(\lambda)}^2\delta_\lambda)\{\lambda\}>0\,.
\end{equation*}
Assume that $\sigma+a\abs{m(\lambda)}^2\delta_\lambda$ is extremal. If
$\sigma_\lambda+a\abs{m(\lambda)}^2\delta_\lambda$ is extremal, then
a contradiction arises from
Corollary~\ref{cor:extremal-measures-properties}. If
$\sigma_\lambda+a\abs{m(\lambda)}^2\delta_\lambda$ is not extremal,
then, using Theorem
\ref{thm:extremality-of-measures}, one obtains an extremal measure giving
weight to $\lambda$ so again we have a contradiction by
Corollary~\ref{cor:extremal-measures-properties}. Therefore
  $\sigma+a\abs{m(\lambda)}^2\delta_\lambda$ is not in
  $\mathcal{V}_m^{\rm ext}(\widetilde{S})$.
\end{proof}
\begin{theorem}
\label{cor:no-self-adjoint-measure}
Let $\mathcal{B}$ and $\mathcal{B}'$ be de Branges spaces such that they
are set-wise equal. If $\rho$ is a solution to the generalized
moment problem for $\mathcal{B}$, then $\rho+a\delta_{\lambda}$
($a>0$, $\lambda\not\in\supp\rho$) is not an extremal solution to the generalized moment
problem for $\mathcal{B}'$.
\end{theorem}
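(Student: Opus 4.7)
The plan is to reduce the claim to Theorem~\ref{thm:generalized-duran} by translating between solutions of the generalized moment problem and measures in $\mathcal{V}_m(S)$ via Theorems~\ref{thm:generalized-moment-problem-operator-measure}, \ref{thm:operator-measure-generalized-moment-problem}, and \ref{thm:density-of-functions}. The key observation is that, up to a single computation, the space $\mathcal{B}'$ must in fact coincide with the perturbed de Branges space $\mathcal{B}_\sim$ produced by Lemma~\ref{lem:perturbation-db-space}.

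First, I would dispose of the trivial case where $\rho+a\delta_\lambda$ is not a solution of the generalized moment problem for $\mathcal{B}'$; then there is nothing to prove. Assuming otherwise, for every $f,g$ in the common underlying set of $\mathcal{B}$ and $\mathcal{B}'$, Definition~\ref{def:generalized-moment-problem} applied to both spaces gives
\[
\inner{g}{f}_{\mathcal{B}'}
        =\int_{\reals}\cc{g(x)}f(x)\,d(\rho+a\delta_\lambda)(x)
        =\inner{g}{f}_{\mathcal{B}}+a\,\cc{g(\lambda)}f(\lambda),
\]
which is precisely the inner product~\eqref{eq:other-inner-product} used to define $\mathcal{B}_\sim$. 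Hence $\mathcal{B}'=\mathcal{B}_\sim$ as Hilbert spaces.

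Next, I would produce a function $m\in\mathcal{B}$ with no real zeros. Since the multiplication operator $S$ in $\mathcal{B}$ belongs to $\ournewclass$, Proposition~\ref{prop:good-gauge} furnishes $S$ with a spectral gauge $m$, and by Lemma~\ref{lem:good-gauge-zero-free-in-r} this $m$ has no real zeros. Define $\sigma(\partial):=\int_\partial|m(x)|^2\,d\rho(x)$. Theorem~\ref{thm:generalized-moment-problem-operator-measure} then yields $\sigma\in\mathcal{V}_m(S)$; since $|m|^2>0$ on the real line, $\supp\sigma=\supp\rho$, so $\lambda\notin\supp\sigma$.

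Applying Theorem~\ref{thm:generalized-duran} gives
\[
\sigma+a|m(\lambda)|^2\delta_\lambda\in\mathcal{V}_m(\widetilde{S})\setminus\mathcal{V}_m^{\rm ext}(\widetilde{S}),
\]
where $\widetilde{S}$ is multiplication by the independent variable in $\mathcal{B}_\sim=\mathcal{B}'$. Dividing this perturbed measure by $|m|^2$ returns $\rho+a\delta_\lambda$, so by the equivalence in Theorem~\ref{thm:density-of-functions}, applied inside $\mathcal{B}'$ with the same $m$ (which still has no real zeros there, as $\mathcal{B}'$ and $\mathcal{B}$ coincide set-wise), $\rho+a\delta_\lambda$ cannot be an extremal solution for $\mathcal{B}'$. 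The only place where any genuine work is required is the identification $\mathcal{B}'=\mathcal{B}_\sim$ in the nontrivial case; the remainder is bookkeeping with the machinery already developed.
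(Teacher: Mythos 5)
Your proof is correct and follows essentially the same route as the paper: use the solution property to identify $\mathcal{B}'$ with the perturbed space $\mathcal{B}_\sim$ of Lemma~\ref{lem:perturbation-db-space}, pass from $\rho$ and $\rho+a\delta_\lambda$ to measures in $\mathcal{V}_m(S)$ and $\mathcal{V}_m(\widetilde{S})$ via Theorem~\ref{thm:generalized-moment-problem-operator-measure}, and conclude non-extremality from Theorems~\ref{thm:generalized-duran} and \ref{thm:density-of-functions}. You merely make explicit some points the paper leaves implicit (the trivial non-solution case and the existence of a gauge $m$ with no real zeros), which is fine.
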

\begin{remark}
  \label{rem:refinements-to-main-assertion}
  Before proving the assertion, there are two points to
  comment. First, the measure $\rho+a\delta_{\lambda}$ is not
  necessarily a solution of the generalized moment problem for $\mathcal{B}'$. Second,
  the particular case when in the hypothesis $\rho$ is an extremal
  solution has interesting applications (see
  Corollary~\ref{cor:destruction-density} and Section~\ref{sec:examples}).
\end{remark}
\begin{proof}
Assume that $\rho+a\delta_{\lambda}$ is an extremal solution to
the generalized moment problem for $\mathcal{B}'$. Thus,
\begin{equation*}
  \norm{f}^{2}_{\mathcal{B'}}=\int_{\reals}\abs{f(x)}^{2}d\rho +
  a\abs{f(\lambda)}^{2}
\end{equation*}
By hypothesis,
\begin{equation*}
  \norm{f}^{2}_{\mathcal{B}}=\int_{\reals}\abs{f(x)}^{2}d\rho\,.
\end{equation*}
Therefore the inner products of $\mathcal{B}$ and $\mathcal{B}'$ are
related as in \eqref{eq:other-inner-product}. Now, since $\rho$ and
$\rho+a\delta_{\lambda}$ generate via
Theorem~\ref{thm:generalized-moment-problem-operator-measure} spectral
measures of the operators of multiplication in $\mathcal{B}$ and $\mathcal{B}'$, one
obtains a contradiction from Theorem~\ref{thm:generalized-duran}.
\end{proof}
\begin{remark}
  \label{rem:poltoratsky}
  There is an alternative proof of
  Theorem~\ref{cor:no-self-adjoint-measure}. Indeed, assume that
  $\rho+ a\delta_{\lambda}$ is an extremal solution of the moment
  problem for $\mathcal{B}'$ and consider a function
  $g\in L_{2}(\reals,\rho+a\delta_{\lambda})$ whose zero set is
  $\supp\rho$. Thus,
  $\norm{g}_{L_{2}(\reals,\rho+a\delta_{\lambda})}\ne 0$ and, since $\Psi$
  is surjective and norm preserving, the function $\Psi^{-1}g$ is a nonzero element of
  $\mathcal{B}'$ and therefore a nonzero element of $\mathcal{B}$. But
  we get a contradiction since the norm of the restriction to the real
  line of $\Psi^{-1}g$ has zero norm in $L_{2}(\reals,\rho)$.
\end{remark}

\begin{corollary}
  \label{cor:destruction-density}
  Let $\rho$ be a solution to the generalized moment problem for a de Branges
  space $\mathcal{B}$. If $\mathcal{A}\subset\Psi\mathcal{B}$ is dense in $L_2(\reals,\rho)$, then $\mathcal{A}$ is not dense in
$L_2(\reals,\rho+a\delta_{\lambda})$, where $a>0$ and $x\not\in\supp(\rho)$.
\end{corollary}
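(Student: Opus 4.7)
The plan is to deduce the corollary directly from Theorem~\ref{cor:no-self-adjoint-measure} by producing an explicit de Branges space in whose generalized moment problem $\rho+a\delta_\lambda$ appears as a solution. Lemma~\ref{lem:perturbation-db-space} and Corollary~\ref{cor:perturbed-solution-of-gen-moment-problem} supply exactly such a space: the linear manifold $\mathcal{B}$ equipped with the inner product
\[
\inner{g}{f}_{\sim}=\inner{g}{f}_{\mathcal{B}}+a\,\cc{g(\lambda)}f(\lambda),
\]
denoted $\mathcal{B}_{\sim}$, is a de Branges space, set-wise equal to $\mathcal{B}$, for which $\rho+a\delta_\lambda$ is a solution to the generalized moment problem.

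The argument then runs as follows. Being a solution to the generalized moment problem for $\mathcal{B}_{\sim}$, the measure $\rho+a\delta_\lambda$ makes the restriction map $\widetilde{\Psi}:\mathcal{B}_{\sim}\to L_2(\reals,\rho+a\delta_\lambda)$, $f\mapsto f\eval{\reals}$, into an isometry; since $\mathcal{B}_{\sim}$ is complete, its image is closed in $L_2(\reals,\rho+a\delta_\lambda)$. Applying Theorem~\ref{cor:no-self-adjoint-measure} with $\mathcal{B}'=\mathcal{B}_{\sim}$, the measure $\rho+a\delta_\lambda$ fails to be an extremal solution for $\mathcal{B}_{\sim}$. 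By Definition~\ref{def:extremal-solution}, this means $\widetilde{\Psi}$ is not unitary, and since it is already an isometry it must fail to be surjective. Hence $\widetilde{\Psi}\mathcal{B}_{\sim}$ is a \emph{proper} closed subspace of $L_2(\reals,\rho+a\delta_\lambda)$.

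To conclude, the crucial observation is that $\mathcal{B}$ and $\mathcal{B}_{\sim}$ consist of exactly the same entire functions, so $\Psi\mathcal{B}=\widetilde{\Psi}\mathcal{B}_{\sim}$ as sets of restrictions to $\reals$. Therefore $\mathcal{A}\subset\Psi\mathcal{B}$ lies inside a proper closed subspace of $L_2(\reals,\rho+a\delta_\lambda)$ and cannot be dense there. There is no deep obstacle in this argument; the corollary is essentially a translation of Theorem~\ref{cor:no-self-adjoint-measure} from the language of ``extremal solutions of a generalized moment problem'' into that of ``density of restrictions of entire functions''. The only point requiring care is that the set-wise equality $\mathcal{B}=\mathcal{B}_{\sim}$ is what lets the hypothesis $\mathcal{A}\subset\Psi\mathcal{B}$ transfer to $\mathcal{A}\subset\widetilde{\Psi}\mathcal{B}_{\sim}$, even though the Hilbert norms on the two spaces are different.
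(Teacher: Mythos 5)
Your proof is correct and follows essentially the same route as the paper: form $\mathcal{B}_\sim$ via Lemma~\ref{lem:perturbation-db-space} and Corollary~\ref{cor:perturbed-solution-of-gen-moment-problem}, apply Theorem~\ref{cor:no-self-adjoint-measure} with $\mathcal{B}'=\mathcal{B}_\sim$, and conclude that the restriction image is a proper closed subspace containing $\mathcal{A}$. You merely make explicit what the paper leaves implicit (the isometry $\widetilde{\Psi}$ has closed range, so non-surjectivity implies non-density), and you correctly observe that the density hypothesis on $\mathcal{A}$ in $L_2(\reals,\rho)$ is not actually needed for the argument, whereas the paper only uses it to note that $\rho$ is then extremal.
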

\begin{proof}
  The hypothesis implies that $\rho$ is an extremal solution, \ie\,
  $\Psi\mathcal{B}=L_2(\reals,\rho)$. As in
  Corollary~\ref{cor:perturbed-solution-of-gen-moment-problem}, let
  $\mathcal{B}_{\sim}$ be the de Branges space having the same elements of
  $\mathcal{B}$ with the inner product given by
  \eqref{eq:other-inner-product}. Then $\Psi\mathcal{B}_{\sim}$, which
  contains $\mathcal{A}$, is not dense in
  $L_2(\reals,\rho+a\delta_{\lambda})$ as a consequence of
  Theorem~\ref{cor:no-self-adjoint-measure}.
\end{proof}
\begin{remark}
  \label{rem:l2-zero-class}
  Note that a function is square-integrable with respect to $\rho$ if
  and only if it is square-integrable with respect to
  $\rho+a\delta_{\lambda}$. Nevertheless, there functions in the
  equivalence class of zero in $L_2(\reals,\rho)$ which have nonzero
  norm in $L_2(\reals,\rho)$.
\end{remark}
\section{Examples}
\label{sec:examples}

\subsection{The classical moment problem}
\label{sec:class-moment-probl}

According to \cite[Thm.~2.1.1]{MR0184042}, for a given real sequence
$\{s_k\}_{k=0}^\infty$ , there exists a Borel measure $\rho$ such that
\begin{equation*}
  s_k=\int_\reals t^kd\rho(t)\qquad \text{ for }k=0,1,\dots
\end{equation*}
if and only if, for all $k\in\nats$,
\begin{equation}
  \label{eq:positivity-condition}
  \det
  \begin{pmatrix}
    s_0 & \dots & s_k\\
    \vdots & \vdots & \vdots\\
    s_k & \dots & s_{2k}
  \end{pmatrix}>0\,.
\end{equation}
The measure is said to be a solution to the moment problem given by
the sequence $\{s_k\}_{k=0}^\infty$.
We refer to all sequences $\{s_k\}_{k=0}^\infty$ satisfying
\eqref{eq:positivity-condition} and normalized so that $s_0=1$ as
sequences of moments. To any sequence of moments there corresponds one
and only one Jacobi matrix
\begin{equation}
  \label{eq:jm}
  \begin{pmatrix}
q_1&b_1&0&0&\cdots\\
b_1&q_2&b_2&0&\\
0&b_2&q_3&b_3&\ddots\\
0&0&b_3&q_4&\ddots\\
\vdots&&\ddots&\ddots&\ddots
  \end{pmatrix}\,,
\end{equation}
where $\{q_{k}\}_{k=1}^{\infty}$ is a sequence of real numbers and
$\{b_{k}\}_{k=1}^{\infty}$ is a sequence of positive numbers (see \cite[paragraph after Eq.~1.8]{MR0184042} and
\cite[p.~93]{MR1627806}). Given a separable Hilbert space $\cH$ and
an orthonormal basis $\{\delta_k\}_{k=1}^\infty$, with any matrix
\eqref{eq:jm}, one can uniquely associate a closed symmetric operator
\cite[Sec.~47]{MR1255973}. This operator, denoted by $A$ and called minimal
Jacobi operator, has either deficiency indices $n_+(A)=n_-(A)=0$ or $n_+(A)=n_-(A)=1$
\cite[Ch.~7, Thm.~1.1]{MR0222718}. When the Jacobi operator has
deficiency indices $(0,0)$ the matrix \eqref{eq:jm} is said to be in
the limit point case, otherwise the matrix \eqref{eq:jm} is said to be
in the limit circle case.  Due to the bijection between Jacobi
matrices and sequences of moments, the limit point and limit circle dichotomy
corresponds to the determinate and indeterminate dichotomy for
sequences of moments.

It is established in \cite[Ch.~4]{MR0184042} that
\begin{equation}
  \label{eq:moments-powers-A}
  s_k=\inner{\delta_1}{A^k\delta_1}\qquad \text{ for }k=0,1,\dots
\end{equation}
This shows that if $A$ has deficiency indices $n_+(A)=n_-(A)=1$, then
the spectral measures of canonical selfadjoint extensions are
solutions of the moment problem, \ie\ an indeterminate sequence of
moments admits various solutions. According to
\cite[Cor.\,2.2.4]{MR0184042}, when the deficiency indices $n_+(A)$
and $n_-(A)$ vanish, there are only one solution to the moment problem. This
justifies the terminology.

A Jacobi operator $A$ with deficiency
indices $n_+(A)=n_-(A)=1$ is in $\ournewclass$. Indeed, if one assume that
$\lambda\in\widehat{\spec}(A)$, then, since the spectral kernel does
not decrease under extensions, $\lambda\in\spec(A_\gamma)$ for any
selfadjoint extension $A_\gamma$ of $A$. But the spectra of canonical
selfadjoint extensions of $A$ are disjoint (see the proof of
\cite[Thm.~4.2.4]{MR0184042}.

For a Jacobi operator $A\in\ournewclass$, the function $\xi_A$ defined
in \eqref{eq:one-def-xi} is given by
$\xi_A(z)=\sum_{k=1}^\infty P_{k-1}(z)\delta_k$, where $P_k$ is the k-th
orthogonal polynomial of the first kind associated with \eqref{eq:jm}
(see \cite[Ch.~1 Sec.~2, Ch.~4 Sec.~1]{MR0184042} and
\cite[Ch.~7 Sec.~2 ]{MR0222718}). Thus we have the unitary map
\begin{equation*}
  \Phi_A:\cH\to\mathcal{B}_A\,.
\end{equation*}
Note first that, since the finite linear combinations of the basis
$\{\delta_k\}_{k=1}^\infty$ are dense in $\cH$, the polynomials are
dense in $\mathcal{B}_A$. This is a peculiarity of de Branges spaces generated
by Jacobi operators. Also,
\begin{equation*}
  (\Phi_AA\delta_1)(z)=\inner{\xi_A(\cc{z})}{A\delta_1}
  =\inner{A^*\xi_A(\cc{z})}{\delta_1}
  =\inner{\cc{z}\xi_A(\cc{z})}{\delta_1}=z
\end{equation*}
since $\inner{\xi_A(\cc{z})}{\delta_1}=1$. Thus
\begin{equation}
  \label{eq:phi-operator}
  (\Phi_AA^k\delta_1)(z)=z^k\quad\text{ for all }k\in\nats\cup\{0\}\,.
\end{equation}
\begin{theorem}
  \label{thm:classic-general}
  The measure $\rho$ is a solution to an indeterminate  moment problem if and only if
  $\rho$ is a solution to the generalized moment problem for $\mathcal{B}_A$.
\end{theorem}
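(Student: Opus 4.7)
The plan is to exploit the unitary identification $\Phi_A:\cH\to\mathcal{B}_A$ coming from the functional model. Under this map, $A^k\delta_1$ is carried to the monomial $z^k$ by \eqref{eq:phi-operator}, and combining this with $s_k=\inner{\delta_1}{A^k\delta_1}$ from \eqref{eq:moments-powers-A} yields, for all $j,k\in\nats\cup\{0\}$,
\begin{equation*}
\inner{z^j}{z^k}_{\mathcal{B}_A}
=\inner{A^j\delta_1}{A^k\delta_1}_{\cH}
=\inner{\delta_1}{A^{j+k}\delta_1}_{\cH}
=s_{j+k}.
\end{equation*}
Since polynomials form a dense subspace of $\mathcal{B}_A$ (as remarked right after \eqref{eq:phi-operator}), the inner product in $\mathcal{B}_A$ is completely determined on this dense subspace by the moment sequence $\{s_k\}$.

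For the forward implication, assume $\rho$ is a solution to the (indeterminate) moment problem for $\{s_k\}$. The display above gives $\int_\reals\cc{p(x)}q(x)\,d\rho(x)=\inner{p}{q}_{\mathcal{B}_A}$ for every pair of polynomials $p,q$. Given $f\in\mathcal{B}_A$, choose polynomials $p_n\to f$ in $\mathcal{B}_A$; by the isometry just established on polynomials, $\{p_n\}$ is Cauchy in $L_2(\reals,\rho)$ with some limit $\tilde f$. Passing to a subsequence, $p_{n_k}\to\tilde f$ holds $\rho$-almost everywhere, while continuity of point evaluation (axiom (A1) of Definition~\ref{def:axiomatic-db}) forces $p_{n_k}(x)\to f(x)$ for every $x\in\reals$. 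Hence $\tilde f=f\eval{\reals}$ in $L_2(\reals,\rho)$, and taking limits in $\norm{p_n}^2_{\mathcal{B}_A}=\int_\reals\abs{p_n(x)}^2 d\rho(x)$ yields $\norm{f}^2_{\mathcal{B}_A}=\int_\reals\abs{f(x)}^2 d\rho(x)$. Polarization then promotes this to the full sesquilinear identity of Definition~\ref{def:generalized-moment-problem}, showing that $\rho$ solves the generalized moment problem for $\mathcal{B}_A$.

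The converse is essentially immediate: if $\rho$ solves the generalized moment problem for $\mathcal{B}_A$, specialize the defining identity to $f(z)=z^j$ and $g(z)=1$ (both in $\mathcal{B}_A$) to obtain $\int_\reals t^j\,d\rho(t)=\inner{1}{z^j}_{\mathcal{B}_A}=s_j$ for every $j\ge 0$. Indeterminacy is automatic since $A\in\ournewclass$ has deficiency indices $(1,1)$, which, by the dichotomy recalled at the beginning of the example, is precisely the limit circle case.

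The one delicate step is the identification, in the forward direction, of the $L_2(\reals,\rho)$ limit of the approximating polynomials with the restriction $f\eval{\reals}$. This is where $L_2$-convergence must be reconciled with the pointwise convergence supplied by the reproducing kernel, but it is handled by the standard subsequence argument sketched above, so there is no substantial obstacle.
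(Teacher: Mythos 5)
Your proof is correct and follows essentially the same route as the paper's: compute the $\mathcal{B}_A$-inner products of monomials via \eqref{eq:moments-powers-A} and \eqref{eq:phi-operator}, extend to all of $\mathcal{B}_A$ by density of the polynomials and polarization, and read off the moments by testing against $1$ and $z^k$ for the converse. The only difference is that you spell out the identification of the $L_2(\reals,\rho)$-limit of the approximating polynomials with $f\eval{\reals}$ (via an a.e.\ convergent subsequence and continuity of point evaluation), a step the paper leaves implicit when it invokes density and polarization.
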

\begin{proof}
  Assume that $\rho$ is a solution to the generalized moment problem
  for $\mathcal{B}_A$. Then, in view of \eqref{eq:moments-powers-A}
  and \eqref{eq:phi-operator}, one has
  \begin{equation*}
   s_k=\inner{1}{z^k}_{\mathcal{B}_A}=\int_\reals t^kd\rho(t)\qquad \text{ for }k=0,1,\dots
\end{equation*}
which means that $\rho$ is a solution to the moment problem given by
$\{s_k\}_{k=0}^\infty$.

Now, suppose that $\rho$ is a solution to the moment problem given by
$\{s_k\}_{k=0}^\infty$. Hence, by \eqref{eq:moments-powers-A}, one has
\begin{equation*}
  \int_\reals t^kd\rho(t)=\inner{\delta_1}{A^k\delta_1}\qquad \text{ for }k=0,1,\dots
\end{equation*}
One then verifies that if $R(t)$ is a polynomial, then
\begin{equation*}
  \int_\reals R(t)d\rho(t)=\inner{\delta_1}{R(A)\delta_1}\,.
\end{equation*}
Thus
\begin{equation*}
  \int_\reals \abs{R(t)}^2d\rho(t)=\inner{R(A)\delta_1}{R(A)\delta_1}\,.
\end{equation*}
But, due to \eqref{eq:phi-operator},
\begin{equation*}
  \inner{R(A)\delta_1}{R(A)\delta_1}=\inner{R}{R}_{\mathcal{B}_A}
\end{equation*}
For completing the proof one uses the fact that the polynomials are
dense in $\mathcal{B}_A$ and the polarization identity.
\end{proof}

In the context of the classical moment problem,
Theorem~\ref{thm:generalized-duran} corresponds to 
\cite[Prop.~4.1(a)]{arXiv:1610.01699},
which says that if $\rho$ is an extremal solution to an indeterminate
moment problem, then $\rho+a\delta_{\lambda}$ ($a>0$,
$\lambda\not\in\supp\rho$) is not extremal although is a solutions of
a moment problem. Thus, according to
Corollary~\ref{cor:destruction-density}, the density of the
polynomials in $L_{2}(\reals,\rho)$ no longer holds in
$L_{2}(\reals,\rho+a\delta_{\lambda})$. Note that the sequences
of moments for which $\rho$ and $\rho+a\delta_{\lambda}$ are
different, therefore the corresponding Jacobi operators are different
and the corresponding de Branges spaces are different. However, the associated
de Branges spaces are set-wise equal, which is actually a consequence of the
fact that the set of polynomials are dense in these two spaces, as well as the
argument in the proof of Lemma~\ref{lem:perturbation-db-space}.

We conclude this example by pointing out that it was probably first
mentioned in \cite[Remark 4.5 (iii)]{teschl-suggest} that
the set of points obtained by adding a point to the spectrum of a
selfadjoint extension of a discrete Schr\"odinger operator in the
limit circle case is no longer the spectrum of a
selfadjoint extension of a discrete Schr\"odinger operator in the
limit circle case (cf. \cite[Sec.\,7 Example\,2]{MR0045281} and
\cite[Chap.\,2 Sec.\,7]{MR933088}).

\subsection{Schr\"odinger operators with measures}
\label{sec:generalized-operators}

The following example is based on \cite{benamor} (see also
\cite{luger}).  Let us consider the differential expression given
informally by
\begin{equation*}
\tau_\mu:=-\frac{d^2}{dx^2}+\mu, \quad x\in[0,b],
\end{equation*}
where $\mu$ is a signed Borel measure on $[0,b]$. Properly, for a function $\varphi$
in $AC[0,b]$ we define 
\begin{equation}
\label{eq:quasi-deriv}
\varphi^{[1]}(x) := \varphi'(x) - \int_{[0,x]}\varphi(t)d\mu(t),
\end{equation}
then we define
\begin{equation}
\label{eq:sing-differential expressión}
\begin{gathered}
\dom(\tau_\mu) := \left\{\varphi\in AC[0,b] : \varphi^{[1]}\in AC[0,b]\right\},
\\[1mm]
\tau_\mu\varphi := - (\varphi^{[1]})'.
\end{gathered}
\end{equation}

The derivative of an element $\varphi\in\dom(\tau_\mu)$ has a (unique) representative
for which \eqref{eq:quasi-deriv} holds for every $x\in[0,b]$; in what follows $\varphi'$
will denote this particular representative. From \cite[Thm.~2.4]{benamor} it follows
that $\varphi'$ may have  discontinuities. Indeed,
\begin{equation}
\label{eq:discontinuity-of-deriv}
\varphi'(x) - \varphi'(x-) = \varphi(x)\mu(\{x\}),\quad x\in(0,b].
\end{equation}

Let $\xi(z,x)$ be the solution to the eigenvalue equation $\tau_\mu\varphi = z\varphi$,
$z\in\C$, in the sense given by \eqref{eq:sing-differential expressión}, that
satisfies the initial conditions $\xi(z,0)=1$, $\xi^{[1]}(z,0)=0$.
By \cite[Thm.~2.3]{benamor}, this solution exists and is a real entire
function of $z$ for
every fixed $x\in[0,b]$.

Let $A$ denote the symmetric operator in $L_2(0,b)$ given by
\begin{equation}
  \label{eq:symmetric-operator}
\begin{gathered}
\dom(A) 
	:= 	\left\{\begin{gathered}\varphi\in \dom(\tau_\mu) : \tau_\mu\varphi\in L_2(0,b),
		\\ 
		\varphi^{[1]}(0) = \varphi'(b-) = \varphi(b) = 0\end{gathered}\right\},
\\[1mm]
A\varphi := -(\varphi^{[1]})'.
\end{gathered}
\end{equation}
The boundary condition $\varphi^{[1]}(0) = 0$ is just the usual one
$\varphi'(0)  + h \varphi(0) = 0$ with $h= -\mu(\{0\})$. On the other hand, the
boundary condition at $b$ is consequence of \eqref{eq:discontinuity-of-deriv} plus
the fact that $A$ is the closure of the minimal operator.

Standard arguments yields that $A$ has deficiency indices
$(1,1)$ and $\xi(z,\cdot)\in\ker(A^*-zI)$
for all $z\in\C$. Since Green's identity holds true for this kind of
operators \cite[Thm.~2.2]{benamor}, its canonical selfadjoint
extensions are defined in the usual way,
\begin{gather*}
\dom(A_\gamma) 
	:= 	\left\{\begin{gathered}\varphi\in \dom(\tau_\mu) : 
		\tau_\mu\varphi\in L_2(0,b),\;\varphi^{[1]}(0)=0,
		\\
		\varphi(b)\cos{\gamma}+\varphi'(b)\sin{\gamma}=0,
		\text{ fixed }\gamma\in[0,\pi)
		\end{gathered}\right\},
\\[1mm]
A_\gamma\varphi := -(\varphi^{[1]})'.
\end{gather*}

The de Branges space associated to $A$ is given by
\begin{equation}
\label{eq:de-branges-space-generalized-SL}
\cB_A := \left\{f(z)=\int_0^b\xi(z,x)\varphi(x)dx: \varphi\in L_2(0,b)\right\},
\quad
\norm{f}_{\cB_A} := \norm{\varphi}_{L_2(0,b)}.
\end{equation}
$\cB_A$ is isometrically equal to $\cB(e_b)$, where $e_b(z) = \xi(z,b) + i\xi'(z,b)$
is an Hermite-Biehler function as shown in
\cite[Prop.~4.1]{benamor}. Thus, since $A$ is unitarily equivalent to
the multiplication operator in $\mathcal{B}_{A}$,  $A\in\mathscr{S}(L_{2}(0,b))$.

The solutions of the generalized moment problem for $\mathcal{B}_A$
correspond to the generalized spectral measures of $A$. Let us
consider here the extremal solutions of the generalized
moment problem. According to Theorem~\ref{thm:extremal-solution-gen-mp} and
Remark~\ref{rem:measure-in-sl-theory}, any extremal solution is given by
\begin{equation}
\label{eq:s-l-spectral-measure}
  \rho_\gamma=\sum_{\lambda\in\spec(A_\gamma)}
\frac{\delta_\lambda}{\norm{\xi(\cdot,\lambda)}^2_{L_2(0,b)}}\,,
\end{equation}
where $A_\gamma$ is a selfadjoint extension of $A$. As in the classical
Sturm-Liouville theory, the Fourier transform introduced in 
\eqref{eq:de-branges-space-generalized-SL} generates an unitary map between 
$L_2(0,b)$ and $L_2(\R,d\rho_\gamma)$ \cite[Sec. 3]{benamor}.

\begin{theorem}
  \label{prop:example-schroedinger-with-measures}
  Let $A$ be the generalized Schr\"odinger operator defined by $\tau_\mu$ as in
  \eqref{eq:symmetric-operator}. Let
  $A_\gamma$ an arbitrarily chosen selfadjoint extension of $A$ and
  $\rho_\gamma$ the corresponding spectral measure given in
  \eqref{eq:s-l-spectral-measure}. If
  \begin{equation*}
    \widetilde{\rho}=\rho_\gamma+s\delta_\lambda
  \end{equation*}
  with $\lambda\not\in\spec(A_\gamma)$ and $s>0$, then:
  \begin{enumerate}[(a)]
  \item There is no signed Borel measure $\nu$ in [0,b] such that
    $\widetilde{\rho}$ is the spectral measure corresponding to
    selfadjoint extensions of $\widetilde{A}$ generated by
    $\tau_{\nu}$.
  \item The set of entire functions
    \begin{equation}
      \label{eq:restricted-db-set}
      \left\{f(z)
      	:=\int_0^b\cos(\sqrt{z} x)\varphi(x)dx : \varphi\in L_2(0,b)\right\}
    \end{equation}
 is a proper subset of $L_2(\reals,\widetilde{\rho})$.
  \end{enumerate}
\end{theorem}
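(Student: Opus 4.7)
The plan is to reduce both parts to Theorem~\ref{cor:no-self-adjoint-measure}, the decisive preliminary step being a set-wise identification of the de Branges spaces \eqref{eq:de-branges-space-generalized-SL} associated with different measure potentials on $[0,b]$. Concretely, I would first establish the auxiliary fact that for any two signed Borel measures $\mu,\nu$ on $[0,b]$, the de Branges spaces $\mathcal{B}_{A_\mu}$ and $\mathcal{B}_{A_\nu}$ coincide as sets of entire functions (though in general not isometrically). I expect this step to be the main obstacle.

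My plan for the auxiliary fact is to exhibit a Gelfand--Levitan-type transformation operator: a kernel $K(x,t)$ on $0\le t\le x\le b$ such that
\begin{equation*}
\xi_{A_\mu}(z,x)=\xi_{A_\nu}(z,x)+\int_0^x K(x,t)\xi_{A_\nu}(z,t)\,dt,\qquad z\in\C,
\end{equation*}
with the associated Volterra operator on $L_2(0,b)$ admitting a bounded inverse. The kernel should be constructed by iterating the Volterra integral equation satisfied by $\xi_{A_\mu}$, itself a consequence of the definition of $\tau_\mu$ via \eqref{eq:quasi-deriv} and the initial conditions $\xi(z,0)=1$, $\xi^{[1]}(z,0)=0$ (cf.\ \cite[Thm.~2.3]{benamor}). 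Substituting the identity above into the definition of $\mathcal{B}_{A_\mu}$ and applying Fubini yields $\mathcal{B}_{A_\mu}\subseteq\mathcal{B}_{A_\nu}$; the reverse inclusion follows from the analogous identity with the roles of $\mu$ and $\nu$ interchanged.

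Granted this auxiliary fact, part~(a) follows by contradiction: if $\widetilde{\rho}=\rho_\gamma+s\delta_\lambda$ were the spectral measure of a selfadjoint extension of some $\widetilde{A}=A_\nu$, then by Theorem~\ref{thm:extremal-solution-gen-mp} it would be an extremal solution of the generalized moment problem for $\mathcal{B}_{\widetilde{A}}$, while $\rho_\gamma$ is already extremal for $\mathcal{B}_A$; the set-wise equality $\mathcal{B}_A=\mathcal{B}_{\widetilde{A}}$ then contradicts Theorem~\ref{cor:no-self-adjoint-measure}. For part~(b), the set \eqref{eq:restricted-db-set} is exactly $\mathcal{B}_{A_0}$ corresponding to $\mu=0$, since in that case $\xi_{A_0}(z,x)=\cos(\sqrt{z}x)$; by the auxiliary fact $\mathcal{B}_{A_0}=\mathcal{B}_A$ as sets. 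Every $f\in\mathcal{B}_{A_0}$ then satisfies $\int_\reals|f|^2\,d\rho_\gamma=\|f\|^2_{\mathcal{B}_A}<\infty$ and $|f(\lambda)|<\infty$ since $f$ is entire, so $f\in L_2(\reals,\widetilde{\rho})$. The inclusion is strict: the equivalence class in $L_2(\reals,\widetilde{\rho})$ of the characteristic function of $\{\lambda\}$ cannot arise as the restriction of any $f\in\mathcal{B}_{A_0}$, for such an $f$ would vanish on $\supp(\rho_\gamma)=\spec(A_\gamma)$, forcing $\|f\|_{\mathcal{B}_A}=0$ by extremality of $\rho_\gamma$ for $\mathcal{B}_A$, hence $f\equiv 0$, contradicting $f(\lambda)=1$.
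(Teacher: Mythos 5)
Your proposal is correct and follows the same logical skeleton as the paper: establish that the de Branges spaces $\mathcal{B}_{A_\mu}$ attached to all admissible measure potentials on $[0,b]$ coincide as sets with the free space $\{\int_0^b\cos(\sqrt{z}x)\varphi(x)dx:\varphi\in L_2(0,b)\}$, and then feed this set-wise equality into Theorem~\ref{cor:no-self-adjoint-measure}. The differences are in how the two inputs are supplied. For the set-wise identification, the paper simply cites \cite[Thm.~4.4]{benamor}, whereas you propose to rebuild it via a transformation operator $\xi_{A_\mu}(z,x)=\xi_{A_\nu}(z,x)+\int_0^xK(x,t)\xi_{A_\nu}(z,t)\,dt$ with $I+K$ boundedly invertible, followed by Fubini; this is essentially the mechanism behind the cited result (Ben Amor--Remling obtain it through transformation operators for measure potentials), so your route is sound but amounts to reproving the reference, and the convergence/boundedness of the iterated Volterra kernel for a general signed measure is exactly the nontrivial work you correctly flag as the main obstacle --- in the paper it is outsourced rather than redone. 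For part (b), the paper invokes Theorems~\ref{thm:density-of-functions} and \ref{thm:generalized-duran} (non-extremality of $\widetilde{\rho}$ means the restriction map $\Psi$ is not surjective), while you give the more elementary direct argument: a function in the set restricting to $\chi_{\{\lambda\}}$ would vanish on $\spec(A_\gamma)$, hence have zero $\mathcal{B}_A$-norm by the Parseval identity for $\rho_\gamma$, hence vanish identically, contradicting $f(\lambda)=1$. This is precisely the argument of Remark~\ref{rem:poltoratsky}/Corollary~\ref{cor:destruction-density}, and it buys a self-contained proof of (b) that does not pass through the perturbed de Branges space $\mathcal{B}_\sim$; one small remark is that only the isometry coming from $\rho_\gamma$ being a solution of the generalized moment problem is needed there, not its extremality. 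Your reduction in part (a) correctly interprets ``spectral measure of a selfadjoint extension of $\widetilde{A}$'' as an extremal solution for $\mathcal{B}_{\widetilde{A}}$ via Theorem~\ref{thm:extremal-solution-gen-mp}, matching the paper's use of Theorem~\ref{cor:no-self-adjoint-measure}.
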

\begin{proof}
According to \cite[Thm.~4.4]{benamor}, the set of functions in the de Branges space
$\cB_A$ coincides with the linear set
\begin{equation*}
\left\{f(z) =\int_0^b\cos(\sqrt{z} x)\varphi(x)dx : \varphi\in L_2(0,b)\right\},
\end{equation*}
for any signed Borel measure supported in $[0,b]$. Then, assertion (a) is a 
direct consequence of Theorem~\ref{cor:no-self-adjoint-measure} while (b) follows 
from Theorems~\ref{thm:density-of-functions} and \ref{thm:generalized-duran}.
\end{proof}

\begin{remark}
  \label{rem:we-note}
  We note that regular Schr\"odinger operators are included in
  Theorem~\ref{prop:example-schroedinger-with-measures} as they
  correspond to signed Borel measures $\mu$ that are absolutely
  continuous with respect to the Lebesgue measure, i.e.,
\begin{equation}
  \label{eq:measure-actually-a-potential}
  d\mu(x) = q(x)dx\quad\text{ with } q\in L_1(0,b)\,.
\end{equation}
For these cases, assertion (a) of
Theorem~\ref{prop:example-schroedinger-with-measures} can be also
shown using an elementary argument based on the fact that, the
eigenvalues of a regular Schr\"odinger operator with separated
boundary conditions (Dirichlet case excluded) obey an asymptotic
formula of the form
\begin{equation}
\label{eq:asymptotic-regular-eigenvalues}
\lambda_n = c n^2 + O(1), \quad n\to\infty,
\end{equation}
for some $c>0$. Clearly, the addition of a point to the spectrum
amounts to shifting the enumeration by 1, producing an asymptotic
formula with an additional non trivial term linear in $n$ not present
in \eqref{eq:asymptotic-regular-eigenvalues}.  An analogous reasoning
holds for the Dirichlet case. We remark that
Theorem~\ref{prop:example-schroedinger-with-measures} holds for a wide
class of perturbations of the regular Laplacian for which we do not
have at our disposal an asymptotic formula of the kind of
\eqref{eq:asymptotic-regular-eigenvalues} for the spectra of the
corresponding selfadjoint extensions.
\end{remark}

The example presented here is also related via
\cite[Sec.\,17]{MR1943095} to the Gelfand-Levitan
theory for half-line Schr\"odinger operators. Let us discuss this in
detail. In \cite{MR0045281} and \cite[Sec.\,17]{MR1943095} the spectral measure of the
half-line Schr\"odinger operator is defined as follows. Suppose in
\eqref{eq:measure-actually-a-potential} that
$q\in L_{1,\text{loc}}([0,\infty))$ (with fixed boundary conditions at 0)
and take into account that $\xi(x,z)$ is the
same for any right endpoint $b\in\reals$. A spectral measure of the
half-line Schr\"odinger operator is a measure $\rho$ such that
the Parseval identity
\begin{equation}
  \label{eq:definition-spectral-measure-gl}
  \norm{\varphi}_{L_{2}(0,b)}^{2}=\norm{f}_{\mathcal{B}_{A}}^{2} 
  =\int_{\reals}\abs{f(\lambda)}^{2}d\rho(\lambda)
\end{equation}
holds for every $f=\Phi_{A}\varphi\in\bigcup_{b>0}\mathcal{B}_{A}$ (see
\eqref{eq:de-branges-space-generalized-SL} and compare with
\cite[Eq.\,2]{MR0045281} and \cite[Thm.\,3.2]{MR1943095}).
The classical Weyl theory tells us that a Schr\"odinger expression has
exactly one spectral measure if $q$ is in the limit point case at infinity,
otherwise it has an infinite set of spectral measures. We remark that in the limit circle
case, the definition of spectral measure given above includes measures
which do not correspond to selfadjoint operators in $L_{2}(0,b)$ (see
in \cite[Sec.\,8]{MR0045281} complementary conditions for a spectral
measure to correspond to selfadjoint extensions).

The central result in Gelfand-Levitan theory is a theorem on 
necessary and sufficient conditions for a measure to be a spectral
measure of a half-line Schr\"odinger operator, that is, the necesssary
and sufficient conditions on $\rho$ such that there is a potential
function $q\in L_{1,\text{loc}}([0,\infty))$ ensuring that
\eqref{eq:definition-spectral-measure-gl} holds. Due to the
nonlocality of the conditions (see \cite[Introduction]{MR0045281}),
there is a potential such that a point mass perturbation of a spectral
measure is the spectral measure of the half-line Schr\"odinger operator with
that potential and same boundary condition at 0
(see \cite[Sec.\,7 Example\,2]{MR0045281} and \cite[Chap.\,2 Sec.\,7]{MR933088}).

We emphasize that the last statement does not contradict
Theorem~\ref{prop:example-schroedinger-with-measures}
since the definition of spectral measure in the Gelfand-Levitan theory is different than
the one considered in this work. We believe that a connection between these different 
notions of a spectral measure can be established by resorting to the theory
of chains of de Branges spaces developed in \cite{woracek}. This remains to be done.

To conclude this example, let us revisit the assertion
of Theorem~\ref{prop:example-schroedinger-with-measures}(a) at the
light of \cite[Secs. 5 and 6]{benamor}. 
Since now it is relevant to keep track of the signed measure defining
the operators, let us denote by $A(\mu)$ the operator given in 
\eqref{eq:symmetric-operator}.
By \cite[Thms.\,5.2, 6.1 and 6.2]{benamor}, we know that there is a
bijective correspondence between the set of signed Borel measures $\mu$ in $[o,b]$,
and the set of even functions $v$ on $(-2b,2b)$ of bounded variation such that
$v(0)=-\mu(\{0\})$ and the operator $I+K_{v}$ is positive definite, where
\begin{equation*}
(K_{v}\varphi)(x)=\frac12\int_{0}^{b}\left(v(t-x)+v(t+x)\right)\varphi(t)dt.
\end{equation*}
By this correspondence, one has
\begin{equation*}
\norm{f}_{\mathcal{B}_{A(\nu)}}^{2}
	=\inner{\Phi_{A(0)}^{-1}f}{(I+K_{v})\Phi_{A(0)}^{-1}f}_{L_{2}(0,b)}
\end{equation*}

\begin{theorem}
  \label{thm:existence-potential}
  Assume the same hypothesis of
  Theorem~\ref{prop:example-schroedinger-with-measures}, then there is
  a signed Borel measure $\nu$ such that $\widetilde\rho$ is a 
  non extremal solution to the
  generalized moment problem for $\mathcal{B}_{A(\nu)}$.
\end{theorem}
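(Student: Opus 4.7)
The plan is to exploit the explicit parametrization of the de Branges spaces $\mathcal{B}_{A(\nu)}$ by even bounded-variation functions $v$ on $(-2b,2b)$ recalled immediately before the statement. By Corollary~\ref{cor:perturbed-solution-of-gen-moment-problem}, the measure $\widetilde\rho=\rho_\gamma+s\delta_\lambda$ already solves the generalized moment problem for the de Branges space $\mathcal{B}_\sim$ that has the same entire functions as $\mathcal{B}_{A(\mu)}$ but carries the perturbed inner product $\inner{g}{f}_\sim=\inner{g}{f}_{\mathcal{B}_{A(\mu)}}+s\,\cc{g(\lambda)}f(\lambda)$. It therefore suffices to produce a signed Borel measure $\nu$ on $[0,b]$ for which $\mathcal{B}_{A(\nu)}$ coincides isometrically with $\mathcal{B}_\sim$; non-extremality of $\widetilde\rho$ as a solution for $\mathcal{B}_{A(\nu)}$ then follows directly from Theorem~\ref{cor:no-self-adjoint-measure}, since the perturbation is a point mass off the support of $\rho_\gamma$.

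Transferring everything to $L_2(0,b)$ via $\Phi_{A(0)}^{-1}$, write $\varphi=\Phi_{A(0)}^{-1}f$ so that $f(\lambda)=\inner{\kappa_\lambda}{\varphi}_{L_2(0,b)}$ with $\kappa_\lambda(x):=\cos(\sqrt{\lambda}\,x)$. Using the norm identity recalled before the theorem, the perturbed norm becomes
\[
\norm{f}_\sim^2
=\inner{\varphi}{(I+K_{v_\mu})\varphi}_{L_2(0,b)}+s\,\abs{\inner{\kappa_\lambda}{\varphi}}^2
=\inner{\varphi}{\bigl[(I+K_{v_\mu})+s\,\kappa_\lambda\otimes\kappa_\lambda\bigr]\varphi}_{L_2(0,b)}.
\]
The decisive computation is the product-to-sum identity $\cos(\sqrt{\lambda}\,x)\cos(\sqrt{\lambda}\,t)=\tfrac12\bigl[\cos(\sqrt{\lambda}(t-x))+\cos(\sqrt{\lambda}(t+x))\bigr]$, which identifies the rank-one operator $s\,\kappa_\lambda\otimes\kappa_\lambda$ with $K_{v'}$, where $v'(u):=s\cos(\sqrt{\lambda}\,u)$.

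Define $v_\nu:=v_\mu+v'$. This function is even and of bounded variation on $(-2b,2b)$, while positivity is free: $I+K_{v_\nu}=(I+K_{v_\mu})+s\,\kappa_\lambda\otimes\kappa_\lambda\ge I+K_{v_\mu}>0$. The bijective correspondence of \cite[Thms.~5.2, 6.1, 6.2]{benamor} then produces a unique signed Borel measure $\nu$ on $[0,b]$ realizing $v_\nu$ (forced to obey $\nu(\{0\})=\mu(\{0\})-s$). The norm formula quoted before the theorem now yields $\norm{f}_{\mathcal{B}_{A(\nu)}}^2=\norm{f}_\sim^2$ for every $f$, hence $\mathcal{B}_{A(\nu)}=\mathcal{B}_\sim$ as Hilbert spaces, completing the proof. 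The one place where real work happens is the algebraic identification of the rank-one perturbation as a $K_{v'}$ operator; the subsequent verification of evenness, bounded variation, and positive definiteness of $I+K_{v_\nu}$ is essentially automatic.
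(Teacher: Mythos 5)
Your proposal is correct and follows essentially the same route as the paper's proof: the core step in both is identifying the rank-one perturbation $s\,\kappa_\lambda\otimes\kappa_\lambda$ with $K_{v'}$, $v'(u)=s\cos(\sqrt{\lambda}\,u)$, via the product-to-sum formula, and then invoking the Ben Amor--Remling correspondence to produce $\nu$, with non-extremality coming from Theorem~\ref{cor:no-self-adjoint-measure} (the paper routes this through Theorem~\ref{prop:example-schroedinger-with-measures}(a), which rests on the same theorem). Your explicit choice $v_\nu=v_\mu+v'$, together with the positivity check $I+K_{v_\nu}\ge I+K_{v_\mu}$, simply spells out the ``linearity of $K_v$ in $v$'' reduction that the paper uses to assume $\mu=0$ without loss of generality.
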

\begin{proof}
In view of Theorem~\ref{prop:example-schroedinger-with-measures}(a),
it only remains to prove the existence of a signed Borel measure
$\nu$ such that, for the corresponding space
$\mathcal{B}_{A(\nu)}$, the equality
\begin{equation}
\label{eq:parseval-identity-to-prove}
\norm{f}_{\mathcal{B}_{A(\nu)}}^{2}
	=\int_{\reals}\abs{f(\lambda)}^{2}d\widetilde\rho(\lambda).
\end{equation}
holds. From the linearity of $K_v$ as a map on the function $v$, 
we can assume without loss of generality that $\rho_{\gamma}$ 
is given by \eqref{eq:s-l-spectral-measure} for a selfadjoint extension of
$A(0)$. Thus,
  \begin{equation*}
\norm{f}_{\mathcal{B}_{A(\nu)}}^{2}
	=\norm{\Phi_{A(0)}^{-1}f}_{L_{2}(0,b)}^{2} + s\abs{f(\lambda)}^{2}\,.
\end{equation*}
According to \cite[Thm.\, 6.1]{benamor} (see above), the proof will be
established, once we find an even function $v$ on $(-2b,2b)$ of bounded 
variation such that
\begin{equation*}
\inner{\Phi_{A(0)}^{-1}f}{K_{v}\Phi_{A(0)}^{-1}f}
	=s\abs{f(\lambda)}^{2}.
\end{equation*}
However, using the notation introduced in Section \ref{sec:gener-moment-probl},
\begin{align*}
s\abs{f(\lambda)}^{2}
&=s\inner{\Phi_{A(0)}^{-1}f}{\xi_{A(0)}(\lambda)}
	\inner{\xi_{A(0)}(\lambda)}{\Phi_{A(0)}^{-1}f}
\\
&=\inner{\Phi_{A(0)}^{-1}f}{s\inner{\xi_{A(0)}(\lambda)}{\Phi_{A(0)}^{-1}f}
	\xi_{A(0)}}\,.
\end{align*}
So $K_{v}$ must be equal to
$s\inner{\xi_{A(0)}(\lambda)}{\cdot}\xi_{A(0)}(\lambda)$. Therefore,
using the fact that $\xi_{A(0)}(\lambda,\cdot)=\cos(\sqrt{\lambda}\,\cdot)$, we have
\begin{align*}
  (K_{v}\varphi)(x)&=s\int_{0}^{b}\xi_{A(0)}(\lambda,t)\xi_{A(0)}(\lambda,x)\varphi(t)dt\\
  &=
    s\int_{0}^{b}\cos(\sqrt{\lambda}t)\cos(\sqrt{\lambda}x)\varphi(t)dt\\
  &=\frac{s}{2}\int_{0}^{b}\left(\cos(\sqrt{\lambda}(t+s))+\cos(\sqrt{\lambda}(t-s))\right)\varphi(t)dt\,,
\end{align*}
which yields $v(\cdot)=s\cos(\sqrt{\lambda}\,\cdot)$.
\end{proof}
Note that, since $\nu(\{0\})=v(0)$, the point mass perturbation of a 
extremal spectral measure of a regular Schr\"odinger operator is a
(non extremal) spectral measure of a Schr\"odinger operator 
with measure-valued potential $\nu$ such that $\nu(\{0\})\ne 0$.

\subsection{Bessel operators}
\label{sec:bessel-operators}
Given $b\in(0,\infty)$, consider the differential expression
\begin{equation}
	\label{eq:differential-expression}
	\tau_q \defeq -\frac{d^2}{dx^2}+
			\frac{\nu^2-1/4}{x^2}+q(x), \quad x\in(0,b), 
			\quad \nu \in [0,\infty).
\end{equation}
We assume that $q\in L_{1,{\text{loc}}}(0,b)$ is a real-valued function such that
$\widetilde{q}\in L_1(0,b)$, where 
\begin{equation}
\label{eq:conditions-q}
\widetilde{q}(x)
	\defeq \begin{cases}
			x q(x) 						 & \text{if } \nu > 0,\\
			x \left(1-\log(x)\right)q(x) & \text{if } \nu = 0.
\end{cases}
\end{equation}

As shown in \cite[Thm. 2.4]{kostenko}, $\tau_q$ is regular at $x=b$ whereas at $x=0$ 
it is in the limit point case if $\nu\geq 1$ or in the limit circle case 
if $\nu\in [0,1)$. For the later case we assume the additional boundary condition
\begin{equation}
\label{eq:boundary-left-condition-limit-circle}
\lim_{x \to 0+}x^{\nu-\frac12}\left((\nu+1/2)\varphi(x)-x\varphi'(x)\right) = 0.
\end{equation}

The expression \eqref{eq:differential-expression}, along with the boundary 
condition \eqref{eq:boundary-left-condition-limit-circle} when $\nu\in [0,1)$,
gives rise to a closed, regular, symmetric operator $A$, whose deficiency indices
$(1,1)$ \cite[Sect. 4]{siltol2}.

The corresponding canonical selfadjoint extensions $A_{\gamma}$ are defined as usual,
\begin{equation}
\label{eq:bessel-self-adjoint}
\begin{gathered}
\dom(A_{\gamma})
	\defeq\left\{\begin{gathered}
			\varphi\in L^2(0,b): \varphi,\varphi'\in\text{AC}(0,b],\;
			\tau_q\varphi\in L^2(0,b),
			\\
			\text{ boundary condition }
			\eqref{eq:boundary-left-condition-limit-circle}
			\text{ if }\nu\in [0,1),
			\\
			\varphi(b)\cos{\gamma}+\varphi'(b)\sin{\gamma} = 0,
			\text{ fixed }\gamma\in[0,\pi)
			\end{gathered}\right\},
	\\[1mm]
	A_\gamma\varphi \defeq \tau_q\varphi.
\end{gathered}
\end{equation}
For every $\gamma\in[0,\pi)$, the spectrum of $A_{\gamma}$, beside of being simple
and discrete, has at most a finite number of negative eigenvalues 
\cite[Thm. 2.4]{kostenko}.

By \cite[Lemma 2.2]{kostenko}, the eigenvalue equation 
$\tau\varphi = z\varphi$ ($z\in\C$) admits a solution $\xi(z,x)$, 
real entire with respect to $z$, with derivative $\xi'(z,x)$ also
real entire. Moreover, $\xi(z,x)$ also obeys boundary condition
\eqref{eq:boundary-left-condition-limit-circle} whenever $\nu\in(0,1)$.
This in turn implies $\xi(z,\cdot) \in \ker(A^* - z I)$ for all $z\in\C$.
Hence, the associated de Branges space is
\begin{equation}
\label{eq:associated-dB-space}
\cB_{A} 
	:= \left\{f(z)=\int_0^b \xi(z,x)\varphi(x)dx : \varphi \in 
	L_2(0,s)\right\},\quad
\norm{F}_{\cB_{A}} 
		= \norm{\varphi}_{L_2(0,b)}.
\end{equation}
Also, the extremal solutions of the corresponding generalized moment problem are
given by
\begin{equation}
\label{eq:bessel-spectral-measure}
  \rho_\gamma=\sum_{\lambda\in\spec(A_\gamma)}
\frac{\delta_\lambda}{\norm{\xi(\cdot,\lambda)}^2_{L_2(0,b)}},
\end{equation}
where $A_\gamma$ is any selfadjoint extension of $A$. We now have the following
statement analogous to Theorem~\ref{prop:example-schroedinger-with-measures}.

\begin{theorem}
\label{prop:example-bessel}
Suppose $\nu>0$, and $q\in L_{1,{\text{loc}}}(0,b)$ such that $\widetilde{q}\in L_r(0,b)$ 
for some $r\in(2,\infty]$.
Let $A$ be the Bessel operator defined by $\tau_q$ as in
\eqref{eq:symmetric-operator}. Let $A_\gamma$ an arbitrarily chosen selfadjoint 
extension of $A$ and $\rho_\gamma$ the corresponding spectral measure given in
\eqref{eq:bessel-spectral-measure}. If
\begin{equation*}
	\widetilde{\rho}=\rho_\gamma+s\delta_\lambda
\end{equation*}
with $\lambda\not\in\spec(A_\gamma)$ and $s>0$, then:
\begin{enumerate}[(a)]
\item There is no $p\in L_{1,{\text{loc}}}(0,b)$ with
	$\widetilde{p}\in L_r(0,b)$ for some $r\in(2,\infty]$
	such that
	$\widetilde{\rho}$ is the spectral measure corresponding to
	selfadjoint extensions of $\widetilde{A}$ generated by
	$\tau_{p}$.
\item The set of entire functions
	\begin{equation}
	\label{eq:zero-db-set}
	\left\{f(z)\defeq
		\sqrt{\frac{\pi}{2}} z^{-\frac{\nu}{2}}
		\int_0^b\sqrt{x}J_\nu(\sqrt{z} x)\varphi(x)dx : \varphi\in L_2(0,b)\right\}
	\end{equation}
	is a proper subset of $L_2(\reals,\widetilde{\rho})$, where
	$J_\nu$ denotes the Bessel function of the first kind.
\end{enumerate}
\end{theorem}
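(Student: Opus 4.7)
The plan is to mimic the proof of Theorem~\ref{prop:example-schroedinger-with-measures}. Writing $A(q)$ for the Bessel operator defined by $\tau_q$ in \eqref{eq:bessel-self-adjoint}, the first and essential step is to establish that, under the hypothesis $\widetilde{q}\in L_r(0,b)$ with $r\in(2,\infty]$, the de Branges space $\cB_{A(q)}$ defined in \eqref{eq:associated-dB-space} coincides \emph{set-wise} with the ``free'' de Branges space in \eqref{eq:zero-db-set}, namely $\cB_{A(0)}$. The canonical free solution $\xi_0(z,x) := \sqrt{\pi/2}\,\sqrt{x}\, z^{-\nu/2} J_\nu(\sqrt{z}\,x)$ is real entire in $z$, lies in $\ker(A(0)^{*}-zI)$, and satisfies the boundary condition \eqref{eq:boundary-left-condition-limit-circle} when $\nu\in(0,1)$. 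For general admissible $q$, the transformation-operator theory for Bessel-type operators (as developed in \cite{kostenko} and applied in \cite{siltol2}) provides a Volterra-type representation $\xi(z,x)=\xi_0(z,x)+\int_0^x K(x,t)\xi_0(z,t)\,dt$, where $I+K$ is bounded and boundedly invertible on $L_2(0,b)$. A change of variable in the defining integrals then yields $\cB_{A(q)}=\cB_{A(0)}$ as sets.

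Granted this set-wise equality, part (a) follows immediately. Indeed, if a potential $p$ satisfying the same hypothesis existed and $\widetilde{\rho}$ were the spectral measure of a canonical selfadjoint extension of $A(p)$, then by \eqref{eq:bessel-spectral-measure} and Theorem~\ref{thm:extremal-solution-gen-mp} the measure $\widetilde{\rho}$ would be an extremal solution to the generalized moment problem for $\cB_{A(p)}$. On the other hand, $\rho_\gamma$ is a solution to the generalized moment problem for $\cB_{A(q)}$, and by the previous step $\cB_{A(q)}$ and $\cB_{A(p)}$ are set-wise equal. Theorem~\ref{cor:no-self-adjoint-measure} then rules out that $\rho_\gamma+s\delta_\lambda$ be an extremal solution for $\cB_{A(p)}$, yielding the contradiction. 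For part (b), the set \eqref{eq:zero-db-set} is by construction $\cB_{A(0)}$, hence set-wise equal to $\cB_{A(q)}$. Since $\rho_\gamma$ is extremal for $\cB_{A(q)}$, Theorem~\ref{thm:density-of-functions} gives that the restriction map $\Psi$ is a unitary from $\cB_{A(q)}$ onto $L_2(\reals,\rho_\gamma)$; Corollary~\ref{cor:destruction-density} then forces $\Psi\cB_{A(q)}$ to fail to be dense in $L_2(\reals,\widetilde{\rho})$, so \eqref{eq:zero-db-set} is a proper subset of $L_2(\reals,\widetilde{\rho})$.

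The main obstacle is the first step, the set-wise identification $\cB_{A(q)}=\cB_{A(0)}$. In the Schr\"odinger case of Theorem~\ref{prop:example-schroedinger-with-measures} this was supplied by \cite[Thm.~4.4]{benamor}; in the Bessel setting the transformation operator is more delicate because of the singularity at the origin, and one must verify that the integral kernel $K$ furnished by \cite{kostenko} extends to a bounded operator on $L_2(0,b)$ with bounded inverse precisely under the assumption $\widetilde{q}\in L_r(0,b)$ with $r>2$. This is also where the restriction $\nu>0$ becomes convenient, since the limit-point regime at the origin eliminates the need to track the additional boundary condition \eqref{eq:boundary-left-condition-limit-circle}. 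Once this technical point is settled, the remainder of the argument is a direct transcription of the proof of Theorem~\ref{prop:example-schroedinger-with-measures}.
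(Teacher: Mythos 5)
Your proposal is correct and follows essentially the paper's route: the set-wise identification of $\cB_{A(q)}$ with the free Bessel space \eqref{eq:zero-db-set} is exactly \cite[Thm.~4.2]{siltol2}, which the paper simply cites (and whose perturbative hypotheses are the source of the restrictions $\nu>0$ and $\widetilde{q}\in L_r(0,b)$, $r>2$), after which (a) and (b) follow from Theorem~\ref{cor:no-self-adjoint-measure}, Theorem~\ref{thm:density-of-functions} and Theorem~\ref{thm:generalized-duran} just as you argue. One small correction to your closing aside: for $\nu\in(0,1)$ the endpoint $x=0$ is still in the limit circle case, so the boundary condition \eqref{eq:boundary-left-condition-limit-circle} must be kept; the hypothesis $\nu>0$ does not buy the limit-point regime, it only excludes the logarithmic weight of the $\nu=0$ case in \eqref{eq:conditions-q} and matches the assumptions of \cite[Thm.~4.2]{siltol2}.
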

\begin{proof}
The assertions follow from Theorem~\ref{thm:density-of-functions}, 
Theorem~\ref{thm:generalized-duran} and Corollary~\ref{cor:no-self-adjoint-measure},
this time combined with Theorem~4.2 of \cite{siltol2}.
\end{proof}

\begin{remark}
The requirement on $\tilde{q}$ stated in Theorem~\ref{prop:example-bessel} 
is a technical limitation related to the perturbative argument used in the proof
of \cite[Thm.~4.2]{siltol2}. The technique used there can be easily modified to 
include $r=2$ but breaks down for $r\in[1,2)$.
\end{remark}

We emphasize that our theory does not rule out the possibility that $\tilde{\rho}$ is 
the spectral measure of a Bessel operator for some different value of the 
parameter $\nu$. In this respect, it is worth noticing the following:
Assuming $r\ge 2$ and resorting to \cite[Thm.~2.5]{kostenko}, 
one can show that the eigenvalues of $A_\gamma$ obeys the asymptotic formula
\begin{equation}
\label{eq:asymp-bessel-eigenvalues}
	\lambda_n = \frac{\pi^2}{b^2}(n + \kappa_\nu)^2 + O(n^{1/2}),
	\quad n\to\infty,
	\quad 
\kappa_\nu\defeq \begin{cases}
					\frac{2\nu+1}{4},& \gamma\ne 0,
					\\
					\frac{2\nu-1}{4},& \gamma  = 0.
				 \end{cases}
\end{equation}
This implies that, contrary to the case of regular Schr\"odinger operators, the 
addition of a point to the spectrum is still compatible with 
\eqref{eq:asymp-bessel-eigenvalues} but for a different value of $\nu$, namely,
for $\tilde{\nu} = \nu + 2$.

\subsection*{Acknowledgments}
We thank A. Poltoratski for helpful comments which led to
Remark~\ref{rem:poltoratsky} and G. Teschl for useful
remarks. J.~H.~T. thanks IIMAS-UNAM for their kind hospitality.

\def\cprime{$'$}

\end{document}